\def\RR{{\mathbb R}}
\def\CC{{\mathbb C}}
\def\NN{{\mathbb N}}
\def\ZZ{{\mathbb Z}}
\def\SS{{\mathbb S}}
\def\A{{\mathcal A}}
\def\H{{\mathcal H}}
\def\M{{\mathcal M}}
\def\a{\alpha}
\def\b{\beta}
\def\e{\varepsilon}
\def\f{\varphi}
\def\g{\gamma}
\def\G{\Gamma}
\def\k{\kappa}
\def\l{\lambda}
\def\t{\tau}
\def\th{\theta}
\def\x{\xi}
\def\z{\zeta}
\def\1{{\mathbbm 1}}
\def\u1net{{\A^{(0)}}}
\def\diff{{\rm Diff}}
\def\diffs1{\diff(S^1)}
\def\mob{{\rm M\ddot{o}b}}
\def\mob2{{\rm M\ddot{o}b}^{(2)}}
\def\psl2r{{\rm PSL}(2,\RR)}
\def\sl2r{{\rm SL}(2,\RR)}
\def\su11{{\rm SU}(1,1)}
\def\2dmob{{\overline{\psl2r}\times\overline{\psl2r}}}
\def\<{\langle}
\def\>{\rangle}
\newcommand{\tout}{\mathrm{out}}
\newcommand{\tin}{\mathrm{in}}
\newcommand{\shift}{\Delta^{\frac12}}
\newcommand{\strip}{\RR + i(-\pi,0)}
\newcommand{\hardy}{H^2(\SS_{-\pi,0})}
\newcommand{\blaschke}{\mathrm{Bl}}
\newcommand{\mfbar}{M_{\overline f}}
\newcommand{\dom}{\mathrm{Dom}} 
\newcommand{\im}{\mathrm{Im}\,} 
\newcommand{\re}{\mathrm{Re}\,}
\newtheorem{theorem}{Theorem}[section]
\newtheorem{corollary}[theorem]{Corollary}
\newtheorem{proposition}[theorem]{Proposition}
\newtheorem{lemma}[theorem]{Lemma}
\theoremstyle{remark}
\newtheorem{example}[theorem]{Example}
\title{Self-adjointness of bound state operators in integrable quantum field theory}
\date{}
\author{
{\bf Yoh Tanimoto}\\
e-mail: {\tt hoyt@ms.u-tokyo.ac.jp}\\
Graduate School of Mathematical Sciences, The University of Tokyo\\
and Institut f\"ur Theoretische Physik, G\"ottingen University\\
3-8-1 Komaba Meguro-ku Tokyo 153-8914, Japan.\\
JSPS SPD postdoctoral fellow\\
}
\begin{document}
\maketitle
\begin{abstract}
 We study self-adjoint extensions of operators which are the product of
 the multiplication operator by an analytic function and the analytic continuation in a strip.
 We compute the deficiency indices of the product operator for a wide class of analytic
 functions.
 For functions of a particular form, we point out the existence of
 a self-adjoint extension which is unitarily equivalent to the analytic-continuation
 operation.

 They appear in integrable quantum field theories as the one-particle component of the operators which
 realize the bound states of elementary particles and the existence of self-adjoint extension
 is a necessary step for the construction of Haag-Kastler net for such models.
\end{abstract}

\section{Introduction}\label{introduction}

Products of unbounded operators are subtle.
If $A$ is a densely defined closed unbounded operator and $x$ is a bounded operator on a Hilbert space,
then it is easy to see that $Ax$ on the obvious domain $\{\xi: x\xi \in \dom(A)\}$
is closed but it may fail to be densely defined. On the other hand,
$xA$ is densely defined, but it is not necessarily closed (in general,
$xA$ is closable if and only if $A^*x^* = (xA)^*$ is densely defined \cite[Theorem 13.2]{Rudin91}.
This is not the case if one considers $A = A^*$ unbounded and take $x$ to be
the projection onto a subspace whose vectors are not in $\dom(A)$).

In \cite{CT15-1}, we encountered operators of the form $\mfbar\shift$ (up to a rescaling), where
$\mfbar$ is the multiplication operator by an analytic function $\overline f$
and $\shift$ is the analytic continuation:
\[
 (\mfbar\shift\xi)(\theta) = \overline{f(\theta)}\xi(\theta - \pi i).
\]
We give the precise definitions in Section \ref{preliminaries}.
Under the condition that $\overline{f(\theta)} = f(\theta - \pi i)$,
this operator is symmetric. Then the natural question arises whether
this operator is (essentially) self-adjoint or not, and if not,
what its self-adjoint extensions are.
It turns out that this question highly depends on $f$: its zeros and the decay rate at $\theta \to \pm\infty$.

This is not only a purely mathematical problem. The operator $\mfbar\shift$
appears as a building block of a quantum observable in certain two-dimensional quantum field
theories \cite{CT15-1}. In a relativistic quantum field theory, it is required that
observables localized in spacelike-separated regions should strongly commute \cite{Haag96}.
Therefore, it is an important problem to classify the self-adjoint extensions
of $\mfbar\shift$ and choose a right one.

Although the operator $\mfbar\shift$ looks simple and its extension theory has interesting
features as we will see, it has apparently been treated neither in textbooks, e.g.\!
\cite{vonNeumann55, RS55, Goldberg66, Helmberg69, Friedrichs73, Kato76, RSII, Yosida80, Weidmann80, DS88,
Rudin91, AG93, BSU96, Teschl09, Schmuedgen12}
nor in a recent review of self-adjointness in quantum physics \cite{IP15}.
In this paper, we go back to the basics, namely we compute deficiency indices of the operator.
We will see that the deficiency indices depend highly on the choice of $f$,
furthermore, there appears to be no canonical choice of a self-adjoint extension for a generic $f$.
Therefore, we restrict ourselves to a certain subclass of functions.
For such functions, we can find a self-adjoint extension of $\mfbar\shift$ which is
unitarily equivalent to $\shift$. This particular extension will be useful
in the original context of integrable quantum field theory and bound states \cite{Tanimoto15-2}.

For a generic $f$, we take its Beurling factorization and we reduce the computation of
deficiency indices to each factors. We find especially, when $f$ has zeros in the strip, that $\mfbar\shift$
may have different deficiency indices, and may have no self-adjoint extension.
By considering our applications \cite{CT15-1}, this forces us to pick the subclass of functions which are
a square of another function $h$: $f = h^2$.
If $f$ such a square, there is a canonical choice of a self-adjoint extension
which is unitarily equivalent to $\shift$.

This paper is organized as follows. In Section \ref{preliminaries}, we state precisely the assumptions
and the problem. We compute in Sections \ref{di}, \ref{summary} the deficiency indices of $\mfbar\shift$ and
obtain an explicit form for the vectors in the deficiency subspaces for certain functions $f$. We also get an expression
of the polar decomposition.
In Section \ref{squares}, we consider functions $f$ which is a square and construct
a canonical self-adjoint extension of $\mfbar\shift$.

\section{Preliminaries}\label{preliminaries}
\subsection{Hardy spaces}\label{hardy}
We denote by $H^2(\SS_{a,b})$ the space of analytic functions $\xi$ in the strip
$\SS_{a,b} := \RR + i(a,b)$, $a<b$, such that
$\xi(\theta + i\l)$ is in $L^2(\RR)$ (with the Lebesgue measure on $\RR$, which we will omit in the rest)
for a fixed $\l$ and the norms $\|\xi(\,\cdot +i\l)\|$ are uniformly bounded
for $\l \in (a,b)$. They are called the Hardy space based on the strip $\SS_{a,b}$.
Let us fix one such Hardy space $\hardy$.
For each element $\xi \in \hardy$, the limits $\lim_{\epsilon\to 0}\xi(\,\cdot - i\epsilon)$ and
$\lim_{\epsilon\to 0}\xi(\,\cdot - (\pi - \epsilon)i)$
exist in the sense of $L^2(\RR)$ \cite[Corollary III.2.10]{SW71}. Let us denote these boundary values by
$\xi(\theta), \xi(\theta - \pi i)$ for simplicity. Then,
for the Fourier transform $\hat \xi(t)$ of $\xi(\theta)$, $e^{\pi t}\hat\xi(t)$ is $L^2$
and it holds that
\[
 \xi(\theta + i\l) = \frac1{\sqrt{2\pi}}\int dt\,e^{i(\theta+i\l)t}\hat\xi(t),
\]
where the integral with respect to $t$ is actually $L^1$ and has meaning pointwise for $\theta + i\l$,
$-\pi < \l < 0$ \cite[Theorem III.2.3]{SW71}.
For the convenience of the reader, we give an elementary proof of these facts in Appendix \ref{fourier}
(c.f.\! \cite[Theorem IX.13]{RSII}).
Therefore, the Hardy space $\hardy$ can be considered as a (dense) subspace of
the Hilbert space $L^2(\RR)$.

Next, let us consider the operator of analytic continuation:
\begin{align*}
 \dom(\shift) &:= \hardy, \\
 (\shift\xi)(\theta) &:= \xi(\theta - \pi i).
\end{align*}
This can be identified with the Fourier transform of the multiplication operator
$M_{e_{\pi}}$ by $e_{\pi}(t) = e^{\pi t}$, and it is self-adjoint on this space (see Appendix \ref{fourier}).

\subsection{The bound state operator}
Let $f \in H^\infty(\SS_{-\pi,0})$, the space of bounded analytic function on $\SS_{-\pi,0} = \RR + i(-\pi,0)$.
We do not assume the continuity on the closure $\RR + i[-\pi, 0]$. However,
one can identify the strip $\RR + i(-\pi, 0)$ with the unit disk in $\CC$ by
a conformal transformation (see, e.g.\! \cite[Appendix A]{LW11}) and it follows
from the boundedness of $f$ that $f(\zeta)$ has radial boundary values
when $\im \zeta \to 0,-\pi$, namely, $f(\theta + i\l)$ converges when $\l \to 0,-\pi$
for almost every $\theta$ \cite[Theorem 11.32]{Rudin87}.
We denote these boundary values by $f(\theta)$ and $f(\theta - \pi i)$, respectively, which are $L^\infty(\RR)$.
We further assume the property that $\overline{f(\theta)} = f(\theta - \pi i)$ almost everywhere.
Let $\mfbar$ be the multiplication operator by $\overline{f(\theta)} = f(\theta - \pi i)$.

Our main object in this paper is the operator $\mfbar\shift$.
This product operator is closable by the following Lemma (see \cite[Theorem 13.2]{Rudin91}).
\begin{lemma}\label{lm:star}
 Let $x$ be a bounded operator, $A$ be a closed operator such that
 $A^*x^*$ is densely defined. Then $xA$ is closable and it holds that $(xA)^* = A^*x^*$.
\end{lemma}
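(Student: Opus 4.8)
The plan is to verify the operator identity $(xA)^* = A^*x^*$ directly from the definition of the adjoint, and then to invoke the standard fact that a densely defined operator is closable precisely when its adjoint is densely defined. First I would record the bookkeeping: since $A^*$ appears in the statement, $A$ is tacitly densely defined, so $A^*$ exists; moreover $\dom(xA) = \{\xi : x\xi \in \dom(A)\} = \dom(A)$ because $x \in B(\H)$ is everywhere defined, so $xA$ is densely defined and $(xA)^*$ makes sense. Recalling that a densely defined $T$ is closable if and only if $\dom(T^*)$ is dense, it then suffices to prove $(xA)^* = A^*x^*$; closability of $xA$ follows at once from the hypothesis that $A^*x^*$ is densely defined.

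First I would check the inclusion $A^*x^* \subseteq (xA)^*$. Let $\eta \in \dom(A^*x^*) = \{\eta : x^*\eta \in \dom(A^*)\}$. For every $\xi \in \dom(A) = \dom(xA)$ one has
\[
 \langle xA\xi, \eta\rangle = \langle A\xi, x^*\eta\rangle = \langle \xi, A^*x^*\eta\rangle ,
\]
using boundedness of $x$ and the definition of $A^*$. By the definition of the adjoint this gives $\eta \in \dom((xA)^*)$ with $(xA)^*\eta = A^*x^*\eta$.

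Then I would establish the reverse inclusion $(xA)^* \subseteq A^*x^*$. Let $\eta \in \dom((xA)^*)$; then for all $\xi \in \dom(A)$,
\[
 \langle A\xi, x^*\eta\rangle = \langle xA\xi, \eta\rangle = \langle \xi, (xA)^*\eta\rangle .
\]
Since this identity holds for all $\xi$ in the domain of $A$, the defining property of $A^*$ yields $x^*\eta \in \dom(A^*)$ and $A^*x^*\eta = (xA)^*\eta$, that is, $\eta \in \dom(A^*x^*)$ with the same action. The two inclusions give $(xA)^* = A^*x^*$.

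The argument is entirely elementary and I do not expect any real obstacle; the only point deserving a moment of care is that the defining relation of an adjoint is a biconditional (via the Riesz lemma), so it must be used in both directions — in particular, membership of $\eta$ in $\dom((xA)^*)$ is tested against exactly the same vectors $\xi \in \dom(A)$ that detect membership of $x^*\eta$ in $\dom(A^*)$ once $x^*$ has been moved to the other side of the inner product. It is worth noting that the equality $(xA)^* = A^*x^*$ holds as operators with no density hypothesis at all; the assumption on $A^*x^*$ is used solely to conclude closability via the criterion quoted above.
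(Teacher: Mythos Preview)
Your argument is correct and complete; the paper itself does not supply a proof of this lemma but simply cites \cite[Theorem 13.2]{Rudin91}, so your direct verification from the definition of the adjoint is precisely the standard argument a reader would reconstruct from that reference.

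One small slip worth fixing: you write $\dom(xA) = \{\xi : x\xi \in \dom(A)\}$, but that set is $\dom(Ax)$, not $\dom(xA)$. The correct description is $\dom(xA) = \{\xi \in \dom(A) : A\xi \in \dom(x)\} = \dom(A)$, since $x$ is everywhere defined. This does not affect the rest of the proof, which only uses the (correct) conclusion $\dom(xA) = \dom(A)$.
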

It holds that $M_f^* = \mfbar$.
It is easy to see that $\shift \mfbar^* = \shift M_f$ is densely defined.
Indeed, $M_f$ preserves the domain $H^2(\SS_{-\pi,0})$ of $\shift$.
With the symmetry condition $\overline{f(\theta)} = f(\theta - \pi i)$, we can say more \cite[Proposition 3.1]{CT15-1}.
\begin{proposition}\label{pr:symmetric}
The operator $\mfbar\shift$ is symmetric.
\end{proposition}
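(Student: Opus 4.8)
The plan is to show $\mfbar\shift \subseteq (\mfbar\shift)^*$, i.e.\ that for all $\xi,\eta \in \dom(\mfbar\shift) = \{\zeta \in L^2(\RR) : \shift\zeta \in \dom(\mfbar)\}$ we have $\langle \mfbar\shift\,\xi, \eta\rangle = \langle \xi, \mfbar\shift\,\eta\rangle$. Since $\mfbar$ is bounded, $\dom(\mfbar\shift) = \dom(\shift) = \hardy$, so both $\xi,\eta$ lie in the Hardy space and we may work with their analytic continuations. The symmetry condition $\overline{f(\theta)} = f(\theta - \pi i)$ is exactly what is needed: it lets us ``move'' the multiplication operator across the shift at the cost of shifting its argument by $-\pi i$, turning $\mfbar$ into $M_f$ evaluated on the upper boundary.

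Concretely, I would first write, for $\xi,\eta \in \hardy$,
\[
 \langle \mfbar\shift\,\xi,\eta\rangle = \int d\theta\; \overline{f(\theta)}\,\xi(\theta - \pi i)\,\overline{\eta(\theta)}.
\]
The idea is to recognize $\overline{f(\theta)}\,\xi(\theta - \pi i)\,\overline{\eta(\theta)}$ as the boundary value of a function analytic in the strip $\SS_{-\pi,0}$, namely $\theta + i\l \mapsto f(\theta + i\l - \pi i)\,\xi(\theta + i\l - \pi i)\,\overline{\eta(\overline{\theta + i\l})}$ — here $\overline{\eta(\overline\zeta)}$ is analytic because $\eta$ is — and then shift the contour of integration from $\im = 0$ down to $\im = -\pi$. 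On the lower boundary $\im = -\pi$ one uses $f(\theta - \pi i - \pi i) $? No: more carefully, evaluate the analytic function $F(\zeta) := f(\zeta - \pi i)\,\xi(\zeta - \pi i)\,\overline{\eta(\overline\zeta)}$ at $\zeta = \theta$ and at $\zeta = \theta - \pi i$; contour shift gives $\int F(\theta)\,d\theta = \int F(\theta - \pi i)\,d\theta$, and the right-hand side is $\int f(\theta - \pi i)\,\xi(\theta - 2\pi i)\,\overline{\eta(\theta - \pi i)}\,d\theta$. Hmm, $\xi(\theta - 2\pi i)$ leaves the strip. Instead I would choose the analytic function more symmetrically so that after the shift one boundary reads $\overline{f(\theta)}\xi(\theta-\pi i)\overline{\eta(\theta)}$ and the other reads $\overline{f(\theta - \pi i)}\,\overline{\xi(\theta)}$-type terms; then invoking $\overline{f(\theta - \pi i)} = f(\theta)$ and recombining yields precisely $\int f(\theta)\,\overline{\xi(\theta - \pi i)}\,\eta(\theta)\,d\theta = \overline{\langle \mfbar\shift\,\eta,\xi\rangle} = \langle \xi, \mfbar\shift\,\eta\rangle$.

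The cleanest route, and the one I would actually carry out, is to avoid contour shifts of triple products and instead use the unitary $U := \shift$ on $\hardy$ (self-adjoint, as noted in Section~\ref{preliminaries}) together with the already-cited fact $M_f^* = \mfbar$ and Lemma~\ref{lm:star}. Write $\shift = U$, a self-adjoint operator on $\hardy \subseteq L^2(\RR)$ in the sense of Appendix~\ref{fourier} (it is $\mathcal F^{-1} M_{e_\pi} \mathcal F$). The key algebraic identity to establish is $U M_f U^{-1} = \mfbar$ as operators (on a suitable dense domain): multiplying $\xi(\theta - \pi i)$ by $f$ and continuing back amounts to multiplying $\xi(\theta)$ by $f(\theta - \pi i) = \overline{f(\theta)}$. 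Granting this, $\mfbar\shift = U M_f U^{-1} U = U M_f$, and then
\[
 (\mfbar\shift)^* = (U M_f)^* \supseteq M_f^* U^* = \mfbar\,U = \mfbar\shift,
\]
using that $U = U^*$ and that $M_f^* U^*$ is a restriction of $(U M_f)^*$ whenever it makes sense; combined with the computation that $\dom(\mfbar\shift) = \hardy$ is contained in $\dom((\mfbar\shift)^*)$, this gives symmetry. The main obstacle is the domain bookkeeping in the identity $U M_f U^{-1} = \mfbar$: one must check that $M_f$ maps $\hardy$ into $\hardy$ (true since $f \in H^\infty(\SS_{-\pi,0})$, as already observed in the remark that $M_f$ preserves $H^2(\SS_{-\pi,0})$), that the boundary-value identification of Section~\ref{hardy} is compatible with multiplication by $f$ on both boundaries, and that the $L^\infty$ boundary value $f(\theta - \pi i)$ genuinely implements $U M_f U^{-1}$ on all of $\hardy$ and not merely on a core. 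Once that bookkeeping is in place, the symmetry $\mfbar\shift \subseteq (\mfbar\shift)^*$ follows formally from $\mfbar\shift = U M_f = (M_f^* U)^* \big|_{\hardy} = (\mfbar U)$ and self-adjointness of $U$.
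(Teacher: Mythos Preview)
Your first, abandoned approach is essentially the paper's proof: the paper takes $\xi,\eta$ on the nice core of functions whose Fourier transforms are smooth and compactly supported, writes $\<\eta,\mfbar\shift\xi\> = \int \overline{\eta(\theta)}f(\theta-\pi i)\xi(\theta-\pi i)\,d\theta$, shifts the contour up by $\pi i$ using Cauchy's theorem (the rapid decay of $\xi,\eta$ on the core kills the vertical segments), and then reads off $\<\mfbar\shift\eta,\xi\>$ after invoking $\overline{f(\theta)}=f(\theta-\pi i)$. The general case follows by density/continuity. So the contour shift works perfectly well; your worry about ``$\xi(\theta-2\pi i)$ leaving the strip'' came from choosing the wrong analytic function --- the paper uses $\bar\eta(\zeta):=\overline{\eta(\overline\zeta)}$, which lives in the \emph{upper} strip $\SS_{0,\pi}$, and shifts the integration line for $f\xi\bar\eta$ from $\RR-\pi i$ to $\RR$.

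Your operator-theoretic route is a genuine alternative and can be made to work, but as written it has two real errors. First, $\shift$ is \emph{not} unitary --- it is unbounded positive self-adjoint --- so writing $U M_f U^{-1}$ with $U=\shift$ requires its own domain bookkeeping and does not yield a bounded operator equal to $\mfbar$. Second, and more seriously, you write the key identity as an equality $\mfbar\shift = \shift M_f$; this is false in general (indeed the whole paper is about the gap between $\mfbar\shift$ and its adjoint $\shift M_f$). What is true, and all you need, is the \emph{inclusion} $\mfbar\shift \subset \shift M_f$: for $\xi\in\hardy$ one has $f\xi\in\hardy$ (since $f\in H^\infty(\SS_{-\pi,0})$), and the lower boundary value of $f\xi$ is $f(\theta-\pi i)\xi(\theta-\pi i)=\overline{f(\theta)}\xi(\theta-\pi i)$. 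Combining this inclusion with Lemma~\ref{lm:star}, which gives $(\mfbar\shift)^* = \shift M_f$, yields $\mfbar\shift \subset (\mfbar\shift)^*$ directly. This is cleaner than the paper's contour-shift-on-a-core argument and avoids any density step; but you must state it as an inclusion, not an equality, and drop the ``unitary'' language.
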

Let us briefly recall the proof. Take $\xi,\eta \in \dom(\shift)$ which have compact spectral supports and are smooth.
The function $\bar\eta(\theta) := \overline{\eta(\bar\theta)}$ belongs to $H^\infty(\SS_{0,\pi})$ and
\begin{align*}
 \<\eta, \mfbar\shift\xi\> &= \int \overline{\eta(\theta)}f(\theta - \pi i)\xi(\theta - \pi i)d\theta \\
 &= \int \bar\eta(\theta + \pi i) f(\theta) \xi(\theta)d\theta \\
 &= \int \overline{\eta(\theta - \pi i) \overline{f(\theta)}} \xi(\theta)d\theta \\
 &= \<\mfbar\shift\eta, \xi\>,
\end{align*}
where in the second line we used Cauchy's theorem and rapid decay of $\xi$ and $\eta$.
By continuity, this equation holds for any pair $\xi,\eta \in \dom(\shift)$, therefore,
we obtain the symmetry.
In particular, one obtains again that the product operator $\mfbar\shift$ is closable.

This notation of $\Delta$ is (almost) compatible with its use in integrable two-dimensional quantum field theory,
which we will investigate in \cite{Tanimoto15-2}.
Namely, $\Delta$ is the one-particle component of the modular operator for the von Neumann algebra
\cite{TakesakiII} corresponding to the right standard wedge with respect to the vacuum vector
(which is sometimes denoted by $\Delta_1$, but we omit the subscript for simplicity).
The unitary operators $\Delta^{it}$ coincide with the one-particle action of the Lorentz boosts
in many cases (see \cite{BL04} for the integrable models without bound states) and gives
the shift $\xi(\theta+2\pi t)$, therefore,
$\shift$ is the analytic continuation $\theta \mapsto \theta - \pi i$.

The product operator $\mfbar\shift$ appears in the study of quantum
field theories with bound states\cite{CT15-1}. One of the principal problems in quantum field theory is to
construct local observables (a class of self-adjoint operators). For a family of integrable
quantum field theories in two spacetime dimensions,
we constructed a candidate of observables localized in a wedge-shaped region.
This candidate operator contains $\mfbar\shift$ as a building block.
More precisely, it is the one-particle component of the operator which makes
a bound state.
Yet, its correct self-adjoint domain and locality in a strong sense remained open.
As these properties are crucial in constructing Haag-Kastler nets (operator-algebraic
realization of quantum field theories), we investigate the self-adjointness
of $\mfbar\shift$ in this paper.

\subsubsection*{Self-adjointness criterion}
Now the question is whether $\mfbar\shift$ has a self-adjoint extension.
For this purpose, let us recall the fundamental criterion for self-adjointness \cite[Section X.1]{RSII}.

For a symmetric operator $A$, its adjoint $A^*$ may have nonzero eigenvectors for eigenvalues $\pm i$.
We denote the dimensions of the corresponding eigenspace by $n_\pm(A)$.
The pair $(n_+(A), n_-(A))$ is called the {\bf deficiency indices} of $A$.
Equivalently, they are dimensions of the spaces $\ker (A^* \mp i)$.
$A$ can be extended to a self-adjoint extension if and only if $n_+(A) = n_-(A)$
and there is a one-to-one correspondence between such self-adjoint extensions and
isometric operators from $\ker (A^* - i)$ to $\ker (A^* + i)$.
If the deficiency indices of $A$ is $(0,0)$, then $A$ is essentially self-adjoint
and the closure $\overline A$ of $A$ is the unique self-adjoint extension of $A$.

We have $(\mfbar\shift)^* = \shift M_f$.
Therefore, our task is reduced to studying the eigenspaces of $\shift M_f$.

\section{Computing deficiency indices}\label{di}
\subsection{Common ideas}\label{ideas}
Any element $f \in H^\infty(\SS_{-\pi,0})$ admits the following factorization
\cite[Beurling factorization, Theorems 17.15, 17.17]{Rudin87}
(see also the identification between $H^\infty(\SS_{-\pi,0})$ and the Hardy space on the unit circle
\cite[Appendix A]{LW11}):
\begin{align*}
 f(\zeta) &= cf_\blaschke(\zeta) f_\tin(\zeta) f_\tout(\zeta), \\
 f_\blaschke(\zeta) &= \prod_n c_n\frac{e^\zeta - e^{\a_j}}{e^\zeta - e^{\overline{\a_j}}}, \\
 f_\tin(\zeta) &= \exp\left(i\int \frac{d\mu(s)}{1+s^2}\, \frac{1+e^\zeta s}{e^\zeta - s}\right), \\
 f_\tout(\zeta) &= \exp\left(\frac i\pi\int \frac{ds}{1+s^2} \frac{1+e^\zeta s}{e^\zeta - s}\log\phi(s)\right),
\end{align*}
where $\a_j \in \strip$ and satisfies the Blaschke condition which assures the convergence
of the infinite product (see \cite[Theorem 15.21]{Rudin87} for the condition written in
the unit circle picture), $c_n = -\frac{|\b_n|}{\b_n}\frac{\b_n-1}{\overline{\b_n}-1}$,
where $\beta_n = \frac{e^{\a_n}+i}{e^{\a_n}-i}$ (if $\a_j = -\frac{\pi i}2$, we set $c_j = 1$
as a convention).
$c$ is a constant with $|c| = 1$.
$\frac{\mu(s)}{1+s^2}$ is a finite singular measure (with respect to the Lebesgue measure $ds$)
on $\RR \cup \{\infty\}$ and may have an atom at $\infty$ ($-\infty$ is identified with $\infty$).
$\phi(s)$ is a positive function on $\RR$ such that $\frac{\log\phi(s)}{1+s^2}$ is $L^1(\RR)$, again with respect to
the Lebesgue measure $ds$.
$f_\blaschke$ and $f_\tin$ are {\bf inner}, namely
$|f_\blaschke(\theta)| = |f_\blaschke(\theta - \pi i)| = |f_\tin(\theta)| = |f_\tin(\theta - \pi i)| = 1$
for almost every $\theta \in \RR$ (they are defined as the boundary values and not as integrals
for $\zeta = \theta \in \RR, \theta - \pi i$, which might be meaningless).
On the other hand, $f_\tout$ is said to be {\bf outer}
and it is the exponential of the Poisson integral of the kernel $\frac{\log\phi(s)}{1+s^2}$.

This decomposition is unique. We call $f_\blaschke$ the Blaschke product of $f$,
$f_\tin$ the singular inner part of $f$, and $f_\tout$ the outer part of $f$.

In order to solve the eigenvalue equation $\shift M_f\xi = \pm i\xi$,
it is helpful to extend the domain of consideration to meromorphic functions
on the strip $\RR + i(-\pi, 0)$. Once we find a meromorphic function $g$
which satisfies $f(\theta - \pi i)g(\theta - \pi i) = \pm ig(\theta)$,
any solution $\xi$ in the domain of the operator $\shift M_f$ can be divided
by $g$ and one obtains a periodic function:
\[
 \frac{\xi(\theta - \pi i)}{g(\theta - \pi i)} = \frac{f(\theta - \pi i)\xi(\theta - \pi i)}{f(\theta - \pi i)g(\theta - \pi i)}
 = \frac{\pm i \xi(\theta)}{\pm i g(\theta)} = \frac{\xi(\theta)}{g(\theta)}.
\]
Some properties of this periodic function can be derived from those of
$\xi, f$ and $g$, and we might be able to classify such periodic functions.
We will call the equation $f(\theta - \pi i)g(\theta - \pi i) = \pm ig(\theta)$
for simplicity the \textbf{eigenvalue equation} with eigenvalue $\pm i$.
Furthermore, the question is equivalent to finding a solution for the eigenvalue $1$,
because by multiplying $e^{\a \theta}$ one can vary the eigenvalue by $e^{-i\a \pi}$.
We will see later how this idea applies to concrete cases.

Let us prepare a Lemma which is useful in such an argument.
The following is a slight variation of \cite[Section 2, Theorem]{Rudin71}.
\begin{lemma}\label{lm:local}
 Let $h$ be a measurable function defined on a rectangle $[\theta_1,\theta_2] + i[\l_1,\l_2] \subset \CC$
 where $\l_1 < 0 < \l_2$, horizontally $L^2$, namely $h(\,\cdot + i\l)$ is in $L^2([\theta_1, \theta_2])$
 for each $\lambda$
 and assume that its $L^2$-norm is uniformly bounded in $\l$, and that $\l \mapsto h(\,\cdot + i\l)$
 is continuous from $\RR$ to $L^2([\theta_1,\theta_2])$.
 Furthermore, assume that $h$ is analytic separately in $(\theta_1,\theta_2) + i(\l_1,0)$ and in
 $(\theta_1,\theta_2) + i(0,\l_2)$. Then $h$ has a representative (in the sense of pointwise function, as
 $L^2$-functions are not pointwise defined) which is analytic on the whole rectangle.
\end{lemma}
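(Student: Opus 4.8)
Proof proposal.

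The subtlety of the lemma is that the hypotheses only give $L^2$-continuity of $h$ across the line $\{\im\zeta = 0\}$ and not pointwise continuity, so one cannot apply directly the classical statement that a function continuous on an open set and analytic off a line segment is analytic everywhere. The plan is therefore to regularize $h$ in the horizontal direction to produce genuinely continuous approximants, glue those across the line by the classical statement, and then recover the claim for $h$ by passing to the limit, using that $L^2_{\mathrm{loc}}$-limits of analytic functions are analytic.

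In detail, fix a smooth $\rho_\epsilon\ge 0$ supported in $(-\epsilon,\epsilon)$ with $\int\rho_\epsilon = 1$, and for $\theta\in(\theta_1+\epsilon,\theta_2-\epsilon)$ and $\lambda\in(\lambda_1,\lambda_2)$ set $h_\epsilon(\theta+i\lambda):=\int_\RR h(\theta-s+i\lambda)\,\rho_\epsilon(s)\,ds$. First I would check that $h_\epsilon$ is continuous on the smaller rectangle: it is $C^\infty$ in $\theta$ by differentiation under the integral (the $\theta$-derivatives being bounded uniformly in $(\theta,\lambda)$ by $\|\rho_\epsilon^{(k)}\|_{L^2}\sup_\lambda\|h(\cdot+i\lambda)\|_{L^2}$), while $|h_\epsilon(\theta+i\lambda)-h_\epsilon(\theta+i\lambda')|\le\|\rho_\epsilon\|_{L^2}\,\|h(\cdot+i\lambda)-h(\cdot+i\lambda')\|_{L^2}$, which tends to $0$ as $\lambda'\to\lambda$ uniformly in $\theta$ by the assumed $L^2$-continuity — crucially including across $\lambda=0$ — and equicontinuity in $\theta$ together with uniform-in-$\theta$ continuity in $\lambda$ yields joint continuity. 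Next I would show $h_\epsilon$ is analytic on each of the open rectangles $(\theta_1+\epsilon,\theta_2-\epsilon)+i(\lambda_1,0)$ and $(\theta_1+\epsilon,\theta_2-\epsilon)+i(0,\lambda_2)$: for a triangle $T$ contained in one of these one has $\overline T-s$ contained in the corresponding open half-rectangle of $h$ for all $|s|<\epsilon$, so Fubini (the integrand being bounded on a compact set) together with Cauchy's theorem for the analytic function $h$ gives $\oint_{\partial T}h_\epsilon = \int\rho_\epsilon(s)\big(\oint_{\partial T}h(\cdot-s)\big)\,ds = 0$, whence $h_\epsilon$ is analytic there by Morera's theorem. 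Since $h_\epsilon$ is continuous on the smaller rectangle and analytic off the segment $\{\im\zeta=0\}$, the classical Morera argument for removing a line segment (a triangle straddling the segment being handled by approximating it with triangles pushed slightly to one side and invoking continuity) shows $h_\epsilon$ is analytic on the whole smaller rectangle.

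Finally I would let $\epsilon\to 0$. For each $\lambda$, $\rho_\epsilon*h(\cdot+i\lambda)\to h(\cdot+i\lambda)$ in $L^2$ on compact subintervals, and with the uniform-in-$\lambda$ $L^2$-bound and Fubini this upgrades to $h_\epsilon\to h$ in $L^2$ on every compact subrectangle of $(\theta_1,\theta_2)+i(\lambda_1,\lambda_2)$ — in particular $h\in L^2_{\mathrm{loc}}$ there. Since evaluation at the center of a disk is a bounded functional on the Bergman space of $L^2$ analytic functions on that disk, $L^2_{\mathrm{loc}}$-convergence of the analytic functions $h_\epsilon$ forces locally uniform convergence to an analytic function $\tilde h$ on $(\theta_1,\theta_2)+i(\lambda_1,\lambda_2)$, and $\tilde h=h$ almost everywhere by uniqueness of $L^2$-limits. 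This $\tilde h$ is the required representative. I expect the main obstacle — and the reason the lemma is worth stating at all — to be precisely the step that turns the mere $L^2$-continuity of $h$ in $\lambda$ into honest continuity of the mollified functions $h_\epsilon$ across the line; once that is available the gluing across the segment is classical, and the only remaining care is in the bookkeeping of the shrinking domains and in the standard Bergman-space fact used to pass back to $h$ in the limit.
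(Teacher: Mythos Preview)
Your proof is correct and takes a genuinely different route from the paper. The paper constructs the analytic representative directly as a Cauchy integral $\tilde h(\zeta)=\int_\Gamma\frac{h(z)}{z-\zeta}\,dz$ over a rectangular contour through two verticals $\theta_3,\theta_4$ on which $h$ happens to be vertically $L^2$ (existence by Fubini), and then checks that $\tilde h$ agrees with $h$ on each half-rectangle by pushing the horizontal sides of the contour towards $\im z=0$ and using the $L^2$-continuity to control the limit; the contribution from the opposite half vanishes by Cauchy's theorem. Your approach instead mollifies horizontally to manufacture continuous approximants $h_\epsilon$, applies the classical Painlev\'e--Morera removal of a segment to each $h_\epsilon$, and then passes to the limit via the Bergman-space bound. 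The paper's argument is more self-contained and yields an explicit integral formula for the representative, at the cost of the somewhat delicate choice of $\theta_3,\theta_4$; your argument is more modular, cleanly isolating the one nontrivial step (that $L^2$-continuity in $\lambda$ becomes genuine continuity after horizontal smoothing) and otherwise invoking standard facts, which makes it easier to transport to related settings.
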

\begin{proof}
 As the $L^2$-norm of $h(\,\cdot + i\l)$ is uniformly bounded in $\l$, by Fubini's theorem
 \cite[Theorem 8.8]{Rudin87}, $h$ is $L^2$ as a two-variable function, and again by Fubini's theorem,
 $h(\theta + i\,\cdot)$ is in $L^2([\l_1,\l_2])$ for almost all $\theta \in [\theta_1,\theta_2]$.
 
 We only have to prove the analyticity at $\l = 0$. For a given $\theta$, $\theta_1 < \theta < \theta_2$,
 we take $\theta_3 < \theta < \theta_4$ such that
 $h(\theta_3 + i\,\cdot)$ and $h(\theta_4 + i\,\cdot)$ are in $L^2([\l_1,\l_2])$. Consider the rectangle with the corners
 $\theta_3 + i\l_1,\theta_4 + i\l_1,\theta_4 + i\l_2,\theta_3 + i\l_2$ and take a counterclockwise path $\G$.
 Then
 \[
  \widetilde h(\zeta) := \int_\G dz\, \frac{h(z)}{z-\zeta}
 \]
 defines an analytic function inside the rectangle,
 as the integrand is ($L^2$ on the bounded segments, therefore) $L^1$.
 
 Let us see that $\widetilde h$ coincides with $h$ on the upper- and lower-half rectangles.
 Indeed, let $0 < \epsilon < \im \zeta$ and take a rectangular path $\G_\epsilon$ with the corners
 $\theta_3 + i\epsilon, \theta_4 + i\epsilon, \theta_4 + i\l_2, \theta_3 + i\l_2$ which contains $\zeta$ in its
 inside (see Figure \ref{fig:periodicity-contour}). By Cauchy's formula, we have
 \[
  h(\zeta) = \int_{\G_\epsilon} dz\, \frac{h(z)}{z-\zeta}.
 \]
 As $\epsilon \to 0$, this integral tends to
 \[
  \int_{\G_+} dz\, \frac{h(z)}{z-\zeta},
 \]
 where $\G_+$ is the rectangle with corners $\theta_3, \theta_4, \theta_4 + i\l_2, \theta_3 + i\l_2$,
 by the assumption of continuity of $h$ (in the $L^2$, therefore) in the $L^1$ sense
 and the $L^1$-integrability of $h$ on the sides of the rectangle.
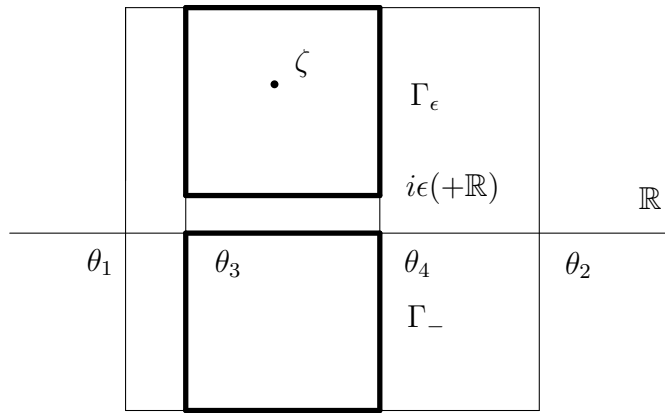
\begin{figure}[ht]
    \centering
\begin{tikzpicture}[line cap=round,line join=round,>=triangle 45,x=1.0cm,y=1.0cm]
\clip(-4.3,0.5) rectangle (4.4,6.3);
\draw [domain=-4.3:7.4] plot(\x,{(--21.77-0*\x)/7.56});
\draw (-2.76,5.88)-- (-2.76,0.52);
\draw (-2.76,0.52)-- (2.74,0.52);
\draw (2.74,0.52)-- (2.74,5.88);
\draw (-2.76,5.88)-- (2.74,5.88);
\draw (-2.76,5.88)-- (-2.76,3.58);
\draw (3.92,3.64) node[anchor=north west] {$\mathbb{R}$};
\draw (-0.66,5.48) node[anchor=north west] {$\zeta$};
\draw (-1.72,2.8) node[anchor=north west] {$\theta_3$};
\draw (0.8,2.8) node[anchor=north west] {$\theta_4$};
\draw (-3.42,2.82) node[anchor=north west] {$ \theta_1 $};
\draw (2.94,2.78) node[anchor=north west] {$\theta_2$};
\draw (-1.96,5.88)-- (-1.96,0.52);
\draw (-1.96,0.52)-- (0.62,0.52);
\draw (0.62,0.52)-- (0.62,5.88);
\draw [line width=2pt] (0.62,5.88)-- (-1.96,5.88);
\draw [line width=2pt] (-1.96,5.88)-- (-1.96,3.38);
\draw [line width=2pt] (-1.96,3.38)-- (0.62,3.38);
\draw [line width=2pt] (0.62,3.38)-- (0.62,5.88);
\draw (0.88,5.02) node[anchor=north west] {$\Gamma_\epsilon$};
\draw [line width=2pt] (-1.96,2.88)-- (-1.96,0.52);
\draw [line width=2pt] (-1.96,0.52)-- (0.62,0.52);
\draw [line width=2pt] (0.62,0.52)-- (0.62,2.88);
\draw [line width=2pt] (0.62,2.88)-- (-1.96,2.88);
\draw (0.84,2.06) node[anchor=north west] {$\Gamma_-$};
\draw (0.82,3.84) node[anchor=north west] {$i\epsilon (+\mathbb{R})$};
\begin{scriptsize}
\fill [color=black] (-0.78,4.86) circle (1.5pt);
\end{scriptsize}
\end{tikzpicture}
    \caption{The integral contours for the proof of analyticity}
    \label{fig:periodicity-contour}
\end{figure}

 On the other hand, if we consider a rectangle in a lower half-plane, the integral
 gives $0$ by Cauchy's theorem. By a similar continuity argument, we obtain
 \[
  \int_{\G_-} dz\, \frac{h(z)}{z-\zeta} = 0,
 \]
 where $\G_-$ is the rectangle with corners $\theta_3, \theta_3 + i\l_1, \theta_4 + i\l_1, \theta_4$.
 Therefore, altogether we get
 \[
  h(\zeta) =  \int_{\G_+ \cup \G_-} dz\, \frac{h(z)}{z-\zeta} = \int_\G dz\, \frac{h(z)}{z-\zeta} = \widetilde h(\zeta).
 \]
 By a parallel argument, we have $h(\zeta) = \widetilde h(\zeta)$ for $\im \zeta < 0$. In other words,
 $h$ has an analytic extension to the whole rectangle.
 By the $L^2$-continuity, it must hold that $h=\widetilde h$.
\end{proof}

We use this Lemma in the following form.
\begin{proposition}\label{pr:period}
 Suppose that $h$ is analytic in $\RR + i(-\pi, 0)$ and
 it holds that $h(\theta) = h(\theta - \pi i)$ in the local $L^2$ sense,
 namely, for a finite interval $[\theta_1,\theta_2]$
 $h(\cdot - \epsilon i) \to h(\cdot)$ and 
 $h(\cdot - (\pi - \epsilon)i) \to h(\cdot - \pi i)$ in $L^2([\theta_1,\theta_2])$
 and the limits coincide.
 Then $h$ extends to an analytic function with period $\pi i$.
\end{proposition}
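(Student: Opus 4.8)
The plan is to follow the blueprint of Lemma \ref{lm:local}: build the $\pi i$-periodic extension by gluing together the $\ZZ$-translates of $h$ along the horizontal lines $\im\zeta\in\pi\ZZ$, and verify analyticity across each seam with that lemma. Concretely, for $\zeta\in\CC$ with $\im\zeta\notin\pi\ZZ$ let $k(\zeta)\in\ZZ$ be the unique integer with $\zeta-k(\zeta)\pi i\in\RR+i(-\pi,0)$ and set $H(\zeta):=h(\zeta-k(\zeta)\pi i)$. On the open set $\CC\setminus\bigcup_{k\in\ZZ}\{\im\zeta=k\pi\}$ the function $H$ is analytic, being locally a translate of $h$, and it is $\pi i$-periodic: since $k(\zeta+\pi i)=k(\zeta)+1$, one has $H(\zeta+\pi i)=h(\zeta-k(\zeta)\pi i)=H(\zeta)$ wherever defined. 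Moreover $H$ restricts to $h$ on $\RR+i(-\pi,0)$. It therefore remains only to show that $H$ continues analytically across the lines $\im\zeta=k\pi$; by $\pi i$-periodicity it suffices to treat the single line $\im\zeta=0$.

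Near this line, $H$ equals $h(\zeta)$ for $\zeta\in\RR+i(-\pi,0)$ and equals $h(\zeta-\pi i)$ for $\zeta\in\RR+i(0,\pi)$. Fix $-\pi<\lambda_1<0<\lambda_2<\pi$ and a finite interval $[\theta_1,\theta_2]$, and consider the rectangle $R:=[\theta_1,\theta_2]+i[\lambda_1,\lambda_2]$, on which $H$ is analytic separately in $(\theta_1,\theta_2)+i(\lambda_1,0)$ and in $(\theta_1,\theta_2)+i(0,\lambda_2)$. I would then verify the remaining hypotheses of Lemma \ref{lm:local}. For $\lambda$ in a compact subinterval of $(\lambda_1,0)$ or of $(0,\lambda_2)$, $h$ (respectively $h(\,\cdot-\pi i)$) is continuous on a compact set, hence bounded and uniformly continuous there, so $\lambda\mapsto H(\,\cdot+i\lambda)$ takes values in $L^2([\theta_1,\theta_2])$ and is $L^2$-continuous away from $\lambda=0$. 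At $\lambda=0$ the one-sided $L^2$-limits from below and from above are precisely $h(\,\cdot)$ and $h(\,\cdot-\pi i)$, which exist by the hypothesis of the Proposition and coincide; hence $\lambda\mapsto H(\,\cdot+i\lambda)$ is continuous on all of $[\lambda_1,\lambda_2]$ as a map into $L^2([\theta_1,\theta_2])$, and being a continuous map on a compact interval its $L^2$-norm is uniformly bounded. Thus Lemma \ref{lm:local} applies and furnishes a representative of $H$ analytic on $R$.

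Letting $\theta_1\to-\infty$, $\theta_2\to+\infty$ and $\lambda_1\to-\pi$, $\lambda_2\to\pi$ gives an analytic extension of $H$ to $\RR+i(-\pi,\pi)$; translating by $\pi i\ZZ$ and invoking the periodicity of the construction extends $H$ analytically across every line $\im\zeta=k\pi$, so $H$ is analytic on all of $\CC$. Since $H$ is $\pi i$-periodic and restricts to $h$ on $\RR+i(-\pi,0)$, this is exactly the claimed statement (and the identity theorem shows any two such extensions agree).

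The only genuinely delicate point is the verification of the $L^2$-regularity hypotheses of Lemma \ref{lm:local} in a neighbourhood of the seam $\im\zeta=0$: one must patch the $L^2$-continuity coming from uniform continuity on compact subsets of the open half-rectangles with the one-sided $L^2$-limits supplied by the hypothesis, and then deduce from this combined continuity on the closed $\lambda$-interval the uniform bound on the horizontal $L^2$-norms. Everything else is bookkeeping with translations and the identity theorem.
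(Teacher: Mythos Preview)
Your proof is correct and follows exactly the paper's approach: define the $\pi i$-periodic extension by translates and invoke Lemma~\ref{lm:local} to obtain analyticity across the seams $\im\zeta\in\pi\ZZ$. The paper's proof is a terse two sentences to this effect; you have simply spelled out the verification of the hypotheses of Lemma~\ref{lm:local} in detail (your phrase ``compact subinterval of $(\lambda_1,0)$'' should read ``compact subinterval of $(-\pi,0)$'' so that $\lambda_1$ itself is covered, but this is cosmetic).
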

\begin{proof}
 Define the periodic function by $h(\zeta + N\pi i) = h(\zeta), N \in \NN$.
 The only question is the analyticity at $\zeta \in \RR + N\pi i$, which is
 a direct consequence of Lemma \ref{lm:local}.
\end{proof}

\subsection{Finite Blaschke products}\label{finiteblaschke}
Let
\[
 f(\zeta) = \prod_{j=1}^n c_j\frac{e^\zeta - e^{\a_j}}{e^\zeta - e^{\overline{\a_j}}}
\]
be a finite Blaschke product, where $\a_j \in \strip$. Note that they have exactly $n$-zeros (including
multiplicity) in the strip $\RR + i(-\pi,0)$, since the function $e^\zeta$ is one-to-one in the strip.
In order that this satisfies $\overline{f(\theta)} = f(\theta - \pi i)$,
$\a_j$ and $\overline{\a_j} - \pi i$ must appear in pair including multiplicity,
or $\im \a_j = -\frac \pi 2$. Under this condition, $c_j$'s cancel each other or it is $1$ or $-1$.
$-1$ does not affect neither the domain property nor symmetry, therefore, we may omit $c_j$
and write $f(\zeta) = \prod_{j=1}^n \frac{e^\zeta - e^{\a_j}}{e^\zeta - e^{\overline{\a_j}}}$

We consider the operator $\mfbar\shift$ and compute the deficiency indices.
Let us start with a Lemma concerning periodic functions.
\begin{lemma}\label{lm:polynom}
 If $h$ is analytic in $\strip$, $\pi i$-periodic, namely $h(\theta - \pi i) = h(\theta)$,
 and satisfies $|h(\zeta)| \le A e^{N |\re\zeta|}$, where $A > 0$ and $N$ is a positive even integer,
 then there is a polynomial $p$ of degree
 less than or equal to $N$ such that $h(\zeta) = p(e^{2\zeta})e^{-N\zeta}$.
\end{lemma}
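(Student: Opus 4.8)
The plan is to use the periodicity to push $h$ down to a function on the punctured plane, and then extract the conclusion from Cauchy estimates on its Laurent coefficients.

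First I would promote $h$ to an entire, $\pi i$-periodic function. The growth bound $|h(\zeta)| \le A e^{N|\re\zeta|}$ shows that $h$ is bounded on each vertical slab $[\theta_1,\theta_2] + i(-\pi,0)$, so (via a conformal map of the strip onto the disk and Fatou's theorem) its boundary values on $\RR$ and on $\RR - \pi i$ exist in $L^2([\theta_1,\theta_2])$; the hypothesis $h(\theta - \pi i) = h(\theta)$ says these agree, so Proposition \ref{pr:period} applies and $h$ extends to a function analytic on all of $\CC$ with $h(\zeta + \pi i) = h(\zeta)$. Since $\re(\zeta + \pi i) = \re\zeta$, the estimate $|h(\zeta)| \le A e^{N|\re\zeta|}$ remains valid for every $\zeta \in \CC$.

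Next, the map $\zeta \mapsto e^{2\zeta}$ has period exactly $\pi i$ and sends $\CC$ onto $\CC \setminus \{0\}$, and it is locally biholomorphic; since $h$ is invariant under the $\pi i\ZZ$-ambiguity of this map, there is a unique analytic $H$ on $\CC \setminus \{0\}$ with $h(\zeta) = H(e^{2\zeta})$. Writing the Laurent expansion $H(w) = \sum_{k \in \ZZ} a_k w^k$ and putting $w = e^{2\zeta}$, so that $|w| = e^{2\re\zeta}$, the growth bound becomes $|H(w)| \le A|w|^{N/2}$ for $|w| \ge 1$ and $|H(w)| \le A|w|^{-N/2}$ for $|w| \le 1$. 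Applying Cauchy's estimate $|a_k| \le \max_{|w|=R}|H(w)|\cdot R^{-k}$ with $R \to \infty$ gives $a_k = 0$ for $k > N/2$, and with $R \to 0$ gives $a_k = 0$ for $k < -N/2$; because $N$ is even, $N/2$ is an integer, so in fact $a_k = 0$ for $k \ge N/2 + 1$ and $k \le -N/2 - 1$. Hence $H(w) = \sum_{k=-N/2}^{N/2} a_k w^k$, and substituting $w = e^{2\zeta}$ and factoring out $e^{-N\zeta}$,
\[
 h(\zeta) = \sum_{k=-N/2}^{N/2} a_k e^{2k\zeta} = e^{-N\zeta}\sum_{j=0}^{N} a_{j-N/2}\,(e^{2\zeta})^{j} = e^{-N\zeta} p(e^{2\zeta}), \qquad p(x) := \sum_{j=0}^{N} a_{j-N/2}\, x^{j},
\]
which is of the asserted form with $\deg p \le N$.

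The main obstacle is really the first step: one must be sure that the hypothesis ``$\pi i$-periodic'' is upgraded to a genuine analytic periodic extension on $\CC$, and this is precisely where the growth bound is used a first time (local boundedness, hence existence of the boundary values needed for Proposition \ref{pr:period}) before being used a second time to control $H$ near $0$ and $\infty$. I would also stress that the evenness of $N$ is not cosmetic: it is exactly what makes the surviving Laurent indices $-N/2,\dots,N/2$ integers symmetric about $0$, so that multiplying by $e^{N\zeta}$ produces honest nonnegative integer powers of $e^{2\zeta}$; for odd $N$ one would instead be forced to use powers of $e^{\zeta}$ and the statement would have to be reformulated.
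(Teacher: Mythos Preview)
Your proof is correct and follows essentially the same route as the paper: pass via the $\pi i$-periodicity to an analytic function on $\CC\setminus\{0\}$ through the change of variable $z=e^{2\zeta}$, and then use Cauchy estimates together with the growth bound to see that only finitely many Laurent (or Taylor) coefficients survive. The paper organizes this slightly differently---it first multiplies by $e^{N\zeta}$ so that the resulting function is bounded near $z=0$ (removable singularity) and has polynomial growth at infinity, then invokes the Cauchy estimate once---whereas you keep the two-sided bound $|H(w)|\le A|w|^{\pm N/2}$ and run the Laurent estimate at both $R\to\infty$ and $R\to 0$; these are equivalent bookkeeping choices. Your opening paragraph, using local boundedness and Proposition~\ref{pr:period} to promote $h$ to an entire periodic function, is a step the paper takes for granted in the lemma (in its applications the extension has already been carried out before the lemma is invoked), so if anything you are being more careful there.
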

\begin{proof}
 Consider $h_\vdash(\zeta) = h(\zeta)e^{N\zeta}$, which satisfies
 $|h_\vdash(\zeta)| \le A(e^{2N\re \zeta} + 1)$. By assumption, $N$ is even, therefore,
 $h_\vdash(\zeta)$ is again periodic by $\pi i$.
 
 We can define an analytic function on $\CC \setminus \{0\}$ by
 $h_\vdash\left(\frac{\log z}2\right)$, which is well-defined by the $\pi i$-periodicity of $h$.
 The bound of $h_\vdash$ can be translated into
 \[
  \left|h_\vdash\left(\frac{\log z}2\right)\right| \le A (|z|^N + 1),
 \]
 and especially, the singularity at $z = 0$ is removable. Then it is a well-known consequence
 of the Cauchy estimate \cite[Theorem 10.26]{Rudin87} that $h_\vdash\left(\frac{\log z}2\right)$
 is a polynomial $p(z)$ of degree less than or equal to $N$.
 
 Since $\zeta = \frac{\log z}2$ or $z = e^{2\zeta}$, we have $h(\zeta) = p(e^{2\zeta})e^{-N\zeta}$ as desired.
\end{proof}

\begin{proposition}\label{pr:bl-fin}
 If $f$ contains $2m$ factors, respectively $2m + 1$ factors, where $m$ is an integer,
 then the deficiency indices of $\mfbar\shift$ is $(m,m)$, respectively $(m+1,m)$.
\end{proposition}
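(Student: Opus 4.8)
The plan is to compute the deficiency indices by solving the eigenvalue equations $\shift M_f\xi = \pm i\xi$, i.e. $f(\theta-\pi i)\xi(\theta-\pi i) = \pm i\,\xi(\theta)$, using the reduction explained in Section \ref{ideas}. First I would construct an explicit meromorphic solution $g_\pm$ of the eigenvalue equation on the strip. Since $f$ is a finite product of factors $\frac{e^\zeta-e^{\a_j}}{e^\zeta-e^{\overline{\a_j}}}$, a natural candidate is a product of elementary meromorphic functions, one for each factor, built so that the ratio $g_\pm(\theta-\pi i)/g_\pm(\theta)$ reproduces $f(\theta-\pi i)$ up to the constant $\pm i$. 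Concretely, for a single factor with zero at $\a_j$ one looks for a function whose zero/pole structure on the strip, together with $\pi i$-periodicity of the quotient, forces the functional equation; since multiplying by $e^{\a\theta}$ changes the eigenvalue by $e^{-i\a\pi}$, it suffices (as noted in the text) to solve for eigenvalue $1$ and then twist. After dividing an arbitrary solution $\xi$ by $g_\pm$, Proposition \ref{pr:period} tells us that $h := \xi/g_\pm$ extends to a $\pi i$-periodic analytic function (one must check the local-$L^2$ matching of boundary values, which follows from $\xi \in H^2(\SS_{-\pi,0})$ and the fact that $g_\pm$ has only finitely many zeros/poles, all in the interior, away from the boundary lines).

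Next I would pin down the growth of $h$. The membership $\xi \in \hardy$ gives $L^2$ control of $\xi$ on horizontal lines, and since $g_\pm$ is a fixed meromorphic function with known behavior as $\re\zeta \to \pm\infty$ (each Blaschke-type factor tends to a constant of modulus one, while the twisting factor $e^{\a\theta}$ contributes an exponential), one gets a bound $|h(\zeta)| \le A\,e^{N|\re\zeta|}$ for $h$ on the closed strip, with $N$ determined by the number of factors $n$. Here one has to be slightly careful to absorb the poles of $1/g_\pm$: near the zeros of $g_\pm$, $\xi$ must itself vanish to the appropriate order for $h$ to stay analytic, and this vanishing is exactly what the eigenvalue equation enforces on a genuine solution (otherwise $\xi \notin \dom(\shift M_f)$). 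Once the polynomial-type bound is in place, Lemma \ref{lm:polynom} applies (after checking $N$ is even, or reducing to that case) and gives $h(\zeta) = p(e^{2\zeta})e^{-N\zeta}$ for a polynomial $p$ of degree $\le N$. Then the dimension of the solution space equals the number of coefficients of $p$ that survive the constraint that $\xi = g_\pm h$ actually lie in $L^2$ on the boundary lines, i.e. that it decay at both ends of the strip — this is a finite linear-algebra count.

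The bookkeeping then splits according to parity. With $n = 2m$ factors, the paired structure ($\a_j$ with $\overline{\a_j}-\pi i$, or $\im\a_j = -\pi/2$) and the symmetry condition make the $+i$ and $-i$ problems mirror images, and the decay constraints at $\re\zeta \to +\infty$ and $\re\zeta \to -\infty$ each kill "half" of the polynomial coefficients, leaving an $m$-dimensional space on each side, hence indices $(m,m)$. With $n = 2m+1$ factors there is one unpaired factor with $\im\a_j = -\pi/2$; this breaks the $+i \leftrightarrow -i$ symmetry by exactly one dimension, and the count gives $(m+1,m)$ (the sign of $f$, and hence which of $\ker(\shift M_f \mp i)$ picks up the extra solution, is fixed by the conventions on $c_j$). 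I expect the main obstacle to be the decay/integrability counting: one must show precisely which monomials $e^{2k\zeta}e^{-N\zeta}$ in $h$, once multiplied by $g_\pm$, give an $L^2$ function on \emph{both} boundary lines of the strip, and verify that the would-be solutions one discards really do fail to lie in the domain (rather than merely failing some crude estimate). This requires matching the exponential rates of $g_\pm$ at $\pm\infty$ against those of the monomials carefully, and handling the edge cases where a zero $\a_j$ sits on the center line $\im\zeta = -\pi/2$ separately from generic paired zeros.
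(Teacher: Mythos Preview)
Your overall strategy matches the paper's: build an explicit meromorphic solution of the eigenvalue equation, divide, invoke Proposition \ref{pr:period} to get a $\pi i$-periodic entire function, bound its growth, and apply Lemma \ref{lm:polynom}. However, there is a real gap in the growth-bound step, and it stems from a misidentification of the domain.

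You write that ``the membership $\xi \in \hardy$ gives $L^2$ control of $\xi$ on horizontal lines.'' But an eigenvector $\xi$ of $\shift M_f$ is \emph{not} in $\hardy$; the domain of $\shift M_f$ is $\{\xi\in L^2(\RR): f\xi\in\hardy\}$, and indeed the explicit eigenfunctions have poles in the strip at the zeros $\a_j$ of $f$. What you actually control is $f\xi$, uniformly in $L^2$ across the strip. The paper's explicit solution is $\xi_k(\theta)=e^{(k+\frac12)\theta}\prod_j (e^\theta-e^{\a_j})^{-1}$, which has poles (not zeros) at the $\a_j$; consequently $1/(f\xi_k)$ is analytic and grows at most like $Ae^{M|\re\zeta|}$. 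The right object to study is therefore $\dfrac{\xi}{\xi_k}=\dfrac{f\xi}{f\xi_k}$: the numerator is in $\hardy$, the denominator is zero-free with controlled growth. Your remark about ``zeros of $g_\pm$'' and $\xi$ having to vanish there is backwards; the point is the cancellation of poles between $\xi_k$ and $1/f$.

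Even after this correction, you cannot feed $h=\xi/\xi_k$ directly into Lemma \ref{lm:polynom}: that lemma needs a \emph{pointwise} bound $|h(\zeta)|\le Ae^{N|\re\zeta|}$, whereas $f\xi$ is only controlled in $L^2$ on each horizontal line (and the pointwise bound from Corollary \ref{co:pointwise} blows up at the boundary). The paper closes this gap by smearing: for $\eta\in L^2$ supported in $[-R,R]$ it shows $\zeta\mapsto\int\overline{\eta(\theta)}\,h(\zeta+\theta)\,d\theta$ is entire, $\pi i$-periodic, and bounded by $Ae^{M(|\re\zeta|+R)}\|\eta\|\,\|f\xi\|_{\hardy}$. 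Lemma \ref{lm:polynom} applied to the smeared function forces all but finitely many Fourier coefficients of $h$ to vanish; since $\eta$ is arbitrary this pins down $h$ itself. Without this smearing step your argument does not reach the hypothesis of Lemma \ref{lm:polynom}.
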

\begin{proof}
 We observed after Lemma \ref{lm:star} that $(\mfbar\shift)^* = \shift M_f$.
 By definition of the product of (possibly) unbounded operators, the domain of $\shift M_f$ is:
 \[
  \{\xi \in L^2(\RR^2)\mid f(\zeta)\xi(\zeta) \in \hardy\}.
 \]
 
 In order to determine the deficiency indices, we have to find the eigenvectors of $\shift M_f$ corresponding
 to the eigenvalues $\pm i$. We claim that the functions
 \[
  \xi_k(\theta) = e^{\left(k+\frac12\right)\theta}\prod^n_{j=1} \frac{1}{e^\theta - e^{\a_j}},
 \]
 where $0 \le k < n$, are precisely those eigenfunctions. It is easy to see that they have $n$-poles in the
 strip $\RR + i(-\pi,0)$.
 
 Firstly, let us show that they are indeed eigenfunctions. We have
 \[
  f(\theta)\xi_k(\theta) = e^{\left(k+\frac12\right)\theta}\prod^n_{j=1} \frac{1}{e^\theta - e^{\overline{\a_j}}}
 \]
 We observe that this has no pole in $\strip$, since $e^\zeta$ takes only values with negative imaginary part,
 while $e^{\overline{\a_j}}$ have positive imaginary parts. The product $f(\theta)\xi_k(\theta)$
 belongs to $\hardy$:
 \[
  |f(\zeta)\xi_k(\zeta)| \le e^{\left(k+\frac12\right)\re\zeta}\prod^n_{j=1} \left|\frac{1}{e^{\re \zeta} - e^{\overline{\a_j}}}\right|,
 \]
 again because $e^\zeta$ has negative imaginary part, while $e^{\overline{\a_j}}$ have positive imaginary parts.
 The right-hand side decays exponentially when $\re\zeta \to \pm \infty$, since $0 \le k < n$,
 therefore, $f(\zeta)\xi_k(\zeta) \in \hardy$.
 The boundary value can be straightforwardly computed and it is
 \begin{align*}
  f(\theta-\pi i)\xi_k(\theta-\pi i) &= (-1)^{k+1} ie^{\left(k+\frac12\right)\theta}\prod^n_{j=1} \frac{(-1)^n}{e^\theta + e^{\overline{\a_j}}} \\
    &= (-1)^{k+1} ie^{\left(k+\frac12\right)\theta}\prod^n_{j=1} \frac{(-1)^n}{e^\theta - e^{\a_j}} \\
    &= (-1)^{n+k+1}i\xi_k(\theta)
 \end{align*}
 where we used that $\a_j$ and $\overline{\a_j}-\pi i$ appear in pair, or $\im \a_j = \frac \pi 2$
 and $e^{\a_j} = -e^{\overline{\a_j}}$.
 This means that this belongs to the deficiency subspace. More precisely, if $k = 0$ and if $n = 2m$,
 it belongs to $n_-(\mfbar\shift)$ and if $n = 2m+1$, to $n_+(\mfbar\shift)$.
 As $k$ increases, it alters the eigenvalue by $-1$. They are obviously linearly independent.

 Summing up, if $m$ is even, we found $m$ vectors both in $n_\pm(\mfbar\shift)$ and if $m$ is odd,
 we found $m+1$ vectors in $n_+(\mfbar\shift)$ and $m$ vectors in $n_+(\mfbar\shift)$.
 
 Secondly, we claim that their linear spans exhaust the deficiency subspaces.
 Let $\xi$ be in $\ker (\shift M_f \mp i)$. Recall that $\xi_k$ are also eigenfunctions
 of $\shift M_f$, to which eigenspace it belongs depends on $k$.
 As described in Section \ref{ideas}, we need just one function which satisfies the
 eigenvalue equation, which is not necessarily in the Hilbert space.
 The restriction $0 \le k < n$ is required only when one wants a Hilbert space vector,
 and for other $k$, $\xi_k$, defined analogously, satisfies the same eigenvalue equation.
 We take a $\xi_k$ which has the same eigenvalue as $\xi$.
 
 We proceed as we described in Section \ref{ideas}:
 We consider $\frac{\xi(\zeta)}{\xi_k(\zeta)} = \frac{f(\zeta)\xi(\zeta)}{f(\zeta)\xi_k(\zeta)}$.
 By assumption, $f\xi \in \hardy$. On the other hand, we have
 \[
  \left|\frac1{f(\zeta)\xi_k(\zeta)}\right| = \left|e^{-\left(k + \frac12\right)\zeta}\prod_{j=1}^n (e^\zeta - e^{\overline{\a_j}})\right|
  \le Ae^{M|\re \zeta|},
 \]
 where $A$ and $M$ are positive constants. Especially, this factor has no pole.
 Therefore, the function $\frac{\xi(\zeta)}{\xi_k(\zeta)}$ is analytic in $\RR +i(-\pi,0)$
 and has the same locally $L^2$-boundary value at $\im \zeta = 0, -\pi$,
 hence by Proposition \ref{pr:period}, it extends to an entire periodic function.
 An entire periodic function admits a Fourier expansion: $\frac{\xi(\zeta)}{\xi_k(\zeta)} = \sum a_j e^{j\zeta}$,
 where $a_j$ is strongly decreasing, and this sum is uniformly convergent on any compact set
 with respect to $\re \zeta$.
 
 For any $L^2$-function $\eta$ supported in $[-R,R]$, the following integral
 \[
  \int d\theta\, \overline{\eta(\theta)} \frac{\xi(\zeta+\theta)}{\xi_k(\zeta+\theta)}
 = \int d\theta\, \overline{\eta(\theta)} \frac{f(\zeta+\theta)\xi(\zeta+\theta)}{f(\zeta+\theta)\xi_k(\zeta+\theta)}
 \]
 defines an analytic function of $\zeta$ in $\strip$ by Morera's theorem \cite[Theorem 10.17]{Rudin87}.
 Furthermore, we have the following estimate:
 \begin{align*}
   &\left|\int d\theta\, \overline{\eta(\theta)} \frac{f(\zeta+\theta)\xi(\zeta+\theta)}{f(\zeta+\theta)\xi_k(\zeta+\theta)}\right| \\
 \le\; & Ae^{M(|\re \zeta|+R)}\left|\int_{-R}^R d\theta\, \overline{\eta(\theta)} f(\zeta+\theta)\xi(\zeta+\theta)\right| \\
 \le\; & Ae^{M(|\re \zeta|+R)}\|\eta\|\cdot \|f\xi\|_{\hardy}.
 \end{align*}
 This function is periodic since so is $\frac{\xi(\zeta)}{\xi_k(\zeta)}$,
 hence by Lemma \ref{lm:polynom}, it must be of the form $\sum_{j = -N}^N a_{\eta,j} e^{j\zeta}$,
 where $N$ is the smallest even integer such that $N > M$.
 
 By the compactness of the support of $\eta$ and the uniform convergence of
 $\frac{\xi(\zeta)}{\xi_k(\zeta)} = \sum a_j e^{j\zeta}$, we obtain
 $a_{\eta, j} = a_j\int d\theta\, \overline{\eta(\theta)} e^{j\theta}$.
 As $\eta$ is arbitrary, this implies that $a_j = 0$ for $|j| > N$.
 In other words, $\xi(\zeta) = \xi_k(\zeta) \sum_{-N}^N a_j e^{j\zeta}$.
 
 This is actually of the form $\xi(\zeta) = \sum_{-N}^N a_j \xi_{k+j}(\zeta)$,
 but we know that $\xi_{k+j}(\zeta) $ can be in $\hardy$ if and only if
 $0 \le k+j < n$, and their decay rates are different for different $k+j$,
 therefore, $\xi(\zeta) \in \hardy$ if and only if
 $\xi(\zeta) = \sum_{j=0}^{n-1} a_j \xi_{j}(\zeta)$. This completes the proof that $\{\xi_j\}_{j=0}^{n-1}$
 exhaust the deficiency subspaces.
\end{proof}

\subsection{Infinite Blaschke products}\label{infiniteblaschke}
From the results of the previous section, it is natural to expect that
for an infinite Blaschke product
\[
 f(\zeta) = \prod_{j=1}^\infty c_j\frac{e^\zeta - e^{\a_j}}{e^\zeta - e^{\overline{\a_j}}}
\]
the deficiency indices $n_\pm(\mfbar\shift)$ are infinite.
This will turn out to be true, yet it is not easy in general to calculate
the deficiency subspaces explicitly, as the natural candidate vectors in those
subspaces would be infinite products, whose convergence is not always under control.
Therefore, we divide the cases.

\subsubsection{Zeros on the middle line}
Let us assume that $\im \a_j = -\frac\pi 2$. In this case, $e^{\a_j} =: i\g_j \in i\RR_-$.
We divide the negative numbers $\{\g_j\}$ into two families according to their real parts and
reorder as follows: $\re \a^+_j \ge 0$ and $\re \a^-_j < 0$, accordingly
$\g^+_j \ge -1$ and $0 < \g^-_j < -1$.
Correspondingly, we rename the constant factors, and then by definition: $c^+_j = -1,
c^-_j = 1$.

\begin{proposition}
 Let $f$ be a Blaschke product with infinitely many zeros $\{\a_j\}$ on the line $\RR + \frac{\pi i}2$.
 Then $n_\pm(\mfbar\shift) = \infty$.
\end{proposition}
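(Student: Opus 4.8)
The plan is to reduce the infinite-Blaschke case to the finite case handled in Proposition \ref{pr:bl-fin} by a truncation-and-approximation argument. Write $f = \prod_{j=1}^\infty c_j \frac{e^\zeta - e^{\a_j}}{e^\zeta - e^{\overline{\a_j}}}$ with $\im\a_j = -\tfrac\pi 2$, and for each $N$ set $f_N = \prod_{j=1}^N c_j \frac{e^\zeta - e^{\a_j}}{e^\zeta - e^{\overline{\a_j}}}$ and $g_N = \prod_{j>N} c_j \frac{e^\zeta - e^{\a_j}}{e^\zeta - e^{\overline{\a_j}}}$, so $f = f_N g_N$ and $g_N$ is again an inner function satisfying the symmetry condition $\overline{g_N(\theta)} = g_N(\theta - \pi i)$. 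Since $g_N$ is inner, $M_{g_N}$ is unitary, and one checks that $M_{g_N}$ intertwines $M_{\overline f}\shift$ with something close to $M_{\overline{f_N}}\shift$; more precisely I would exhibit explicit approximate eigenvectors directly. The natural candidates are, for each fixed $k \ge 0$,
\[
 \xi_k(\theta) = e^{(k+\frac12)\theta}\prod_{j=1}^\infty \frac{1}{e^\theta - e^{\a_j}} \cdot \big(\text{regularizing factor}\big),
\]
and the point is that the finite truncations $\xi_k^{(N)}(\theta) = e^{(k+\frac12)\theta}\prod_{j=1}^N \frac{1}{e^\theta - e^{\a_j}}$ solve the eigenvalue equation for $f_N$, but not for $f$.

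The cleaner route, which I would take, is to use unitarity of $M_{g_N}$ and the explicit eigenvectors of $M_{\overline{f_N}}\shift$ from Proposition \ref{pr:bl-fin}. For $f_N$ with $N = 2m$ or $2m+1$ factors we have $m$ (resp. $m+1$) eigenvectors $\xi_0^{(N)},\dots$ in a given deficiency subspace. I claim $M_{g_N}\xi_k^{(N)}$ lies in $\dom(\shift M_f)$ and is an eigenvector: since $g_N$ is analytic and bounded with no zeros in the closed strip off a null set and $\overline{g_N}$ supplies exactly the boundary relation needed, one computes
\[
 \shift M_f (M_{g_N}\xi_k^{(N)}) = \shift M_{f_N} M_{g_N}M_{g_N}\xi_k^{(N)},
\]
and here the obstruction appears: $M_{g_N}M_{g_N} = M_{g_N^2} \ne M_{\overline{g_N}}$ in general, so this does not immediately collapse. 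So instead I would argue more robustly: fix $k$ and show the sequence of genuine eigenvectors need not converge, but that for each finite $N$ the operator $M_{\overline{f_N}}\shift$ has at least $\lfloor N/2\rfloor$ independent vectors in each deficiency subspace, and that these ``persist'' into $M_{\overline f}\shift$ after multiplication by the unitary $M_{g_N}$ \emph{provided one checks the domain membership $f \cdot (g_N \xi_k^{(N)}) \in \hardy$}. That membership does hold, because $f g_N \xi_k^{(N)} = f_N g_N^2 \xi_k^{(N)}$ and $g_N^2$ is bounded analytic on the strip while $f_N \xi_k^{(N)} \in \hardy$ by the finite-case analysis; so $M_{g_N}\xi_k^{(N)} \in \dom(\shift M_f)$. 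The eigenvalue relation then follows from the boundary-value computation exactly as in Proposition \ref{pr:bl-fin}, using $\overline{g_N(\theta)} = g_N(\theta-\pi i)$ to convert one $g_N$ factor. This yields, for every $N$, at least $\lfloor N/2 \rfloor$ linearly independent vectors in each of $\ker(\shift M_f \mp i)$; linear independence across different $N$ follows because the $\xi_k^{(N)}$ have distinct exponential decay rates $e^{-(k+\frac12)\re\zeta}$ at $\re\zeta\to\pm\infty$, a property not destroyed by the unitary $M_{g_N}$ since $|g_N| = 1$ a.e. on the boundary lines. Letting $N\to\infty$ gives $n_\pm(\mfbar\shift) = \infty$.

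The main obstacle, and the step deserving the most care, is making precise that the vectors produced at level $N$ are \emph{new} relative to those at level $N-1$ — i.e. establishing linear independence of the whole infinite family $\{M_{g_N}\xi_k^{(N)}\}_{N,k}$ inside $L^2(\RR)$. The asymptotic-decay-rate argument handles distinct $k$ for fixed $N$ cleanly, but comparing $M_{g_N}\xi_k^{(N)}$ with $M_{g_{N'}}\xi_{k'}^{(N')}$ for $N \ne N'$ requires understanding the poles: $M_{g_N}\xi_k^{(N)}$, when its meromorphic continuation is taken, has poles exactly at $\{\a_j\}_{j=1}^N$ (the factors $e^\theta - e^{\overline{\a_j}}$ in the numerator of $g_N$ cancel nothing, and the $\prod_{j\le N}(e^\theta - e^{\a_j})^{-1}$ supplies $N$ poles on the middle line), so two such vectors with different $N$ have different pole sets and hence cannot be proportional. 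I would phrase this via the following clean reduction: it suffices to show $\dim \ker(\shift M_f \mp i) \ge \lfloor N/2\rfloor$ for every $N$, and this dimension count follows from the finite case together with the observation that $M_{g_N}$ is injective with range inside $\dom(\shift M_f)$ on the relevant finite-dimensional subspace, so it maps a deficiency subspace of $\shift M_{f_N}$ injectively into a deficiency subspace of $\shift M_f$. Taking the supremum over $N$ completes the proof; the argument is essentially a routine packaging of Proposition \ref{pr:bl-fin} once the domain membership $f g_N \xi_k^{(N)} \in \hardy$ and the boundary relation for $g_N$ are nailed down.
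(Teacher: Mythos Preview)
There is a genuine gap. Your proposed vectors $M_{g_N}\xi_k^{(N)}$ do lie in $\dom(\shift M_f)$, but they are \emph{not} eigenvectors of $\shift M_f$. Carry out the boundary computation you allude to: the analytic extension of $f\cdot g_N\xi_k^{(N)}$ is $(f_N\xi_k^{(N)})\cdot g_N^2$, so at $\theta-\pi i$ you get
\[
 f_N(\theta-\pi i)\,g_N(\theta-\pi i)^2\,\xi_k^{(N)}(\theta-\pi i)
 = \big(\pm i\,\xi_k^{(N)}(\theta)\big)\,\overline{g_N(\theta)}^{\,2}.
\]
For this to equal $\pm i\,g_N(\theta)\xi_k^{(N)}(\theta)$ you would need $\overline{g_N(\theta)}^{\,2}=g_N(\theta)$, i.e.\ $g_N(\theta)^3=1$ on $\RR$, which is false for a nontrivial Blaschke product. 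You in fact noticed the obstruction (``$M_{g_N}M_{g_N}\ne M_{\overline{g_N}}$'') but then asserted that ``converting one $g_N$ factor'' suffices --- it does not, because two factors of $g_N$ appear and only one converts. Consequently the claimed injection of $\ker(\shift M_{f_N}\mp i)$ into $\ker(\shift M_f\mp i)$ via $M_{g_N}$ fails, and the dimension count does not go through.

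The underlying reason the truncation idea is hard to rescue here is that zeros on the middle line $\RR-\tfrac{\pi i}{2}$ are fixed by the reflection $\a\mapsto\overline{\a}-\pi i$, so there is no natural splitting $g_N=g_{N,+}g_{N,-}$ with $\overline{g_{N,+}(\theta-\pi i)}=g_{N,-}(\theta)$ (contrast Section \ref{zerosoutside}). The paper instead builds \emph{exact} eigenvectors of $\shift M_f$ directly as convergent infinite products, splitting the zeros according to $\re\a_j\ge 0$ or $<0$ and inserting compensating factors so that the product converges absolutely on $\RR$; then it verifies the eigenvalue relation factor by factor and shows infinitely many $k$ give $L^2$ vectors. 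If you want to save a reduction-to-the-finite-case strategy, you would need a genuinely different intertwiner --- a unitary $U_N$ with $\shift M_f\,U_N \supset U_N\,\shift M_{f_N}$ --- and $M_{g_N}$ is not it.
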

\begin{proof}
As the infinite product is absolutely convergent, from \cite[Theorem 15.5]{Rudin87}, it follows
for $\zeta \in \strip$ that
\begin{align*}
 \sum_j\left|1-c^+_j\frac{e^\zeta - i\g^+_j}{e^\zeta + i\g^+_j}\right| &= \sum_j\left|\frac{2e^\zeta}{e^\zeta + i\g^+_j}\right| < \infty, \\
 \sum_j\left|1-c^-_j\frac{e^\zeta - i\g^-_j}{e^\zeta + i\g^-_j}\right| &= \sum_j\left|\frac{i2\g^-_j}{e^\zeta + i\g^-_j}\right| < \infty.
\end{align*}

By noting that $\g^+_j \to -\infty, \g^-_j \to 0$, for a fixed $\zeta = a+ib\in \strip$, there is sufficiently large $j$ such that
$2|b| \le |\g^+_j|$ and consequently $\left|\frac{a}{a + i\g^+_j}\right| \le \left|\frac{2e^\zeta}{e^\zeta + i\g^+_j}\right|$,
and $2|\g^-_j| \le |b|$ and consequently
$\left|\frac{i\g^-_j}{a + i\g^-_j}\right| \le \left|\frac{i2\g^-_j}{e^\zeta + i\g^-_j}\right|$.
By comparing with the above convergent series, 
it is straightforward to see that the following infinite sums are convergent also for real $\theta$ such that $e^\theta = a$:
\begin{align*}
 \sum_j\left|1-\frac{-i\g^+_j}{e^\theta - i\g^+_j}\right| &= \sum_j\left|\frac{e^\theta}{e^\theta + i\g^+_j}\right|, \\
 \sum_j\left|1-\frac{e^\theta}{e^\theta - i\g^-_j}\right| &= \sum_j\left|\frac{i\g^-_j}{e^\theta + i\g^-_j}\right|. 
\end{align*}
Therefore, the following infinite products
\[
 \prod_j\frac{-i\g^+_j}{e^\theta - i\g^+_j}, \;\;\; \prod_j\frac{e^\theta}{e^\theta - i\g^-_j}
\]
are absolutely convergent and non zero for $\theta \in \RR$.

We consider the following function
\[
 \xi_k(\theta) := e^{(k+\frac12)\theta}\prod_j\frac{-i\g^+_j}{e^\theta - i\g^+_j} \prod_j\frac{e^\theta}{e^\theta - i\g^-_j}.
\]
We claim that $\xi_k \in L^2(\RR,d\theta)$ for infinitely many $k \in \ZZ$.
Indeed, each factor in the big products has the modulus smaller than $1$,
the former factors decay exponentially as $\theta \to \infty$, while the latter factors
decay exponentially as $\theta \to -\infty$.
By assumption, at least one of these products is infinite, hence it can decay faster than $e^{(k+\frac12)\theta}$
for $k > 0$ or $k < 0$, depending on which is infinite.

Next, we claim that $f(\zeta)\xi_k(\zeta) \in \hardy$. By $c_+ = -1$ and $c_- = 1$, it is easy to check that
\[
 f(\zeta)\xi_k(\zeta) = e^{(k+\frac12)\zeta}\prod_j\frac{i\g^+_j}{e^\zeta + i\g^+_j} \prod_j\frac{e^\zeta}{e^\zeta + i\g^-_j}
\]
and this is dominated by $\left|e^{(k+\frac12)\re \zeta}f(\re \zeta)\xi(\re \zeta)\right|$.
Finally,
\begin{align*}
 f(\theta - \pi i)\xi_k(\theta - \pi i)
 &= (-1)^{k+1}ie^{(k+\frac12)\theta}\prod_j\frac{i\g^+_j}{-e^\theta + i\g^+_j} \prod_j\frac{-e^\theta}{-e^\theta + i\g^-_j} \\
 &=(-1)^{k+1}ie^{(k+\frac12)\theta}\prod_j\frac{-i\g^+_j}{e^\theta - i\g^+_j} \prod_j\frac{e^\theta}{e^\theta - i\g^-_j} \\
 &= (-1)^{k+1}i\xi_k(\theta).
\end{align*}
Therefore,
$\xi_k(\theta) \in \ker(\shift M_f \mp i)$,
where $-$ applies when $k$ is odd and $-$ applies when $k$ is even.
In particular, there are infinitely many such eigenvectors.
\end{proof}

\subsubsection{Zeros outside the middle line}\label{zerosoutside}
Here we assume that $\im \a_j \neq -\frac \pi 2$.
By the symmetry condition, $\a_j$ must appear with $\overline{\a_j}-\pi i$ in pair.
Let us reorder and rename the zeros $\{\a_j\}$ and write this explicitly:
\[
 f(\zeta) = \prod_{j=1}^\infty \frac{e^\zeta - e^{\a_j}}{e^\zeta - e^{\overline{\a_j}}}\frac{e^\zeta + e^{\overline{\a_j}}}{e^\zeta + e^{\a_j}}.
\]
Note that $f$ is inner, hence the multiplication operator $\mfbar$ is unitary and
the product $\mfbar\shift$ is a closed symmetric operator.

We first show that $\mfbar\shift$ has a self-adjoint extension. Indeed,
using the above reordering,
we have $f(\zeta) = f_+(\zeta)f_-(\zeta)$, where
\[
 f_+(\zeta) = \prod_{j=1}^\infty \frac{e^\zeta + e^{\overline{\a_j}}}{e^\zeta + e^{\a_j}}, \;\;\;
 f_-(\zeta) = \prod_{j=1}^\infty \frac{e^\zeta - e^{\a_j}}{e^\zeta - e^{\overline{\a_j}}}.
\]
Both functions are in $H^\infty(\SS_{-\infty,0})$ and it holds that
$\overline{f_+(\theta - \pi i)} = f_-(\theta)$.

By definition of the domains of product operators,
we have $M_{f_-(\,\cdot - \pi i)}\shift \subset \shift M_{f_-}$, since
$f_-(\zeta)$ is bounded and analytic in $\strip$.
Therefore, we have the following inclusion
\[
 \mfbar\shift = M_{f_+(\,\cdot - \pi i)}M_{f_-(\,\cdot - \pi i)}\shift
 \subset M_{f_+(\,\cdot - \pi i)}\shift M_{f_-} = M_{f_-}^*\shift M_{f_-}.
\]
The last expression is a self-adjoint operator, since $M_{f_-}$ is unitary.
In particular, the operator $\mfbar\shift$ has a self-adjoint extension
and $n_+(\mfbar\shift) = n_-(\mfbar\shift)$.

Between $\mfbar\shift$ and $M_{f_-}^*\shift M_{f_-}$, there are infinitely many
different symmetric closed operators. Indeed,
let us put
$f_{j,+}(\zeta) := \frac{e^\zeta + e^{\overline{\a_j}}}{e^\zeta + e^{\a_j}}$,
$f_{j,-}(\zeta) := \frac{e^\zeta - e^{\a_j}}{e^\zeta - e^{\overline{\a_j}}}$.
We can see as above
\begin{align*}
 &\mfbar\shift \subset \mfbar M_{f_{1,-}}^*\shift M_{f_{1,-}}
 \subset \cdots \subset \mfbar M_{f_{j,-}}^*\cdots M_{f_{1,-}}^*\shift M_{f_{1,-}} \cdots M_{f_{j,-}} \\
 &\cdots \subset M_{f_-}^*\shift M_{f_-}.
\end{align*}
Such an infinite tower of extensions is possible only if $n_+(\mfbar\shift) = n_-(\mfbar\shift) = \infty$.

\subsection{Singular inner functions}\label{inner}
In this Subsection we consider singular inner functions.
An singular inner function admits the following representation
\[
 f(\zeta) = \exp\left(i\int \frac{d\mu(s)}{1+s^2}\, \frac{1+e^\zeta s}{e^\zeta - s}\right),
\]
where $\mu$ is a singular measure on $\RR \cup \{\infty\}$.
If $\mu$ has atoms at $0, \infty$, we need a different treatment. Let us consider these cases
separately.

\subsubsection{Atomic measures at infinity}\label{atomic}
When $\frac{\mu(\{0\})}{1+s^2} = \a \ge 0$ and $\frac{\mu(\{\infty\})}{1+s^2} = \beta \ge 0$ and $\mu = 0$ elsewhere,
our function takes the form $f(\zeta) = \exp\left(i\a e^{-\zeta} - i\beta e^{\zeta}\right)$.
As in Section \ref{ideas}, we look for solutions of the eigenvalue equation
$f(\theta - \pi i)\xi(\theta - \pi i) = \pm i \xi(\theta)$.
One such solution is $g(\zeta) = \exp\left(-\frac{i\a}{2} e^{-\zeta} + \frac{i\beta}2 e^{\zeta}\right)$.
Indeed, it is straightforward that
\[
 f(\zeta - \pi i) = \exp\left(-i\a e^{-\zeta} + i\beta e^{\zeta}\right),\;\; g(\zeta - \pi i) = \exp\left(\frac{i\a}{2} e^{-\zeta} - \frac{i\beta}2 e^{\zeta}\right) 
\]
and we have $f(\theta - \pi i)g(\theta - \pi i) = g(\theta), \theta \in \RR$.
Furthermore, note that
\[
 |f(\zeta)g(\zeta)| = \left|\exp\left(-\frac{i\a}2 e^{-\zeta} + \frac{i\beta}2 e^{\zeta}\right)\right|
\]
and the problem is reduced to find $\pi i$-periodic functions
with a certain growth condition.

\begin{lemma}\label{lm:pl-l2}
 Let $h$ be continuous on $\RR + [-\pi,0],$ analytic in $\RR + i(-\pi, 0)$ and suppose that
 there is $A, B > 0, 0 < \a < 1$ such that
 \[
  |h(\zeta)| \le Ae^{Be^{\a|\re \zeta|}},
 \]
 and on the boundary, $h(\theta), h(\theta - \pi i)$ are $L^2$.
 Then it follows that $h \in \hardy$.
\end{lemma}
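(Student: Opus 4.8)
The plan is to run a Phragm\'en--Lindel\"of argument adapted to the strip $\strip$, applied not to $h$ itself --- which is only known to be pointwise bounded, at a double-exponential rate, and need not lie in $L^\infty$ of the two boundary lines --- but to a family of auxiliary functions obtained by convolving $h$ in the real direction against compactly supported test functions, in the spirit of the duality argument used in the proof of Proposition~\ref{pr:bl-fin}. (Heuristically, the conformal map $\zeta \mapsto e^\zeta$ identifies $\strip$ with a half-plane and turns $|h(\zeta)| \le Ae^{Be^{\a|\re\zeta|}}$ into a bound of the form $Ae^{B\max(|w|^\a,|w|^{-\a})}$, so that $\a < 1$ is exactly the condition placing $h$ below the Phragm\'en--Lindel\"of growth threshold.)

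First I would fix $\g$ with $\a < \g < 1$ and set
\[
 m_\e(\zeta) := \exp\!\bigl(-2\e\cosh(\g(\zeta + \tfrac{\pi i}{2}))\bigr) = \exp\!\bigl(-\e\,(e^{i\g\pi/2}e^{\g\zeta} + e^{-i\g\pi/2}e^{-\g\zeta})\bigr), \qquad \e > 0 .
\]
Writing $\zeta + \tfrac{\pi i}{2} = \sigma + i\nu$ with $\nu \in [-\tfrac\pi2,\tfrac\pi2]$ for $\zeta$ in the closed strip, one has $\re\cosh(\g(\sigma + i\nu)) = \cosh(\g\sigma)\cos(\g\nu) \ge \cos(\tfrac{\g\pi}2)\cosh(\g\re\zeta) > 0$, since $|\g\nu| \le \tfrac{\g\pi}2 < \tfrac\pi2$. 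Consequently, on the closed strip $|m_\e(\zeta)| \le 1$; moreover $|m_\e(\zeta)| \le \exp(-\e\cos(\tfrac{\g\pi}2)\,e^{\g|\re\zeta|})$; and $m_\e(\zeta) \to 1$ pointwise as $\e \to 0$. This is the one place where the width $\pi$ of the strip and the hypothesis $\a < 1$ are used, and it is the heart of the proof.

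Next, for $R > 0$ and $\eta \in L^2([-R,R])$ with $\|\eta\| \le 1$, I would set $\Phi_\eta(\zeta) := \int_{-R}^R h(\zeta+\theta)\,\overline{\eta(\theta)}\,d\theta$. Since $h$ is continuous on the closed strip and analytic in $\strip$, $\Phi_\eta$ is continuous on the closed strip and, by Morera's theorem, analytic in $\strip$. By Cauchy--Schwarz and the growth hypothesis on $h$, $|\Phi_\eta(\zeta)| \le \sqrt{2R}\,A\,e^{(Be^{\a R})\,e^{\a|\re\zeta|}}$, a bound of the same form with constants independent of $\eta$; and on the two boundary lines, again by Cauchy--Schwarz, $|\Phi_\eta(\sigma)| \le \|h(\,\cdot\,)\|$ and $|\Phi_\eta(\sigma - \pi i)| \le \|h(\,\cdot\, - \pi i)\|$, both finite by hypothesis. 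Let $C$ denote the larger of these two numbers.

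Then $\Phi_\eta m_\e$ is analytic in $\strip$, continuous on the closure, bounded by $C$ on the boundary, and --- because $\g > \a$ --- bounded on the closed strip and tending to $0$ as $|\re\zeta| \to \infty$ uniformly in $\im\zeta$. Applying the maximum modulus principle on the rectangles $[-M,M] + i[-\pi,0]$, on whose vertical sides $|\Phi_\eta m_\e|$ is uniformly small while on the horizontal sides it is at most $C$, and then letting $M \to \infty$, gives $|\Phi_\eta m_\e| \le C$ on all of $\strip$; letting $\e \to 0$ yields $|\Phi_\eta(\zeta)| \le C$ for every $\zeta \in \strip$. Finally, for fixed $\l \in (-\pi,0)$, taking the supremum of $|\Phi_\eta(i\l)|$ over admissible $\eta$ identifies the left side with $\|h(\,\cdot\, + i\l)\|_{L^2([-R,R])}$, which is therefore $\le C$; letting $R \to \infty$ gives $\|h(\,\cdot\, + i\l)\| \le C$. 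Hence $\sup_{\l\in(-\pi,0)}\|h(\,\cdot\, + i\l)\| \le C < \infty$, which together with analyticity in $\strip$ is exactly the statement $h \in \hardy$. The only genuinely delicate point is the construction of the Carleman-type multiplier $m_\e$; the Morera/continuity bookkeeping for $\Phi_\eta$, the maximum-principle limit, and the concluding duality argument are all routine.
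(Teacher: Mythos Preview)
Your proof is correct and follows essentially the same route as the paper: define the smeared function $\Phi_\eta(\zeta)=\int_{-R}^{R}\overline{\eta(\theta)}\,h(\zeta+\theta)\,d\theta$, bound it on the boundary by Cauchy--Schwarz and in the interior by the double-exponential growth hypothesis, invoke a Phragm\'en--Lindel\"of principle to conclude $|\Phi_\eta(\zeta)|\le C\|\eta\|$ on the whole strip, and finish by duality. The only difference is that the paper simply cites the Phragm\'en--Lindel\"of theorem \cite[Theorem 12.9]{Rudin87} as a black box, whereas you unpack that step explicitly by constructing the Carleman-type multiplier $m_\e(\zeta)=\exp(-2\e\cosh(\g(\zeta+\tfrac{\pi i}{2})))$ with $\a<\g<1$ and running the maximum-modulus argument on rectangles by hand; this is exactly the standard proof of the cited theorem, so the two arguments coincide in substance.
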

\begin{proof} 
 We argue as in Proposition \ref{pr:bl-fin}.
 For an $L^2$-function $\eta$ supported in $[-R,R]$, the following integral
 \[
  h_\eta(\zeta) := \int d\theta\, \overline{\eta(\theta)} h(\zeta + \theta)
 \]
 defines an analytic function $h_\eta$ of $\zeta$ with the estimate:
 \begin{align*}
   \left|\int d\theta\, \overline{\eta(\theta)} h(\zeta + \theta)\right|
 \le\; & Ae^{B(|\re \zeta|+R)}\left|\int_{-R}^R d\theta\, \overline{\eta(\theta)}\right| \\
 \le\; & Ae^{B(|\re \zeta|+R)}\|\eta\|\cdot \sqrt{2R}
 \end{align*}
and on the boundary, it is immediate from the assumption that
$\max\{|h_\eta(\theta)|, |h_\eta(\theta - \pi i)|\} \le \|\eta\|\cdot \max\{\|h\|, \|h(\,\cdot-\pi i)\|\}$.
Therefore, by Phragm\'en-Lindel\"of principle \cite[Theorem 12.9]{Rudin87},
$|h_\eta(\zeta)| \le \|\eta\|\cdot \max\{\|h\|, \|h(\,\cdot-\pi i)\|\}$.

As this estimate does not depend on $R$, it holds for any $\eta \in L^2(\RR)$,
which implies that $\|h(\,\cdot - i\lambda)\| \le \max\{\|h\|, \|h(\,\cdot-\pi i)\|\}$.
Namely, $h \in \hardy$.
\end{proof}

Next, consider the function $\cos\left(t e^{-\zeta}\right)$, $t \in \RR$.
This is $\pi i$-periodic in $\zeta$ and it holds that
\[
 \left|\cos\left(t e^{-\zeta}\right)\right| \le e^{t |\im e^{-\zeta}|}.
\]
Let $\k$ be a real smooth function supported in $(-\frac\a 3, \frac\a 3)$ such that
$\k(-t) = \k(t)$, $\int dt\,\k(t) = 0$. Then its Fourier transform is entire and we have
\[
\hat \k(e^{-\zeta}) = \frac1{\sqrt{2\pi}}\int dt\, e^{-ite^{-\zeta}}\k(t)
= \frac1{\sqrt{2\pi}}\int_0^\frac{\a}3 dt\, 2\cos(te^{-\zeta})\k(t).
\]
Then we have the following bound:
\[
|\hat \k(e^{-\zeta})| = \frac{\max\{|\k(t)|\}}{\sqrt{2\pi}}\frac{2\a}3 \max\{|\cos(te^{-\zeta})|\}
\le \frac{\max\{|\k(t)|\}}{\sqrt{2\pi}}\frac{2\a}3 e^{\frac\a 3 |\im e^{-\zeta}|}.
\]
Furthermore, on the boundary,
$\hat \k(e^{-\theta})$ and $\hat \k(e^{-(\theta - \pi i)}) = \hat \k(-e^{-\theta}) = \hat \k(e^{-\theta})$
(as $\k(-t) = \k(t)$)
are rapidly decreasing in $\theta$ and faster than exponentials for $\theta \to - \infty$,
since $\hat \k(p)$ is a Schwartz class function with $\hat \k(0) = 0$,
therefore, $\hat \k(e^{-\theta})e^{-n\theta}$
tends to $0$ rapidly as $\theta \to \pm \infty$, where $n > 0$.

\begin{proposition}
 For $f(\zeta) = \exp\left(i\a e^{-\zeta} - i\beta e^{\zeta}\right)$,
 where one of $\a$ and $\b$ is nonzero. Then
 Then $n_\pm(\mfbar\shift) = \infty$.
\end{proposition}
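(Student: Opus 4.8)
The plan is to exhibit, for each of the eigenvalues $+i$ and $-i$, infinitely many linearly independent vectors in the corresponding eigenspace of $\shift M_f=(\mfbar\shift)^*$, which by the self-adjointness criterion forces $n_\pm(\mfbar\shift)=\infty$. I may assume $\a>0$: if instead $\a=0<\b$, the same argument goes through with $e^{\zeta}$ in place of $e^{-\zeta}$ (and $\b$ in place of $\a$), which amounts to conjugating by the automorphism $\zeta\mapsto-\zeta-\pi i$ of the strip. Using the solution $g(\zeta)=\exp\left(-\tfrac{i\a}{2}e^{-\zeta}+\tfrac{i\b}{2}e^{\zeta}\right)$ of the eigenvalue equation found above, and the function $\k$ fixed above (smooth, even, supported in $(-\tfrac{\a}{3},\tfrac{\a}{3})$, with $\int\k=0$), I define, for $k\in\ZZ$ with $k\le 1$,
\[
 \xi_k(\zeta):=e^{(k+\frac12)\zeta}\,\hat\k(e^{-\zeta})\,g(\zeta),
\]
and claim that each $\xi_k$ belongs to $\dom(\shift M_f)$ and satisfies $\shift M_f\xi_k=(-1)^{k+1}i\,\xi_k$.

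The eigenvalue equation is straightforward: $\hat\k(e^{-\zeta})$ is entire and $\pi i$-periodic (because $\k$ even gives $\hat\k(-w)=\hat\k(w)$), one has $e^{(k+\frac12)(\zeta-\pi i)}=(-1)^{k+1}i\,e^{(k+\frac12)\zeta}$, and $f(\theta-\pi i)g(\theta-\pi i)=g(\theta)$ by construction of $g$; multiplying these three relations gives $f(\theta-\pi i)\xi_k(\theta-\pi i)=(-1)^{k+1}i\,\xi_k(\theta)$. For $\xi_k\in L^2(\RR)$: on $\RR$ one has $|g(\theta)|=1$, while $|\hat\k(e^{-\theta})|=O(e^{-2\theta})$ as $\theta\to+\infty$ (since $\hat\k$ is even with $\hat\k(0)=0$, hence vanishes to order $\ge 2$ at the origin) and $|\hat\k(e^{-\theta})|$ decays faster than every exponential as $\theta\to-\infty$ (it is of Schwartz class); therefore $|\xi_k(\theta)|=e^{(k+\frac12)\theta}|\hat\k(e^{-\theta})|$ is square-integrable for every $k\le 1$. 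Finally, the $\xi_k$ are linearly independent because $\xi_k/\xi_0=e^{k\zeta}$, and those with $k$ odd belong to $\ker(\shift M_f-i)$ while those with $k$ even belong to $\ker(\shift M_f+i)$, so both deficiency spaces are infinite-dimensional once the domain claim is verified.

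The hard part will be the domain claim, $f\xi_k\in\hardy$. Write $f\xi_k=(fg)\,e^{(k+\frac12)\zeta}\hat\k(e^{-\zeta})$; since $fg(\zeta)=\exp\left(\tfrac{i\a}{2}e^{-\zeta}-\tfrac{i\b}{2}e^{\zeta}\right)$, this product is entire, in particular continuous on $\RR+[-\pi,0]$ and analytic in the open strip. Its boundary values are in $L^2(\RR)$: on the line $\im(\zeta)=0$ one has $|f\xi_k(\theta)|=|\xi_k(\theta)|$ because $|f|=1$ a.e.\ on $\RR$ ($f$ is inner), and on the line $\im(\zeta)=-\pi$ one has $|f\xi_k(\theta-\pi i)|=|\xi_k(\theta)|$ by the eigenvalue equation, and both are square-integrable by the previous paragraph. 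For the growth inside the strip, write $\zeta=a+ib$ with $b\in(-\pi,0)$; then $\im(e^{-\zeta})=-e^{-a}\sin b>0$ and $\im(e^{\zeta})=e^{a}\sin b<0$, so
\[
 |fg(\zeta)|=\exp\left(-\tfrac{\a}{2}\im(e^{-\zeta})+\tfrac{\b}{2}\im(e^{\zeta})\right)\le\exp\left(-\tfrac{\a}{2}\im(e^{-\zeta})\right),
\]
whereas the bound on $\hat\k$ recalled above gives $|\hat\k(e^{-\zeta})|\le C\exp\left(\tfrac{\a}{3}|\im(e^{-\zeta})|\right)=C\exp\left(\tfrac{\a}{3}\im(e^{-\zeta})\right)$; multiplying, $|fg(\zeta)\hat\k(e^{-\zeta})|\le C\exp\left(-\tfrac{\a}{6}\im(e^{-\zeta})\right)\le C$ uniformly on the closed strip, hence $|f\xi_k(\zeta)|\le C\,e^{|k+\frac12|\,|\re(\zeta)|}$. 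This trivially satisfies the hypothesis $|h(\zeta)|\le A\,e^{Be^{\a'|\re\zeta|}}$ of Lemma~\ref{lm:pl-l2} for any $\a'\in(0,1)$, and the lemma then yields $f\xi_k\in\hardy$. I expect this to be the main obstacle precisely because the (doubly exponential) decay of $fg$ inside the strip degenerates on the boundary lines $\im(\zeta)=0,-\pi$, so no elementary line-by-line $L^2$ estimate is available; the construction circumvents this by taking the support of $\k$ small enough that $\tfrac{\a}{3}<\tfrac{\a}{2}$, which makes $fg\cdot\hat\k(e^{-\zeta})$ globally bounded, and then appealing to the Phragm\'en--Lindel\"of-type Lemma~\ref{lm:pl-l2}. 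Separating the $\xi_k$ with $k$ odd from those with $k$ even then gives two infinite families of eigenvectors, so $n_\pm(\mfbar\shift)=\infty$.
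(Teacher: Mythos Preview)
Your argument is correct and follows essentially the same approach as the paper: you use the same solution $g$ of the eigenvalue equation, the same auxiliary function $\hat\k(e^{-\zeta})$ built from a smooth even $\k$ with small support and vanishing integral, and the same appeal to Lemma~\ref{lm:pl-l2} after showing that $fg\cdot\hat\k(e^{-\zeta})$ is uniformly bounded on the strip. The only cosmetic differences are your indexing (your $\xi_k$ with $k\le 1$ corresponds to the paper's $\xi_{\k,n}$ with $n\ge -2$ via $k=-(n+1)$), your explicit justification of linear independence via $\xi_k/\xi_0=e^{k\zeta}$, and your explicit use of the second-order vanishing of $\hat\k$ at the origin to handle $k=0,1$; none of these changes the substance of the argument.
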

\begin{proof}
We may assume that $\a > 0$, as the case $\b > 0$ is similar.
Let $\k$ as above, $n \in \NN$ and consider
the function $\xi_{\k,n}(\zeta) = \hat \k(e^{-\zeta}) \exp(-\frac{i\a}2e^{-\zeta}+\frac{i\b}2e^{\zeta}) e^{-(n+\frac12)\zeta}$.
For $\zeta \in \strip$ we have the following estimate:
\begin{align*}
 |f(\zeta)\xi_{\k,n}(\zeta)| &\le \frac{\max\{\k(t)\}}{\sqrt{2\pi}}\frac{2\a}3 e^{\frac\a 3 |\im e^{-\zeta}|}
 \cdot e^{-\frac\a 2 \im e^{-\zeta} + \frac\b 2 \im e^{\zeta}} e^{-(n+\frac12)\re \zeta} \\
 &\le \frac{\max\{\k(t)\}}{\sqrt{2\pi}}\frac{2\a}3  e^{-(n+\frac12)\re\zeta} 
\end{align*}
as $\im e^{-\zeta} > 0$ and $\im e^\zeta < 0$,
and on the boundary we saw that $|\xi_{\k,n}(\zeta)|$ is Schwartz class.
Thus by Lemma \ref{lm:pl-l2}, $f(\zeta)\xi_{\k,n}(\zeta) \in \hardy$.

Let us check the eigenvalue equation.
We have
\begin{align*}
 f(\theta - \pi i) \xi_{\k,n}(\theta - \pi i)
 &= \exp(-i\a e^{-\theta} + i\b e^{\theta}) \hat \k(e^{-\theta}) \exp\left(\frac{i\a}2e^{-\theta}-\frac{i\b}2e^{\theta}\right) e^{-(n+\frac12)(\theta -\pi i)} \\
 &= \hat \k(e^{-\theta}) \exp\left(-\frac{i\a}2e^{-\theta}+\frac{i\b}2e^{\theta}\right) (-1)^n i e^{-(n+\frac12) \theta} \\
 &= (-1)^n i\xi_{\k,n}(\theta).
\end{align*}
Therefore, depending on whether $n$ is even or odd, $\xi_{\k,n}$ is in one of the deficiency subspaces of $\mfbar\shift$
with eigenvalue $i$ or $-i$. They are clearly linearly independent.
\end{proof}

\subsubsection{Generic singular measures}\label{generic}
As we show in Appendix \ref{essential},
any singular inner function $f$ has essential singularities
on the boundary $\RR, \RR - \pi i$ or $\re \zeta \to \pm \infty$.
If the measure is atomic on the boundary, it is not difficult to find vectors in the deficiency indices
similarly as in the case of atoms at infinity.

In a generic case, we are not able to write explicitly the vectors in the deficiency subspaces of $\mfbar\shift$.
Yet, we are able to show that the deficiency indices are $(\infty, \infty)$
again by finding a self-adjoint extension and a sequence of closed symmetric extension
between them.

\begin{proposition}
 Let $f$ be singular inner as above.
 Then $n_\pm(\mfbar \shift) = \infty$.
\end{proposition}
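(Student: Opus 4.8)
The plan is to reuse the strategy of Section~\ref{zerosoutside}: exhibit one self-adjoint extension of $\mfbar\shift$ (which already forces $n_+(\mfbar\shift)=n_-(\mfbar\shift)$), and then build an infinite, strictly increasing tower of closed symmetric operators sitting between $\mfbar\shift$ and that extension. An infinite tower of proper symmetric extensions is possible only when the common deficiency index is infinite.

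First, since $f$ is singular inner, $|f(\theta)|=1$ a.e., so $\mfbar=M_{\overline f}$ is unitary; hence $\mfbar\shift$ is closed (as $\shift$ is) and symmetric by Proposition~\ref{pr:symmetric}. Next, a direct computation with the kernel $\tfrac{1+e^\zeta s}{e^\zeta-s}$, based on the identity $\tfrac{1+e^{\theta-\pi i}s}{e^{\theta-\pi i}-s}=-\tfrac{1+e^{\theta}(-s)}{e^{\theta}-(-s)}$, shows that the standing condition $\overline{f(\theta)}=f(\theta-\pi i)$ is equivalent to the representing singular measure being invariant under the reflection $s\mapsto -s$ of $\RR\cup\{\infty\}$ (which fixes $0$ and $\infty$). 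Halving the atoms at $0$ and $\infty$ and splitting the remaining measure between the two half-lines yields a factorization $f=f_+f_-$ in which $f_\pm$ are singular inner with measures exchanged by $s\mapsto-s$; thus each $f_\pm$ is inner ($M_{f_\pm}$ unitary) and bounded analytic on $\strip$, and $\overline{f_+(\theta-\pi i)}=f_-(\theta)$. Exactly as in Section~\ref{zerosoutside}, from $M_{f_-(\slot-\pi i)}\shift\subset\shift M_{f_-}$ one obtains
\[
 \mfbar\shift = M_{f_+(\slot-\pi i)}M_{f_-(\slot-\pi i)}\shift \;\subset\; M_{f_+(\slot-\pi i)}\shift M_{f_-} = M_{f_-}^*\shift M_{f_-},
\]
which is self-adjoint; so $n_+(\mfbar\shift)=n_-(\mfbar\shift)$.

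For the tower, I would write $f_-=\prod_{j\ge1}g_j$ with each $g_j$ a \emph{non-constant} singular inner function — possible because the measure of $f_-$ is nonzero (split a nonzero atom into a convergent series, or cut a non-atomic part into countably many nonzero pieces) — and let $\tilde g_j$ be the ``partner'' obtained by reflecting the measure of $g_j$, so that $f_+=\prod_j\tilde g_j$ and $g_j(\theta-\pi i)=\overline{\tilde g_j(\theta)}$. Moving $g_1,\dots,g_m$ across $\shift$ one at a time (each step uses $M_{g_j(\slot-\pi i)}\shift\subset\shift M_{g_j}$ and unitarity of $M_{g_j}$, together with the identity just stated) produces a chain
\[
 \mfbar\shift = T_0 \subset T_1 \subset T_2 \subset \cdots \subset M_{f_-}^*\shift M_{f_-},
\]
in which each $T_m$ is closed and symmetric, being a unitary conjugate of an operator of the form $M_{\overline h}\shift$ with $h$ singular inner and $\overline{h(\theta)}=h(\theta-\pi i)$. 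The inclusions are strict: one checks $\dom(T_m)=(g_1\cdots g_m)^{-1}\hardy$, so strictness amounts to $g_{m+1}^{-1}\hardy\supsetneq\hardy$, equivalently $g_{m+1}\hardy\subsetneq\hardy$, which holds for every non-constant inner function on $\strip$. An infinite strictly increasing tower of closed symmetric extensions of $\mfbar\shift$ is possible only when $n_+(\mfbar\shift)+n_-(\mfbar\shift)=\infty$, and together with $n_+=n_-$ from the first part this gives $n_\pm(\mfbar\shift)=\infty$.

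The main obstacle, apart from the routine bookkeeping of these factorizations and of the domains, is the strictness of the tower — establishing $g\hardy\subsetneq\hardy$ for an arbitrary non-constant inner $g$ on $\strip$ (the strip analogue of Beurling's theorem). This is where the essential-singularity structure of singular inner functions (Appendix~\ref{essential}) enters, and where some care is needed, since elements of $\hardy$ are only $L^2$, not continuous, on the boundary. A convenient route: if $g\hardy=\hardy$, then $M_{1/g}$ would be a surjective isometry of $\hardy$ (the $\hardy$-norm of an element equals the larger of its two $L^2$ boundary norms, and there $|1/g|=1$), so $g$ would be invertible in $H^\infty(\SS_{-\pi,0})$ and hence constant, a contradiction.
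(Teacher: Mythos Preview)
Your approach is correct and shares the paper's overall strategy --- build an infinite strictly increasing tower of closed symmetric extensions --- but the factorization you use is considerably more elaborate than necessary. The paper exploits that a singular inner $f$ has no zeros in the strip and hence admits analytic $n$-th roots $f^{1/n}$ for every $n$; each root automatically inherits the symmetry $\overline{f^{1/n}(\theta)}=f^{1/n}(\theta-\pi i)$, so one may write $\mfbar\shift=(M_{f^{1/2m}}^*)^{2m}\shift$ and slide $m$ copies across to reach the self-adjoint $M_{f^{1/2}}^*\shift M_{f^{1/2}}$, with every intermediate operator a unitary conjugate of some $M_{\overline{f^{k/m}}}\shift$ and hence manifestly symmetric by Proposition~\ref{pr:symmetric}. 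This bypasses your two-stage decomposition $f=f_+f_-$, $f_-=\prod_j g_j$, and the pairing with reflected partners $\tilde g_j$. For strictness, the paper argues directly from Corollary~\ref{co:pointwise}: if $M_{f^{1/n}}^{-1}$ preserved $\hardy$ then iterates would keep the graph norm fixed (as $|f^{1/n}|=1$ on both boundaries) while the value at any interior point with $|f^{1/n}|<1$ blows up. Your route through ``$1/g\in H^\infty$'' is ultimately the same mechanism --- the multiplier algebra of $\hardy$ is $H^\infty$, proved via the same pointwise bound or a reproducing-kernel argument --- but the step ``$M_{1/g}$ is a $\hardy$-isometry, so $g$ is invertible in $H^\infty$'' is asserted rather than justified; the isometry alone, as a Banach-space statement, does not immediately bound $1/g$ pointwise, and this is precisely where Corollary~\ref{co:pointwise} (or the iteration $g^{-k}\hardy=\hardy$) must enter. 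What your formulation buys is a clean reusable criterion ($g\hardy\subsetneq\hardy$ for every non-constant inner $g$) that would unify the singular-inner and infinite-Blaschke cases; what the paper's root approach buys is that no splitting of the measure is needed and the symmetry bookkeeping disappears.
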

\begin{proof}
As $f$ has no zero in $\strip$, $f$ has analytic roots $f^\frac1{n}$ for arbitrary $n$.
More explicitly, we can write this root as
\[
 (f(\zeta))^\frac1{n} = \exp\left(\frac i n\int \frac{d\mu(s)}{(1+s^2)}\, \frac{1+e^\zeta s}{e^\zeta - s}\right),
\]
and accordingly we have $\mfbar = (M_{{\overline f}^\frac1n})^n$.

Let $n > 2$. As $f^\frac1n \in H^\infty(\SS_{-\infty,0})$ and $f^\frac1n(\theta - \pi i) = \overline{f^\frac1n(\theta)}$,
it holds that
\[
 M_{{\overline f}^\frac1n}^2\shift \subset M_{{\overline f}^\frac1n}\shift M_{f^\frac1n}
 = M_{f^\frac1n}^*\shift M_{f^\frac1n}
\]
and the last expression is manifestly self-adjoint.

We claim that this extension is not trivial.
Indeed, the domain of the latter operator is $M_{f^\frac1n}^{-1}\hardy$.
This is not equal to $\hardy$. In order to see this, pick a vector $\xi \in \hardy$.
As we saw in Corollary \ref{co:pointwise}, the values $\xi(\theta - i\lambda)$
is bounded by $\|\xi\|$ and $\|\shift\xi\|$.
As $|f^\frac1n(\theta)| = |f^\frac1n(\theta - \pi i)| = 1$ on the boundary,
we have $\|\xi\| = \|M_{f^\frac1n}\xi\|$ and $\|\shift\xi\| = \|M_{f^\frac1n(\cdot - \pi i)}\shift \xi\| = \|\shift M_{f^\frac1n}\xi\|$.
On the other hand, $f(\zeta)$ is a nontrivial singular inner function and
there is $\theta - i\lambda \in \strip$ such that $\left|f(\theta - i\lambda)^\frac1n\right| < 1$.
Take $\xi \in \hardy$. If $M_{f^\frac1n}^{-1}$ preserved $\hardy$,
then $M_{f^\frac1n}^{-m}\xi$ would have to be in $\hardy$ for any $m$.
But it is clear that for any $\xi$ there is $m$ such that
$\frac{\xi(\theta - i\lambda)}{f(\theta - i\lambda)^m}$ does not satisfy
the above pointwise bound for vectors in $\hardy$.
In other words, $M_{f^\frac1n}^{-m}\xi \notin \hardy$, $M_{f^\frac1n}^{-1}$ does not preserve $\hardy$
and $M_{f^\frac1n}^*\shift M_{f^\frac1n}$ is a proper self-adjoint extension.

Take an even $n = 2m$. Then by repeating the argument above,
\[
 \mfbar\shift \;\;=\;\; (M_{f^\frac1{2m}}^*)^{2m}\shift \;\;\subset\;\; (M_{f^\frac1{2m}}^*)^m\shift (M_{f^\frac1{2m}})^m \;\;=\;\; M_{f^\frac12}^*\shift M_{f^\frac12}
\]
is $m$ successive proper extensions, therefore, $n_\pm(\mfbar\shift) \ge m$.
As $m$ is arbitrary, they must be infinite.

\end{proof}

\subsection{Outer functions with decay conditions}\label{outer}
Let us consider outer functions. An outer function can be expressed as the exponential of
a Poisson integral of a kernel $\frac{\log\phi(s)}{1+s^2}$ which is $L^1$.
Consequently, $\phi$ cannot have too strong decay as $\re \zeta \to \pm \infty$.

More concretely, the function with fast decay $\phi(s) = \frac1{e^{-s^\a} + e^{s^\a}}$
satisfies this condition if $0 < \a < 1$, but with $1\le \a$ it does not.
Yet the condition that $\frac{\log\phi(s)}{1+s^2}$ should be $L^1$ can be satisfied in
many ways. In addition, $\phi$ can approach to $0$ around finite $s$ so long as
the $L^1$-condition is satisfied. We are not able to treat such a variety of cases in a general
way.

Here we impose a bound on the decay rate of $\phi(s) = |f(\log s)|$.
Under this condition, the operator $\mfbar\shift$ is essentially self-adjoint.

\begin{lemma}\label{lm:outer}
 Let $f$ be a bounded analytic function on $\strip$ 
 and suppose that there are numbers $A, B \ge 0$, $0 \le \a < 1$ such that
 \[
  \left|\frac1{f(\zeta)}\right| \le A\exp\left(Be^{\a|\re \zeta|}\right).
 \]
 Then $f$ is an outer function. Conversely, if $f$ is a bounded outer function defined
 through an $L^1$ function $\frac{\phi(s)}{1+s^2}$ and if
 there are $A, B \ge 0, 0 \le \a < 1$ such that
 $\frac1{\phi(s)} \le Ae^{B\left(|s|^\a + \left|\frac1s\right|^\a\right)},$ then 
 there is $A_1, B_1> 0$ and it holds that $\left|\frac{1}{f(\zeta)}\right| \le A_1\exp\left(B_1e^{\a|\re \zeta|}\right)$
\end{lemma}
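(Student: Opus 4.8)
The statement splits into two independent implications, which I would prove separately.

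\emph{Direction 1 ($f$ is outer).} In the Beurling factorization $f=cf_\blaschke f_\tin f_\tout$ I must show $f_\blaschke$ and $f_\tin$ are constant. Since $1/f(\zeta)$ is finite for every $\zeta\in\strip$, the function $f$ is zero-free, so $f_\blaschke$ is constant. For the singular inner factor I would pass to the unit disk $\mathbb{D}$ via the conformal identification of $\SS_{-\pi,0}$ with $\mathbb{D}$ from \cite[Appendix A]{LW11}, and let $g\in H^\infty(\mathbb{D})$ be the transplant of $f$. Under this map the two ideal ends $\re\zeta\to\pm\infty$ go to two boundary points, which I may take to be $\pm1$, and near them $e^{\alpha|\re\zeta|}$ is comparable to the negative power $|z\mp1|^{-\alpha}$ of the distance; away from $\pm1$ the quantity $\re\zeta$ stays bounded, so $1/g$ is bounded there. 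Hence the hypothesis yields a global estimate $\log|1/g(z)|\le K_1+K_2\bigl(|1-z|^{-\alpha}+|1+z|^{-\alpha}\bigr)$ on $\mathbb{D}$.

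Next I would compare with $u(z):=K_1+K_3\,\re\bigl[(1-z)^{-\alpha}+(1+z)^{-\alpha}\bigr]$ (principal branches), $K_3:=K_2/\cos(\alpha\pi/2)$. For $z\in\mathbb{D}$ both $1\mp z$ lie in the right half-plane, so $\arg\bigl((1\mp z)^{-\alpha}\bigr)\in(-\alpha\pi/2,\alpha\pi/2)$ and, using $\alpha<1$, $\re[(1\mp z)^{-\alpha}]\ge\cos(\alpha\pi/2)\,|1\mp z|^{-\alpha}>0$; thus $u\ge K_1$ and $\log|1/g|\le u$. The decisive point is that $(1\mp z)^{-\alpha}\in H^1(\mathbb{D})$ exactly because $\alpha<1$, so $u=\re G$ with $G\in H^1(\mathbb{D})$ holomorphic and $u$ is the Poisson integral of an $L^1$ boundary function. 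I would then put $\psi:=(1/g)\,e^{-G}$: it is holomorphic and $|\psi(z)|=|1/g(z)|\,e^{-u(z)}\le1$, so $\psi\in H^\infty(\mathbb{D})$. Moreover $e^{-G}$ is a bounded outer function, being zero-free with $|e^{-G}|\le e^{-K_1}$ and with $\log|e^{-G}|=-u$ equal to the Poisson integral of its own boundary values, which characterizes outer functions. From $g\,\psi=e^{-G}$ and uniqueness of the inner–outer factorization, the inner part of $g$ times that of $\psi$ equals the constant inner part of $e^{-G}$; since a product of inner functions is a unimodular constant only when each factor is constant, the inner part of $g$ is constant, so $g$, hence $f$, is outer. (Equivalently: this shows $1/f$ lies in the Smirnov class, which together with $f\in H^\infty$ forces $f$ outer.)

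\emph{Direction 2 (the growth bound).} Writing $w=e^\zeta=|w|e^{i\psi}$ and taking real parts in the outer representation (so that $\log|f(\zeta)|$ is the Poisson extension of $\log\phi$), one gets $\log|1/f(\zeta)|=\int P_w(s)\bigl(-\log\phi(s)\bigr)\,ds$ with the half-plane Poisson kernel $P_w(s)=\tfrac1\pi\tfrac{|\im w|}{|w-s|^2}$, $\int P_w(s)\,ds=1$. The hypothesis reads $-\log\phi(s)\le\log A+B\bigl(|s|^\alpha+|s|^{-\alpha}\bigr)$, so I only need to bound the Poisson integrals of $|s|^{\pm\alpha}$, and here the scaling invariance of $P_w$ is the right tool: the substitution $s\mapsto|w|s$ gives $\int P_w(s)\,|s|^\alpha\,ds=|w|^\alpha\,C_\alpha(\psi)$, where $C_\alpha(\psi)=\tfrac1\pi\int\tfrac{|\sin\psi|\,|t|^\alpha}{|e^{i\psi}-t|^2}\,dt$ depends only on $\arg w$; because $\alpha<1$ the $|t|^{\alpha-2}$ tail makes this finite, and it extends continuously in $\psi$ over the relevant arc, so $C_\alpha:=\sup_\psi C_\alpha(\psi)<\infty$ and $\int P_w(s)|s|^\alpha\,ds\le C_\alpha\,e^{\alpha\re\zeta}$. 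Likewise $\int P_w(s)|s|^{-\alpha}\,ds\le C_\alpha'\,e^{-\alpha\re\zeta}$, convergence near $t=0$ again using $\alpha<1$. Combining, $\log|1/f(\zeta)|\le\log A+B(C_\alpha+C_\alpha')\,e^{\alpha|\re\zeta|}$, i.e.\ the claim holds with $A_1=A$, $B_1=B(C_\alpha+C_\alpha')$; for $\alpha=0$ this degenerates to a uniform bound, consistent with $\phi$ bounded below.

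The main obstacle — and where the write-up has to be careful — is the borderline role of the condition $\alpha<1$, and getting the estimates sharp enough to see it. In Direction 1 the harmonic majorant can be taken as the Poisson integral of an $L^1$ boundary function precisely because $(1-z)^{-\alpha}\in H^1$ iff $\alpha<1$; for $\alpha\ge1$ there would be singular mass at the two ideal points, reflecting a genuine singular inner factor such as $\exp(\pm ie^{\pm\zeta})$, and the conclusion fails. In Direction 2 the finiteness of the Poisson integrals of $|s|^{\pm\alpha}$ rests on the same inequality. Beyond that, the routine ingredients I would want to cite are the identification of the strip with the half-plane and the disk, the fact that $\log|f_\tout|$ is the Poisson extension of $\log\phi$, and the characterization of outer functions as the zero-free $H^\infty$ functions whose log-modulus equals the Poisson integral of its boundary values.
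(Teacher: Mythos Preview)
Your argument is correct in both directions; the differences from the paper are worth noting.

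\textbf{Direction 1.} The paper proceeds by elimination on the Beurling factors: zero-freeness kills the Blaschke part, and the singular inner part is ruled out in two sub-cases --- atoms of the singular measure at $0$ or $\infty$ produce explicit factors $e^{\pm i c e^{\mp\zeta}}$ whose decay contradicts $\alpha<1$, while any remaining singular mass forces boundary zeros of $f$ by the Appendix~\ref{essential} argument, again contradicting the lower bound. Your route is different: you pass to the disk, dominate $\log|1/g|$ by $\re G$ with $G\in H^1(\mathbb D)$, and then factor $g\cdot\bigl((1/g)e^{-G}\bigr)=e^{-G}$ with the right-hand side outer. This is the Smirnov-class mechanism, and it is self-contained --- you do not need the separate analysis of singular inner functions from the Appendix. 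It also makes the role of $\alpha<1$ transparent: it is exactly the condition $(1\mp z)^{-\alpha}\in H^1(\mathbb D)$. The paper's approach, on the other hand, identifies concretely which inner factors would violate the bound.

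\textbf{Direction 2.} Here both proofs are the same computation at heart --- estimate the Poisson integral of $|s|^{\pm\alpha}$ --- but organized differently. The paper splits the $|s|^\alpha$ integral into three pieces and handles $|1/s|^\alpha$ by the substitution $s=1/t$ followed by rescaling in $\sqrt{a^2+b^2}$; you do both at once by the homogeneity substitution $s\mapsto |w|s$, reducing to a one-parameter family $C_\alpha(\psi)$. Your version is cleaner, but in a full write-up you should justify the assertion that $C_\alpha(\psi)$ stays bounded as $\psi\to 0,-\pi$: the Poisson kernel $\tfrac{|\sin\psi|}{|e^{i\psi}-t|^2}$ concentrates at $t=\pm1$ and one needs a short approximate-identity (or dominated-convergence) argument, using $\alpha<1$ for the tail, to see that $C_\alpha(\psi)\to 1$. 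The paper's explicit estimates give this uniformity by hand and also produce explicit constants.
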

\begin{proof}
 Let $f$ be a function which satisfies the estimate above.
 If $f$ had a Blaschke factor, it would have at least one zero and it contradicts the
 assumed estimate (which separates $f$ from $0$), thus $f$ has no Blaschke factor.
 Suppose that $f$ had a singular inner
 factor. If the singular measure $\mu$ corresponding to the factor has non zero measure at
 $0$ or $\infty$, then the factor contains $e^{i\mu(\{0\})e^{-\theta}}e^{-i\mu(\{\infty\})e^{+\theta}}$
 and they decay as $e^{-Be^{|\re \zeta|}}$ when $\re \zeta \to \mp \infty$, respectively,
 which contradicts the assumption ($\a < 1$). If the singular measure were non-zero and had no atom at $0, \infty$,
 it must tend to zero on the boundary as we show in Appendix \ref{essential}, which would again contradicts
 the assumed estimate. Therefore, $f$ cannot have any nontrivial inner part, namely is an outer function.
 
 Conversely, let $f$ be an outer function defined through $\phi$ as above.
 Note that the modulus of $f$ is given through the imaginary part of the integral,
 which is, by putting $e^\zeta = a + ib$,
 \[
    \im\left(\frac1\pi\int_{-\infty}^\infty \frac{ds}{1+s^2}\, \frac{1+e^\zeta s}{e^\zeta - s} \log \phi(s)\right) =
    \frac1\pi\int_{-\infty}^\infty \frac{b ds}{(a - s)^2 + b^2} \log \phi(s).
 \]
 The latter expression can be estimated as follows.
 Note that it follows from the assumption on $\phi$ that
 $|\log \phi(s)| \le B\left(|s|^\a + \left|\frac1s\right|^\a\right) + |\log A|$.
 
 We estimate these three terms separately. Let us first take $B|s|^\a$.
 We may assume that $a \ge 0$ (the case $a < 0$ is analogous).
 Then for the half-line $s < 0$, $s^2 \le (a - s)^2$ and we have
 \[
   \left|\int_{-\infty}^0 \frac{b ds}{(a - s)^2 + b^2} B|s|^\a\right|
   \le \left|\int_{-\infty}^0 \frac{b ds}{s^2 + b^2} B|s|^\a\right|.
 \]
 Next, note that $|a + s|^\a \le |a|^\a + |s|^\a$ if $0 < \a < 1$. Indeed,
 if we put $F(t) = t^\a$ for $t > 0$, then $F'(t) = \a t^{\a-1}$ which is monotonically decreasing.
 Therefore, we have $|s|^\a = F(s) - F(0) > F(a + s) - F(a) = |a + s|^\a - |a|^\a$. 
 We subdivide the other half-line into $[0,a]$ and $(a,\infty)$.
 There, we have the following estimates, respectively:
 \begin{align*}
   \left|\int_0^a \frac{b ds}{(a - s)^2 + b^2} B|s|^\a\right|
   &\le \left|\int_0^a \frac{b ds}{(a - s)^2 + b^2} B|a|^\a\right|
   \le \pi B|a|^\a, \\
   \left|\int_a^\infty \frac{b ds}{(a - s)^2 + b^2} B|s|^\a\right|
   &\le \left|\int_0^\infty \frac{b ds}{s^2 + b^2} B|a + s|^\a\right| \\
   &\le \left|\int_0^\infty \frac{b ds}{s^2 + b^2} (B(|a|^\a+|s|^\a)\right|,
 \end{align*}
 where we used $\int_{-\infty}^\infty \frac{-b ds}{s^2+b^2} = \pi$
 (recall that $b < 0$).
 Furthermore, we put $\int_{-\infty}^\infty \frac{b |s|^\a ds}{s^2+b^2} = \int_{-\infty}^\infty \frac{|b|^\a |s|^\a ds}{s^2 + 1} =: B_2|b|^\a$.
 Altogether, by taking $B_3 = \max\{2\pi B,B B_2\}$, we have
\begin{align*}
 \left|\int_{-\infty}^\infty \frac{b ds}{(a - s)^2 + b^2} B|s|^\a\right| 
 &\le 2\pi B|a|^\a + B B_2|b|^\a\\
 &\le B_3(|a|^\a + |b|^\a)\\
 &\le 2B_3(|a| + |b|)^\a\\
 &\le 4B_3\left|\sqrt{a^2 + b^2}\right|^\a\\
 &= 4B_3(e^{\re \zeta})^\a.
\end{align*}

Next we consider $B\left|\frac1s\right|^\a$.
By successive changes of variables $s = \frac1t, k = t\sqrt{a^2+b^2}$, we get
\begin{align*}
 \int_{-\infty}^\infty \frac b{(a - s)^2 + b^2}\frac{Bds}{|s|^\a} 
 &= \int_{-\infty}^\infty \frac b{(a - \frac1t)^2 + b^2}\frac{B|t|^\a dt}{t^2}\\ 
 &= \int_{-\infty}^\infty \frac{b B|t|^\a }{a^2 t^2 - 2a t + 1 + b^2 t^2}dt\\ 
 &= \frac B{(\sqrt{a^2+b^2})^\a}\int_{-\infty}^\infty \frac{b |k|^\a}{k^2 -\frac{2a}{\sqrt{a^2+b^2}} k + 1} \frac{dk}{\sqrt{a^2+b^2}}. 
\end{align*}
We claim that this integral is bounded for a fixed $0 < \a < 1$.
For $a = 0$, we have
\begin{align*}
 \left|\int_{-\infty}^\infty \frac{b |k|^\a}{k^2 -\frac{2a}{\sqrt{a^2+b^2}} k + 1} \frac{dk}{\sqrt{a^2+b^2}}\right|
 &=  \int_{-\infty}^\infty \frac{|k|^\a}{k^2 + 1}dk = B_2.
\end{align*}
If $a \neq 0$, we can write $b = \b a$ (note that $\b < 0$) and
\begin{align*}
 \left|\int_{-\infty}^\infty \frac{b |k|^\a}{k^2 -\frac{2a}{\sqrt{a^2+b^2}} k + 1} \frac{dk}{\sqrt{a^2+b^2}}\right|
 &=  \int_{-\infty}^\infty \frac{-\b|k|^\a}{k^2 - \frac2{\sqrt{1+\b^2}}+ 1}\frac{dk}{\sqrt{1+\b^2}} \\
 &=  \int_{-\infty}^\infty \frac{-\b|k|^\a}{\left(k - \frac1{\sqrt{1+\b^2}}\right)^2 + \frac{\b^2}{1+\b^2}}\frac{dk}{\sqrt{1+\b^2}} \\
 &\le  \int_{-\infty}^\infty \frac{-\b\left(|k|^\a + \left|\frac1{\sqrt{1+\b^2}}\right|^\a\right)}{k^2 + \frac{\b^2}{1+\b^2}}\frac{dk}{\sqrt{1+\b^2}} \\
 &\le  \left|\frac{\b}{\sqrt{1+\b^2}}\right|^\a \int_{-\infty}^\infty \frac{|s|^\a}{s^2 + 1}ds 
 + \left|\frac{1}{\sqrt{1+\b^2}}\right|^\a \int_{-\infty}^\infty \frac{1}{s^2 + 1}ds \\
 &\le  B_2 + \pi,
\end{align*}
where we changed the variable $k = \frac{\b s}{\sqrt{1+\b^2}}$. Hence we get
\[
 \left|\int_{-\infty}^\infty \frac b{(a - s)^2 +  b^2}\frac{Bds}{|s|^\a}\right|
 \le \frac B{(\sqrt{a^2+ b^2})^\a}(B_2 + \pi) \le B_3 e^{-\a\re \zeta}
\]
Lastly, it is immediate that
\[
  \left|\int_{-\infty}^\infty \frac{\l|\log A|}{(a - s)^2 +  b^2}ds\right|
 \le \pi|\log A|.
\]

 By putting $B_1 = \frac{4B_3}\pi$ and $A_1 = e^{B_1 + |\log A|}$ we finally obtain
 \begin{align*}
  \left|\frac1{f(\zeta)}\right| &\le 
  \exp\left(\left|\frac1\pi\int_{-\infty}^\infty \frac{ b ds}{(a - s)^2 +  b^2} \log \phi(s)\right|\right) \\
  &\le \exp(B_1(e^{\a\re \zeta} + e^{-\a\re \zeta}) + |\log A|) \\
  &\le A_1\exp\left(B_1e^{\a|\re \zeta|}\right),
 \end{align*}
 as desired.
\end{proof}

\begin{lemma}\label{lm:pl}
 Let $h$ be continuous on $\RR + i[-\pi,0],$ analytic in $\RR + i(-\pi, 0)$ and suppose that
 there is $A, B, B_1 > 0, 0 < \a < 1$ such that
 \[
  |h(\zeta)| \le Ae^{Be^{\a|\re \zeta|}},
 \]
 and on the boundary,
 \[
  |h(\theta)|, |h(\theta - \pi i)| \le Ae^{B_1 \re \zeta},
 \]
 and furthermore on the imaginary line $|h(\zeta)| \le A$ when $\zeta \in i[-\pi, 0]$.
 Then it follows that
 \[
  |h(\zeta)| \le Ae^{B_1\re \zeta}
 \]
 on the whole strip.
\end{lemma}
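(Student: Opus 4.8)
The plan is to turn this into a Phragm\'en--Lindel\"of argument on the strip $\strip$. First I would normalise away the boundary growth by setting $g(\zeta) := h(\zeta)e^{-B_1\zeta}$, which is continuous on $\RR + i[-\pi,0]$ and analytic inside. Since $|e^{-B_1\zeta}| = e^{-B_1\re\zeta}$, the hypothesis $|h(\theta)|,|h(\theta-\pi i)| \le Ae^{B_1\re\zeta}$ becomes $|g| \le A$ on the two horizontal boundary lines, while on the segment $i[-\pi,0]$ (where $\re\zeta = 0$) the hypothesis gives $|g| \le A$ directly. The growth hypothesis survives too: because $B_1|\re\zeta| \le e^{\a|\re\zeta|} + C$ for a suitable constant $C$, one gets $|g(\zeta)| \le A'e^{B'e^{\a|\re\zeta|}}$ on $\strip$ with $A' = Ae^{C}$ and $B' = B+1$. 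So it suffices to prove that such a $g$ is bounded by $A$ on all of $\strip$; multiplying back by $e^{B_1\zeta}$ then gives the claim.

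For the Phragm\'en--Lindel\"of step I would center the strip and, for a fixed exponent $\a < \b < 1$ and $\e > 0$, introduce the comparison function
\[
 \Phi_\e(\zeta) := \exp\!\left(-\e\left(e^{\b(\zeta + \frac{\pi i}{2})} + e^{-\b(\zeta + \frac{\pi i}{2})}\right)\right).
\]
For $\zeta \in \strip$ one has $|\im(\zeta + \tfrac{\pi i}{2})| \le \tfrac{\pi}{2}$, hence $\cos\!\big(\b\,\im(\zeta+\tfrac{\pi i}{2})\big) \ge \cos\tfrac{\b\pi}{2} =: c_\b > 0$ because $\b < 1$; taking real parts, $\re\!\big(e^{\b(\zeta+\frac{\pi i}{2})} + e^{-\b(\zeta+\frac{\pi i}{2})}\big) = 2\cos\!\big(\b\,\im(\zeta+\tfrac{\pi i}{2})\big)\cosh(\b\re\zeta) \ge c_\b\, e^{\b|\re\zeta|}$, so $|\Phi_\e(\zeta)| \le \exp(-\e c_\b e^{\b|\re\zeta|}) \le 1$ on the strip. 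Consequently $|g\Phi_\e| \le A$ on the two horizontal lines, while $|g(\zeta)\Phi_\e(\zeta)| \le A' e^{B'e^{\a|\re\zeta|}}\exp(-\e c_\b e^{\b|\re\zeta|}) \to 0$ as $\re\zeta \to \pm\infty$, since $\b > \a$. Applying the maximum modulus principle to $g\Phi_\e$ on the rectangles $[-R,R] + i[-\pi,0]$ (the supremum over the two vertical sides tends to $0$ as $R \to \infty$) gives $|g\Phi_\e| \le A$ on $\strip$. Finally, fixing $\zeta$ and letting $\e \downarrow 0$ gives $\Phi_\e(\zeta) \to 1$, hence $|g(\zeta)| \le A$, i.e.\ $|h(\zeta)| \le Ae^{B_1\re\zeta}$ on the whole strip.

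The main thing to get right --- and where I would be most careful --- is the choice of $\Phi_\e$: it has to stay $\le 1$ on the horizontal boundary (this is exactly what forces the centering of the strip and the constraint $\b < 1$, i.e.\ keeping the effective angular width below $\pi$), it has to decay like $\exp(-\e c_\b e^{\b|\re\zeta|})$ fast enough to swamp the prescribed growth $\exp(B'e^{\a|\re\zeta|})$ of $g$ (which is precisely why the hypothesis $\a < 1$ matters: it leaves room to pick $\b$ with $\a < \b < 1$), and it has to converge to $1$ as $\e \to 0$. An alternative implementation, closer to the Phragm\'en--Lindel\"of theorem already cited in the paper \cite[Theorem 12.9]{Rudin87}, would be to cut $\strip$ along the segment $i[-\pi,0]$ into two half-strips and conformally map each to a half-plane; then the hypothesis $|h| \le A$ on $i[-\pi,0]$ enters as the boundary value on the cut, the bound $|h| \le Ae^{B_1\re\zeta}$ on the horizontal rays becomes the (sub)linear boundary growth, and the condition $\a < 1$ again supplies the sub-$\exp(|w|)$ growth needed for the half-plane statement.
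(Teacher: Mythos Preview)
Your proof is correct and follows the same Phragm\'en--Lindel\"of route as the paper: the normalisation $g = h\,e^{-B_1\zeta}$ and the comparison function $\Phi_\e(\zeta) = \exp\!\big(-\e\big(e^{\b(\zeta+\frac{\pi i}{2})} + e^{-\b(\zeta+\frac{\pi i}{2})}\big)\big)$ with $\a<\b<1$ are exactly the ones the paper uses. The only difference is organisational: the paper runs the argument on each half-strip $\{\pm\re\zeta>0\}$ separately --- precisely your ``alternative implementation'' --- and this is where the hypothesis $|h|\le A$ on $i[-\pi,0]$ enters, as the boundary value along the cut. Your main argument instead works on the full strip in one step and never invokes that hypothesis, so it actually establishes a slightly stronger statement; the half-strip decomposition only becomes necessary if the boundary bound is $Ae^{B_1|\re\zeta|}$ rather than $Ae^{B_1\re\zeta}$, a variant the paper mentions immediately after the proof.
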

\begin{proof}
 This is actually only a slight variation of Phragm\'en-Lindel\"of principle \cite[Theorem 12.9]{Rudin87},
 which assumes that the function in question is bounded on the boundary.
 
 First we consider the region $\re \zeta > 0$. The product $h(\zeta)e^{-B_1\zeta}$ can be bounded by the 
 same function $Ae^{Be^{\a |\re \zeta|}}$, and on the boundary with $\re \zeta > 0$ we have a better
 estimate: $\left|h(\zeta)e^{-B_1 \zeta}\right| < A$.
 From the last assumption we also have
 $|h(\zeta)e^{-B_1 \zeta}| \le A$ for $\zeta \in i[-\pi, 0]$.
 
 Let $\a < \b < 1$ and $\epsilon > 0$. It is easy to see from the assumptions that the function
 \[
  h(\zeta) e^{-B_1\zeta}\exp\left(-\epsilon\left(e^{\beta \left(\zeta + \frac{\pi i}2\right)} + e^{-\beta\left(\zeta+\frac{\pi i}2\right)}\right)\right)
 \]
 is strongly decreasing when $\re \zeta \to \infty$ (the key is that $\b < 1$, see \cite[Theorem 12.9]{Rudin87}),
 hence especially is bounded on the half strip $\re \zeta > 0$.
 If we consider a large interval of $\re \zeta$, the maximum modulus principle tells that the maximum is taken on the boundary,
 but actually it occurs on the edges of the half strip if the interval is large enough.
 As $\exp\left(-\epsilon\left(e^{\beta \left(\zeta + \frac{\pi i}2\right)} + e^{-\beta\left(\zeta+\frac{\pi i}2\right)}\right)\right) < 1$,
 this implies that
 \[
 h(\zeta) e^{-B_1\zeta}\exp\left(-\epsilon\left(e^{\beta \left(\zeta + \frac{\pi i}2\right)} + e^{-\beta\left(\zeta+\frac{\pi i}2\right)}\right)\right) < A  
 \]
 for $\re \zeta > 0$, but $\epsilon$ is arbitrary, thus we obtain
 $|h(\zeta) e^{-B_1\zeta}| \le A$, which is equivalent to $|h(\zeta)| \le Ae^{B_1 \re \zeta}$.
 
 We can argue similarly for $\re \zeta < 0$  and obtain the desired bound.
\end{proof}
Although we do not use it, this proof can be easily adopted to the case where
the boundary values are bounded by $Ae^{B_1|\re \zeta|}$.

\begin{proposition}\label{pr:outer}
 Let $f$ be a bounded analytic function on $\strip$ with $\overline{f(\theta)} = f(\theta - \pi i)$
 and suppose that there are numbers $A, B \ge 0$, $0 \le \a < 1$ such that
 \[
  \left|\frac1{f(\zeta)}\right| \le Ae^{Be^{\a|\re \zeta|}}.
 \]
 Then $\mfbar\shift$ is essentially self-adjoint.
\end{proposition}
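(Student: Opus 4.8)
The goal is to prove that both deficiency indices vanish, i.e.\ $\ker(\shift M_f \mp i) = \{0\}$. So fix $\xi \in \ker(\shift M_f \mp i)$: by definition $\xi \in L^2(\RR)$, $f\xi \in \hardy$, and $(f\xi)(\theta - \pi i) = \pm i\,\xi(\theta)$. By Lemma \ref{lm:outer} the hypothesis forces $f$ to be outer, in particular zero-free on $\strip$, so $\xi = (f\xi)/f$ is analytic on $\strip$; and since $1/f$ is analytic and satisfies the stated bound, it is locally bounded up to the two boundary lines.

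Next I would manufacture a zero-free analytic companion solving the eigenvalue equation. Let $g$ be the outer function on $\strip$ with boundary moduli $|g(\theta)| = 1$ on $\RR$ and $|g(\theta - \pi i)| = |f(\theta)|^{-1}$ on $\RR - \pi i$ (legitimate, since $\log(1/|f|)$ has finite Poisson mass because $f$ is outer). Writing $g^\rho(\zeta) := \overline{g(\bar\zeta - \pi i)}$ and using the interior form $f(\zeta) = \overline{f(\bar\zeta - \pi i)}$ of the symmetry hypothesis, one checks that $g\,g^\rho$ and $1/f$ are both outer (the latter because $f$ is) with the same boundary modulus, hence coincide up to a sign; feeding this into $f(\theta - \pi i)g(\theta - \pi i) = \overline{f(\theta)\,g^\rho(\theta)}$ and using $|g(\theta)| = 1$ gives $f(\theta - \pi i)g(\theta - \pi i) = \pm g(\theta)$. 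Thus $g_{\pm i}(\zeta) := e^{c\zeta}g(\zeta)$ with the appropriate $c = \pm\tfrac12$ solves $f(\theta - \pi i)g_{\pm i}(\theta - \pi i) = \pm i\,g_{\pm i}(\theta)$ and, crucially, $|g_{\pm i}(\theta)| = e^{\mp\theta/2}$ on $\RR$. The Poisson-kernel estimates from the proof of Lemma \ref{lm:outer} give $|1/g_{\pm i}(\zeta)| \le A_1 e^{|\re\zeta|/2}$ (indeed $1/g$ is bounded, since $\log(1/|f|)$ is bounded below) and $|g_{\pm i}(\zeta)| \le A_1 e^{B_1 e^{\alpha|\re\zeta|}}$.

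Dividing the eigenvalue equation for $\xi$ by that for $g_{\pm i}$ shows $h := \xi/g_{\pm i}$ is analytic on $\strip$ with $h(\theta) = h(\theta - \pi i)$ in the local-$L^2$ sense (convergence of the boundary values uses $f\xi \in \hardy$ together with the local boundedness of $1/f$ and $1/g_{\pm i}$), so Proposition \ref{pr:period} extends it to a $\pi i$-periodic entire function $H$. Now I would mollify as in the proofs of Lemmas \ref{lm:pl-l2} and \ref{lm:pl}: for $\eta \in L^2$ supported in $[-R,R]$, put $H_\eta(\zeta) := \int \overline{\eta(\theta)}\,H(\zeta + \theta)\,d\theta$, an entire $\pi i$-periodic function. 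Since $|H(\theta)| = |\xi(\theta)|\,e^{\pm\theta/2}$ and $\xi \in L^2(\RR)$, one gets $\|H\|_{L^2([\theta-R,\theta+R])} \le e^{R/2}\|\xi\|\,e^{\mp\theta/2}$, hence $|H_\eta(\theta)| = |H_\eta(\theta - \pi i)| \le C e^{\mp\theta/2}$ (the two lines agreeing by periodicity) and $|H_\eta(\zeta)|$ is bounded on $i[-\pi,0]$; estimating $\|\xi(\,\cdot + i\lambda)\| \le \|f\xi\|_{\hardy}\sup|1/f|$ in the interior gives $|H_\eta(\zeta)| \le A_2 e^{B_2 e^{\alpha|\re\zeta|}}$ with $\alpha < 1$. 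Hence $H_\eta$ satisfies the hypotheses of Lemma \ref{lm:pl} (directly for eigenvalue $-i$, and in the $e^{B_1|\re\zeta|}$ variant noted after that lemma for eigenvalue $+i$), so $|H_\eta(\zeta)| \le C e^{\mp\re\zeta/2}$ on the strip and, by periodicity, on all of $\CC$; Lemma \ref{lm:polynom} with $N = 2$ then forces $H_\eta(\zeta) = a_0 e^{-2\zeta} + a_1 + a_2 e^{2\zeta}$, and the bound $C e^{\mp\re\zeta/2}$ kills $a_0, a_1, a_2$. Thus $H_\eta \equiv 0$ for every $\eta$, so $H \equiv 0$ and $\xi = g_{\pm i}H = 0$; both deficiency indices vanish and $\mfbar\shift$ is essentially self-adjoint.

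The step I expect to require the most care is the construction of $g$: getting the \emph{phase} of the eigenvalue equation right (not merely its modulus), so that $|g_{\pm i}(\theta)|$ is exactly $e^{\mp\theta/2}$ on $\RR$ — this precise exponential weight is what makes the boundary estimate on $H_\eta$ match the Phragm\'en--Lindel\"of input of Lemma \ref{lm:pl}, and it is also where the condition $\alpha < 1$ is indispensable. The reflection symmetry $f(\zeta) = \overline{f(\bar\zeta - \pi i)}$ is the tool that makes this work, but one must be careful with branches of $\log$ and with the (harmless) sign ambiguity in $g\,g^\rho = \pm 1/f$. A minor secondary point is the local-$L^2$ convergence of boundary values needed for Proposition \ref{pr:period}, which rests on the local boundedness of $1/f$ supplied by the hypothesis.
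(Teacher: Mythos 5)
Your argument is correct and follows essentially the same route as the paper: construct an auxiliary zero-free solution $g$ of the eigenvalue equation with $|g(\theta)|=1$ on $\RR$ and controlled growth of $1/(fg)$, divide to obtain a $\pi i$-periodic entire function, mollify against compactly supported $\eta$, apply the Phragm\'en--Lindel\"of Lemma \ref{lm:pl}, and conclude by a Liouville-type argument. The only inessential differences are that the paper builds $g$ explicitly as the reciprocal of the Poisson integral of $\log\phi$ over the negative half-line (which is precisely your outer function with prescribed boundary moduli, the sign ambiguity being resolved as you indicate by the reflection symmetry) and finishes with a direct Cauchy-estimate bound $|f_\eta(\tfrac{\log z}{2})|\le C|z|^{1/2}$ rather than via Lemma \ref{lm:polynom}.
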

\begin{proof} 
 By the assumption and Lemma \ref{lm:outer},
 $f$ can be represented as a Poisson integral of a kernel $\phi(s) = |f(\log s)|$:
 \[
  f(\zeta) = \exp\left(\frac i\pi\int_{-\infty}^\infty \frac{ds}{1+s^2}\, \frac{1+e^\zeta s}{e^\zeta - s} \log \phi(s) \right).
 \]
 By the symmetry of $f$, the kernel satisfies $\phi(s) = \phi(-s)$.
 Note that, as $\zeta \in \RR + i(-\pi,0)$, $\im e^\zeta < 0$. This should be kept in mind
 when we consider the boundary value $\im e^\zeta \to 0$. We have
 \begin{align*}
  f(\zeta -\pi i) &= \exp\left(\frac i\pi\int_{-\infty}^\infty \frac{ds}{1+s^2}\, \frac{1-e^\zeta s}{-e^\zeta - s} \log\phi(s)\right) \\
  &= \exp\left(-\frac i\pi\int_{-\infty}^\infty \frac{ds}{1+s^2}\, \frac{1+e^\zeta s}{e^\zeta - s} \log\phi(s)\right)
 \end{align*}
 and this time, the relevant boundary value is considered in the sense that $\im e^\zeta > 0$.
 
 As previous cases, let us find a solution to the eigenvalue equation. We claim that
 \[
  g(\zeta) := \exp\left(-\frac i\pi\int_{-\infty}^0 \frac{ds}{1+s^2}\, \frac{1+e^\zeta s}{e^\zeta - s} \log\phi(s)\right).
 \]
 satisfies the eigenvalue equation. First note that, 
 for $s < 0$, the integrand is continuous in $\zeta$ around $\RR$ as $\re e^\zeta > 0$,
 therefore, it is irrelevant whether $\zeta$ approaches to $\RR$
 from above or below. Then, we have
 \begin{align*}
  g(\zeta - \pi i) &= \exp\left(-\frac i\pi\int_{-\infty}^0 \frac{ds}{1+s^2}\, \frac{1-e^\zeta s}{-e^\zeta - s} \log\phi(s)\right) \\
  &= \exp\left(+\frac i\pi\int_0^{\infty} \frac{ds}{1+s^2}\, \frac{1+e^\zeta s}{e^\zeta - s} \log\phi(s)\right),
 \end{align*}
 and here $\im e^\zeta > 0$. If we consider the product of $f$ and $g$, it holds that
 \[
  f(\zeta - \pi i)g(\zeta - \pi i) = \exp\left(-\frac i\pi\int_{-\infty}^0 \frac{ds}{1+s^2}\, \frac{1+e^\zeta s}{e^\zeta - s} \log\phi(s)\right)
  = g(\zeta),
 \]
 because the approach of $\zeta $ in the integral in $[0,\infty)$ coincide, therefore, they cancel each other,
 while for the integral in $(-\infty, 0)$ the direction of approach is irrelevant, and we obtain the equality.
 
 Now, let us assume that there were a nontrivial vector $\xi \in \ker \left(\shift M_f - i\right)$
 (the case with $\ker \left(\shift M_f + i\right)$ is analogous).
 As explained in Section \ref{ideas}, we consider the function $\frac{\xi(\zeta)}{g(\zeta)}e^{\frac\zeta 2}$.
 We have $|g(\theta)| = 1$, as the support of the integral is concentrated on $s < 0$.
 By Lemma \ref{pr:period}, $\frac{\xi(\zeta)}{g(\zeta)}e^{\frac\zeta 2}$ extends to a periodic function, and on the boundary
 $\frac{\xi(\theta)}{g(\theta)}$ is $L^2$, hence $\frac{\xi(\theta)}{g(\theta)}e^{\frac\zeta 2}$ is locally $L^2$.
 By assumption, $f(\zeta)\xi(\zeta) \in \hardy$, while by Lemma \ref{lm:outer}
 it is immediate that there are $A_1, B_1 > 0$ such that
 \[
  \left|\frac{e^{\frac\zeta 2}}{f(\zeta)g(\zeta)}\right| =
  \left|\exp\left(-\frac i\pi\int_0^\infty \frac{ds}{1+s^2}\, \frac{1+e^\zeta s}{e^\zeta - s} \log\phi(s)\right)\exp\left(\frac\zeta 2\right)\right|
  < A_1e^{B_1e^{\a|\re \zeta|}}.
 \]
 
 Let $\eta(\theta)$ be an arbitrary $L^2$-function with the support contained in $[-R,R]$.
 Then it holds that
 \begin{align*}
 \left|\int d\theta\, \overline{\eta(\theta)} \frac{\xi(\theta + \zeta)}{g(\theta + \zeta)}e^{\frac\zeta 2}\right|
  =&\left|\int d\theta\, \overline{\eta(\theta)} \frac{f(\theta + \zeta)\xi(\theta + \zeta)}{f(\theta + \zeta)g(\theta + \zeta)}e^{\frac\zeta 2}\right| \\
  \le & A_1e^{B_1e^{\a(|\re \zeta| + R)}}\left|\int d\theta\, \overline{\eta(\theta)}f(\theta + \zeta)\xi(\theta + \zeta)\right| \\
  \le & A_1e^{B_1e^{\a R}\cdot e^{\a|\re \zeta|}}\cdot \|\eta\|\cdot \|f\xi\|_{\hardy},
 \end{align*}
 and on $\im \zeta = 0$, we have $\left|\int d\theta\, \overline{\eta(\theta)} \frac{\xi(\theta + \zeta)}{g(\theta + \zeta)}e^{\frac{\theta+\zeta}2}\right| \le e^{\frac{R + \re \zeta} 2}\|\eta\|\cdot\|\xi\|$
 since $\|g(\zeta)\| = 1$. Therefore, the analytic function in $\zeta$
 \[
  f_\eta(\zeta) := \int d\theta\, \overline{\eta(\theta)} \frac{\xi(\theta + \zeta)}{g(\theta + \zeta)}e^\frac\zeta 2
 \]
 is $\pi i$-periodic and hence
 satisfies the conditions of Lemma \ref{lm:pl}, and it follows that
 it is bounded on the whole strip by $e^{\frac{R + \re \zeta} 2}\|\eta\|\cdot\|\xi\|$.
 As in Lemma \ref{lm:polynom}, using the periodicity and the bound, there is a function $f_\eta(\frac{\log z}2)$ on $\CC\setminus\{0\}$
 with $\left|f_\eta(\frac{\log z}2)\right| \le |z|^\frac12 e^{\frac{R} 2}\|\eta\|\cdot\|\xi\|$,
 therefore, it must be constantly zero \cite[Theorem 10.26]{Rudin87}.
 Namely,  $f_\eta(\zeta) = 0$ for any $\eta$ supported in $[-R,R]$.
 
 As $R$ is arbitrary, it follows that $\xi = 0$. Namely, $\ker \left(\shift M_f - i\right) = \{0\}$.
 The other eigenspace can be argued similarly.
\end{proof}

The outer functions which comply with this condition are not generic, as $f$ may have zero
on the boundary. Yet, as for the behavior at $\re \zeta \to \pm \infty$, it is not too restrictive,
since e.g.\! $\frac1{e^{-\zeta^\a} + e^{\zeta^\a}}$ with $\a \ge 1$ cannot satisfy the $L^1$-condition,
as mentioned at the beginning of this section.

For $f$ as in Proposition \ref{pr:outer}, $\mfbar\shift$ is essentially self-adjoint,
in other words, $\overline{\mfbar\shift}$ is self-adjoint. Yet in general, the domain of the closure
of an operator is not easy to describe. For $\mfbar\shift$ as above, we have an explicit
description of the closure.

\begin{proposition}
 Let $f \in H^\infty(\SS_{-\pi,0})$ be an outer function.
 Then $\mfbar\shift$ has a self-adjoint extension:
 $\overline{\mfbar\shift} \subset M_{\overline{f_-}}\shift M_{f_-}$, where
 \[
  f_-(\theta) := \exp\left(\frac i\pi\int_{-\infty}^0 \frac{ds}{1+s^2}\, \frac{1+e^\theta s}{e^\theta - s} \log\phi(s)\right).
 \]
 Accordingly, $n_+(\mfbar\shift) = n_-(\mfbar\shift)$.
 
 In particular, if $f$ is as in Proposition \ref{pr:outer},
 then this extension is actually equal to the closure $\overline{\mfbar\shift}$ and
 $n_\pm(\mfbar\shift) = 0$. 
\end{proposition}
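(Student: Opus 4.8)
The plan is to reuse the factorization trick of Section~\ref{zerosoutside}, with the product decomposition of a Blaschke function replaced by the splitting of the Poisson integral that represents an outer function. Since $\frac{\log\phi(s)}{1+s^2}\in L^1(\RR)$ with $\phi=|f|$ on the boundary, the integral defining $f_-$ (the part of the outer representation over $s\in(-\infty,0)$) converges for every $\zeta$ in the open strip, and its integrand is singular only where $e^\zeta=s<0$, i.e.\ on the lines $\im\zeta\in\pi\ZZ$; hence $f_-$ is analytic on $\SS_{-\pi,0}$. On $\RR$ the integrand is real for $s<0$, so $|f_-(\theta)|=1$ a.e. Writing $f=f_-f_+$, where $f_+(\zeta):=\exp\left(\frac i\pi\int_0^\infty\frac{ds}{1+s^2}\frac{1+e^\zeta s}{e^\zeta-s}\log\phi(s)\right)$ is the complementary factor (this is legitimate because both half-integrals converge and $\exp$ turns the sum into a product), a change of variables $s\mapsto-s$ together with the symmetry $\phi(s)=\phi(-s)$ (forced by the standing assumption $\overline{f(\theta)}=f(\theta-\pi i)$) yields the boundary identity $f_+(\theta-\pi i)=\overline{f_-(\theta)}$, equivalently
\[
 f(\theta-\pi i)=f_-(\theta-\pi i)\,\overline{f_-(\theta)}\qquad\text{a.e.}
\]
Since $|f_+(\theta-\pi i)|=1$ and $|f(\theta-\pi i)|=|f(\theta)|\le\|f\|_\infty$, this gives $|f_-(\theta-\pi i)|\le\|f\|_\infty$; as $f_-$ is, by construction, an outer function with bounded boundary modulus, it follows that $f_-\in H^\infty(\SS_{-\pi,0})$.

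With these facts the extension statement is formal, exactly as in Section~\ref{zerosoutside}. Because $f_-\in H^\infty(\SS_{-\pi,0})$, the operator $M_{f_-}$ maps $\dom(\shift)=\hardy$ into itself, and for $\xi\in\hardy$ the vector $\shift M_{f_-}\xi=\shift(f_-\xi)$ has boundary value $f_-(\theta-\pi i)\xi(\theta-\pi i)$; applying $M_{\overline{f_-}}$ and using the boundary identity gives $\overline{f_-(\theta)}f_-(\theta-\pi i)\xi(\theta-\pi i)=f(\theta-\pi i)\xi(\theta-\pi i)=(\mfbar\shift\xi)(\theta)$. Moreover $\dom(M_{\overline{f_-}}\shift M_{f_-})=\{\xi:f_-\xi\in\hardy\}\supseteq\hardy=\dom(\mfbar\shift)$, so $\mfbar\shift\subseteq M_{\overline{f_-}}\shift M_{f_-}$. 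Since $|f_-(\theta)|=1$ a.e., $M_{f_-}$ is unitary on $L^2(\RR)$ and $M_{\overline{f_-}}=M_{f_-}^*=M_{f_-}^{-1}$, so $M_{\overline{f_-}}\shift M_{f_-}$ is a unitary conjugate of the self-adjoint operator $\shift$, hence self-adjoint. Therefore $\overline{\mfbar\shift}\subseteq M_{\overline{f_-}}\shift M_{f_-}$ is a self-adjoint extension and $n_+(\mfbar\shift)=n_-(\mfbar\shift)$.

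For the last assertion, if $f$ in addition satisfies $\left|\frac1{f(\zeta)}\right|\le Ae^{Be^{\a|\re\zeta|}}$, then $\mfbar\shift$ is essentially self-adjoint by Proposition~\ref{pr:outer}, so $\overline{\mfbar\shift}$ is itself self-adjoint; a self-adjoint operator has no proper symmetric extension, so the inclusion $\overline{\mfbar\shift}\subseteq M_{\overline{f_-}}\shift M_{f_-}$ between two self-adjoint operators must be an equality, and $n_\pm(\mfbar\shift)=0$ is merely a restatement of essential self-adjointness.

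The main obstacle is the first paragraph: verifying that $f_-$ is a bounded analytic function on the strip and that the boundary relation $f(\theta-\pi i)=f_-(\theta-\pi i)\overline{f_-(\theta)}$ holds almost everywhere, where one has to keep track of which non-tangential boundary value of the (singular) Poisson integral is taken on the line $\RR-\pi i$ — the same bookkeeping already carried out for the function $g$ in the proof of Proposition~\ref{pr:outer}. Once this is settled, everything else is purely operator-theoretic and mirrors Section~\ref{zerosoutside}.
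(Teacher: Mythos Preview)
Your proof is correct and follows essentially the same approach as the paper: both split the outer representation into the half-line integrals $f_\pm$, verify that $f_-$ is inner on $\RR$ (so $M_{f_-}$ is unitary) and bounded analytic on the strip with $f_+(\theta-\pi i)=\overline{f_-(\theta)}$, and then use the inclusion $M_{f_-(\cdot-\pi i)}\shift\subset\shift M_{f_-}$ exactly as in Section~\ref{zerosoutside} to exhibit $M_{\overline{f_-}}\shift M_{f_-}$ as a self-adjoint extension. Your justification for $f_-\in H^\infty(\SS_{-\pi,0})$ via the boundary modulus bound is slightly more explicit than the paper's bare assertion, and your final paragraph correctly identifies the only point requiring care.
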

\begin{proof}
 The function $f_-$ is well-defined, as the integral is over the negative half-line
 and the integrand is $L^1$. The integral is real, hence $f_-(\theta)$ has modulus $1$.
 Namely, the operator $M_{f_-}$ is unitary.
 Moreover, $f_-$ can be analytically continued to a bounded function on $\strip$ and the boundary
 value at $\RR - \pi i$ is given by
 \[
  f_-(\theta - \pi i) = \exp\left(-\frac i\pi\int_0^\infty \frac{ds}{1+s^2}\, \frac{1+e^\theta s}{e^\theta - s} \log\phi(s)\right),
 \]
 where $\theta$ approaches to the real line from above.
 On this side of the strip, we have
 \[
  \frac{f(\theta - \pi i)}{f_-(\theta - \pi i)}
  = \exp\left(-\frac i\pi\int_{-\infty}^0 \frac{ds}{1+s^2}\, \frac{1+e^\theta s}{e^\theta - s} \log\phi(s)\right)
  = \overline{f_-(\theta)}.
 \]
 Now we have the following inclusion of symmetric operators:
 \[
  \mfbar\shift = M_{\overline{f_-}}M_{f_-(\cdot - \pi i)}\shift \subset M_{\overline{f_-}}\shift M_{f_-},
 \]
 as in Section \ref{zerosoutside}. The last expression is manifestly self-adjoint
 as $M_{f_-}$ is unitary.

 If $f$ satisfies the assumption of Proposition \ref{pr:outer},
 we know that $\mfbar \shift$ is essentially self-adjoint,
 hence the conclusion follows.
\end{proof}

\subsection{A simple essential self-adjointness criterion: perturbation arguments}\label{perturbation}

\begin{proposition}\label{pr:perturbation}
 Assume that there is $r > 0$ such that $\left|\overline{f(\theta)} - r\right| \le r$ for $\theta \in \RR$.
 Then the operator $\mfbar\Delta$ is essentially self-adjoint.
\end{proposition}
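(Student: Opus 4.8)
The plan is to exhibit $\mfbar\shift$ as a symmetric, relatively bounded perturbation of the self-adjoint operator $r\shift$, with relative bound at most $1$, and then invoke W\"ust's theorem \cite[Theorem X.14]{RSII}.

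First I would unpack the hypothesis. For $r>0$ the condition $\left|\overline{f(\theta)}-r\right|\le r$ is the same as $\left|r-\overline{f(\theta)}\right|\le r$, so the function $g:=r-f$ belongs to $H^\infty(\SS_{-\pi,0})$, its boundary values satisfy $|\overline{g(\theta)}|\le r$ for almost every $\theta\in\RR$, and — since $r$ is real and $\overline{f(\theta)}=f(\theta-\pi i)$ — one has $\overline{g(\theta)}=r-\overline{f(\theta)}=r-f(\theta-\pi i)=g(\theta-\pi i)$, so $g$ satisfies the standing hypotheses imposed on $f$. On the common domain $\hardy=\dom(\shift)$ we have the pointwise identity
\[
 \left(\mfbar\shift\,\xi\right)(\theta)=\overline{f(\theta)}\,\xi(\theta-\pi i)=r\,\xi(\theta-\pi i)-\overline{g(\theta)}\,\xi(\theta-\pi i),
\]
that is, $\mfbar\shift=r\shift-M_{\overline g}\shift$ as operators on $\hardy$. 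Here $A:=r\shift$ is self-adjoint on $\hardy$ (a nonzero real multiple of the self-adjoint operator $\shift$), and $M_{\overline g}\shift=r\shift-\mfbar\shift$ is symmetric on $\hardy$, either as a difference of the symmetric operators $r\shift$ and $\mfbar\shift$ (Proposition \ref{pr:symmetric} and self-adjointness of $\shift$), or directly by Proposition \ref{pr:symmetric} applied to $g$.

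Next I would set $B:=-M_{\overline g}\shift$, so that $\mfbar\shift=A+B$ on $\hardy$, with $B$ symmetric and $\dom(B)=\hardy=\dom(A)$. Since $M_{\overline g}$ is a bounded multiplication operator with $\|M_{\overline g}\|\le r$ by the above, for every $\xi\in\hardy$ we obtain
\[
 \|B\xi\|=\|M_{\overline g}\shift\xi\|\le r\,\|\shift\xi\|=\|A\xi\|,
\]
so $B$ is $A$-bounded with relative bound at most $1$ and additive constant $0$. W\"ust's theorem then yields that $A+B=\mfbar\shift$ is essentially self-adjoint on $\dom(A)=\hardy$, which is exactly the assertion.

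I do not anticipate a genuine obstacle. The two points that require care are: (i) confirming that the subtracted operator is really symmetric — which is immediate once one observes that $g=r-f$ inherits the relation $\overline{g(\theta)}=g(\theta-\pi i)$ from $f$ because $r$ is real; and (ii) using the \emph{borderline} perturbation result rather than Kato--Rellich, since the relative bound can be exactly $1$ (for instance whenever $\overline{f(\theta)}$ attains the value $0$), so that no Neumann-series or Kato--Rellich argument is available and W\"ust's theorem gives only essential self-adjointness — which is all that Proposition \ref{pr:perturbation} claims.
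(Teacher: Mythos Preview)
Your proposal is correct and matches the paper's own proof essentially line for line: the paper likewise writes $\mfbar\Delta = r\Delta + (\mfbar\Delta - r\Delta)$, bounds $\|(\mfbar\Delta - r\Delta)\xi\|\le\|r\Delta\xi\|$ directly from the hypothesis $|\overline{f(\theta)}-r|\le r$, and invokes W\"ust's theorem \cite[Theorem X.14]{RSII}. Your extra remarks (explicitly verifying symmetry of the perturbation via $\overline{g(\theta)}=g(\theta-\pi i)$, and explaining why Kato--Rellich is not available at relative bound $1$) are correct refinements but not a different approach.
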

\begin{proof}
 We use the W\"ust theorem \cite[Theorem X.14]{RSII}.
 It is obvious that $r\Delta$ is self-adjoint.
 Now, as $\mfbar\Delta = r\Delta + (\mfbar\Delta - r\Delta)$, if we show that
 $\|(\mfbar\Delta - r\Delta)\xi\| \le \|r\Delta \xi\|$, the desired essential self-adjointness follows.
 
 This inequality is a direct consequence of the assumption:
 \begin{align*}
  \|(\mfbar\Delta - r\Delta)\xi\|^2 &=
  \int d\theta\, \left|\left(\overline{f(\theta)} - r\right)\xi(\theta - \pi i)\right|^2 \\
  &\le   r^2\int d\theta\, |\xi(\theta - \pi i)|^2 \\
  &= \|r\Delta\xi\|^2.
 \end{align*}
\end{proof}
As it holds that $\overline{f(\theta)} = f(\theta - \pi i)$, the condition of the Proposition is
equivalent to the existence of $r$ such that $|f(\theta - \pi i) - r| \le r$.

The condition can be rephrased as follows:
There is $r > 0$ such that the complex number $\overline{f(\theta)}$ is in the disk $\{z\in \CC: |z-r| \ge r\}$.
Especially, if there is $\epsilon > 0$ such that
$-\frac\pi{2} + \epsilon < \arg \overline{f(\theta)} < \frac\pi{2} - \epsilon$, this condition is satisfied
(as $f$ is bounded).

\begin{figure}[ht]
    \centering
\begin{tikzpicture}[line cap=round,line join=round,>=triangle 45,x=1.0cm,y=1.0cm]
\clip(-4.54,0.38) rectangle (3.52,7.68);
\draw (-3.2,0.38) -- (-3.2,7.68);
\draw [domain=-4.54:3.82] plot(\x,{(-14.8-0*\x)/-3.7});
\draw (2.34,4.72) node[anchor=north west] {$\mathrm{Re} f$};
\draw (-3.16,7.68) node[anchor=north west] {$\mathrm{Im} f$};
\draw (1.4,3.98) node[anchor=north west] {$2r$};
\draw(-0.54,4) circle (2.66cm);
\draw [shift={(-3.2,4)}] (0,0) --  plot[domain=1.22:1.57,variable=\t]({1*1.82*cos(\t r)+0*1.82*sin(\t r)},{0*1.82*cos(\t r)+1*1.82*sin(\t r)}) -- cycle ;
\draw [shift={(-3.2,4)}] (0,0) --  plot[domain=4.71:5.06,variable=\t]({1*1.8*cos(\t r)+0*1.8*sin(\t r)},{0*1.8*cos(\t r)+1*1.8*sin(\t r)}) -- cycle ;
\draw [domain=-3.2000000000000006:3.8200000000000087] plot(\x,{(--7.94--1.71*\x)/0.62});
\draw [domain=-3.2000000000000006:3.8200000000000087] plot(\x,{(-2.98-1.71*\x)/0.62});
\draw (-3.02,2.26) node[anchor=north west] {$\epsilon$};
\draw (-3,6.46) node[anchor=north west] {$\epsilon$};
\end{tikzpicture}
\caption{$f$ should take the value inside the circle.}
    \label{fig:perturbation}
\end{figure}
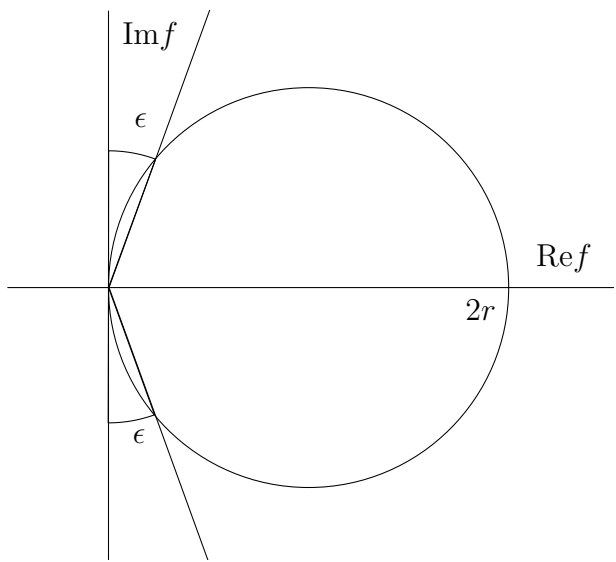

\begin{example}
For $0 < \alpha < 1$, let us consider
\[
 f(\zeta) := \frac1{\cosh\alpha(\zeta+\frac{\pi i} 2)}
 = \frac1{\cosh\alpha\zeta\cdot\cos\frac{\alpha\pi}2 + i\sinh\alpha\zeta\cdot\sin\frac{\alpha\pi}2}.
\]
Note that, as $0 < \alpha < 1$, the denominator is separated from $0$ when
$-\pi < \im \zeta < 0$, and hence $f$ is bounded and analytic.
As $\cosh\frac\alpha{2}\theta > |\sinh\frac\alpha{2}\theta|$ for $\theta \in \RR$,
it holds that $\left|\arg \overline{f(\theta)}\right| < \arg \left(\cos\frac{\alpha\pi}2 + i\sin\frac{\alpha\pi}2\right)$,
which is strictly smaller than $\frac\pi{2}$.
Therefore, we can apply Proposition \ref{pr:perturbation} and $\mfbar\Delta$ is essentially self-adjoint.

Next, for $0 < \beta$, we take
\[
 f(\zeta) := -i\left(\frac1{\zeta - \beta i} - \frac1{\zeta + (\pi +\beta)i}\right)
 = \frac{(\pi+2\beta)}{(\zeta - \beta i)(\zeta + (\pi + \beta)i)}.
\]
It holds that $(\theta - \beta i)(\theta + (\pi + \beta)i) = \theta^2+\beta(\pi + \beta) + \pi \theta i$,
therefore the ratio $|\pi \theta i|/(\theta^2+\beta(\pi + \beta))$ is bounded and
$\arg \overline{f(\theta)}$ is separated from $\frac\pi{2}$, and hence Proposition \ref{pr:perturbation} applies.
\end{example}

On the other hand, it is easy to see that a similar function
\[
 f(\zeta) := \frac1{\zeta - \beta i} + \frac1{\zeta + (\pi +\beta)i}
 = \frac{2\zeta + \pi i}{(\zeta - \beta i)(\zeta + (\pi + \beta)i)}
\]
does not satisfy the condition of Proposition \ref{pr:perturbation}.
Indeed, this function has a zero at $\zeta = -\frac{\pi i}2$ and we know from the results
in Section \ref{finiteblaschke} (see Theorem \ref{th:all} for a more precise argument)
that the operator $\mfbar\Delta$ cannot be essentially self-adjoint.
This non-example tells that the information of zeros of $f(\zeta)$ in the strip
$-\pi < \im \zeta < 0$ which satisfies $\overline{f(\theta)} = f(\theta - \pi i)$
is partially encoded in the behavior of $f(\theta), \theta \in \RR$.
Of course, it follows from the uniqueness of the Blaschke factor, but the relation is rather implicit.
Here, a certain estimate of $\arg f$ can exclude the existence of zeros.

\section{Summary: computing deficiency indices}\label{summary}

\subsection{Composite case}
We have considered three classes of functions $f$ and computed the deficiency indices
of the corresponding operator $\mfbar\shift$: Blaschke products, singular inner functions and
outer functions. A general function $f \in H^\infty(\SS_{-\pi,0})$ is the product of
three factors in $f \in H^\infty(\SS_{-\pi,0})$ as in Section \ref{ideas}:
\[
 f(\zeta) = f_\blaschke(\zeta) f_\tin(\zeta) f_\tout(\zeta),
\]
where we assumed that $c=1$, as a constant does not affect the domain.
Here we assume that $f_\tout$ satisfies the estimate of Proposition \ref{pr:outer}.

First let us suppose that $f_\tin$ is nontrivial.
Then, as we saw in Section \ref{generic}, for an arbitrary $m$ we can take a sequence of proper extensions
\begin{align*}
 \mfbar\shift &\;\;=\;\; \left(M_{f_\tin^\frac1{2m}}^*\right)^{2m}M_{f_\blaschke}^*M_{f_\tout}^*\shift \\
 &\;\;\subset\;\; \left(M_{f_\tin^\frac1{2m}}^*\right)^mM_{f_\blaschke}^*M_{f_\tout}^*\shift \left(M_{f_\tin^\frac1{2m}}\right)^m \\
 &\;\;=\;\; M_{f_\tin^\frac12}^*M_{f_\blaschke}^*M_{f_\tout}^*\shift M_{f_\tin^\frac12}.
\end{align*}
And the expression is still symmetric. This is possible only if $n_\pm(\mfbar\shift) = \infty$.

Next we assume that $f_\tin = 1$.
With the assumption of Proposition \ref{pr:outer},
we saw in Section \ref{outer} that there is $f_- \in H^\infty(\SS_{-\pi,0})$ such that
$\overline{\M_{f_\tout}^*\shift} = M_{f_-}^*\shift M_{f_-}$. Since $M_{f_\blaschke}$ is unitary,
we have
\[
 \overline{\mfbar\shift} = M_{f_\blaschke}^*\overline{M_{f_\tout}^*\shift} = M_{f_\blaschke}^*M_{f_-}^*\shift M_{f_-}
 = M_{f_-}^*M_{f_\blaschke}^*\shift M_{f_-}.
\]
Therefore, the question of deficiency indices is reduced to that of $f_\blaschke$,
as $M_{f_-}$ is unitary. This case has been studied in Sections \ref{finiteblaschke}, \ref{infiniteblaschke}.

Let us summarize the results.
\begin{theorem}\label{th:all}
Let $f \in H^\infty(\SS_{-\pi,0})$ and $\overline{f(\theta)} = f(\theta -\pi i)$ and
\[
 f(\zeta) = f_\blaschke(\zeta) f_\tin(\zeta) f_\tout(\zeta)
\]
be its Beurling factorization. Assume that $f_\tout$ satisfies the estimate of Proposition \ref{pr:outer}.
Then the deficiency indices of the operator $\mfbar\shift$ is
\begin{itemize}
 \item $(\infty, \infty)$ if $f_\tin$ is nontrivial.
 \item $(\infty, \infty)$ if $f_\blaschke$ has infinitely many factors.
 \item $(m,m)$ if $f_\tin$ is trivial and $f_\blaschke$ has $2m$ factors.
 \item $(m+1,m)$ if $f_\tin$ is trivial and $f_\blaschke$ has $2m+1$ factors.
\end{itemize}
\end{theorem}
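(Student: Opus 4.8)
The plan is to deduce the theorem from the three building-block computations already carried out (Sections~\ref{finiteblaschke}, \ref{infiniteblaschke}, \ref{inner}, \ref{outer}) by stripping off, one at a time, the factors of the Beurling factorization that do not change the deficiency indices. Two elementary facts about symmetric operators will be used throughout. First, $n_\pm(\mfbar\shift) = n_\pm(\overline{\mfbar\shift})$, and more generally deficiency indices are invariant under conjugation by a unitary. Second, by von Neumann's parametrization a \emph{proper} closed symmetric extension of a closed symmetric operator $A$ strictly lowers \emph{both} $n_+(A)$ and $n_-(A)$ by the same positive integer; hence a chain of $N$ proper symmetric extensions forces $n_\pm(A) \ge N$, and an infinite such chain forces $n_\pm(A) = \infty$. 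This second fact is exactly what converts the towers of extensions already exhibited into statements about deficiency indices.

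Write $f = f_\blaschke f_\tin f_\tout$ (the unimodular constant being irrelevant for domains); by uniqueness of the factorization each factor again satisfies the symmetry $\overline{g(\theta)} = g(\theta - \pi i)$. Suppose first that $f_\tin$ is nontrivial. Fixing $m$ and putting $g := f_\tin^{1/2m}$, a nontrivial singular inner function which is zero-free in the strip, the construction of Section~\ref{generic} (spelled out in the composite-case discussion above) produces a chain
\[
 \mfbar\shift \;=\; \big(M_g^*\big)^{2m}M_{\overline{f_\blaschke}}M_{\overline{f_\tout}}\shift \;\subseteq\; \cdots \;\subseteq\; \big(M_g^*\big)^{m}M_{\overline{f_\blaschke}}M_{\overline{f_\tout}}\shift\,(M_g)^{m}
\]
of $m$ symmetric extensions, where only the identity $g(\theta - \pi i) = \overline{g(\theta)}$ and the boundedness of the extra factors $M_{\overline{f_\blaschke}}, M_{\overline{f_\tout}}$ enter, and where each inclusion is proper because $g$ is inner while $M_g^{-1}$ does not preserve $\hardy$ (the pointwise-bound argument of Section~\ref{generic}). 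Since $m$ is arbitrary, $n_\pm(\mfbar\shift) = \infty$, which is the first bullet.

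Now suppose $f_\tin = 1$, so $f = f_\blaschke f_\tout$ with $f_\tout$ satisfying the hypothesis of Proposition~\ref{pr:outer}. By that proposition and the one immediately following it, $M_{\overline{f_\tout}}\shift$ is essentially self-adjoint with $\overline{M_{\overline{f_\tout}}\shift} = M_{\overline{f_-}}\shift M_{f_-}$, where $M_{f_-}$ is the associated unitary ``half outer'' multiplier. Since $M_{f_\blaschke}$ is unitary it passes through the closure and commutes with $M_{\overline{f_-}}$, whence
\[
 \overline{\mfbar\shift} \;=\; M_{\overline{f_\blaschke}}\,\overline{M_{\overline{f_\tout}}\shift} \;=\; M_{\overline{f_\blaschke}}M_{\overline{f_-}}\shift M_{f_-} \;=\; M_{f_-}^{*}\big(M_{\overline{f_\blaschke}}\shift\big)M_{f_-}.
\]
Thus $\overline{\mfbar\shift}$ is unitarily equivalent, via $M_{f_-}$, to the pure Blaschke operator $M_{\overline{f_\blaschke}}\shift$, and has the same deficiency indices. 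If $f_\blaschke$ has $2m$, resp.\ $2m+1$, factors these are $(m,m)$, resp.\ $(m+1,m)$, by Proposition~\ref{pr:bl-fin}; if $f_\blaschke$ has infinitely many factors, Section~\ref{infiniteblaschke} gives $(\infty,\infty)$, either directly through the explicit eigenvector family $\{\xi_k\}$ or through an infinite tower of proper symmetric extensions. This yields the remaining three bullets.

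The step I expect to cost the most care is not conceptual but a matter of domain bookkeeping in the composite chains. One must check that in the $f_\tin \ne 1$ case the non-unitary factor $M_{\overline{f_\tout}}$ does not spoil the domain inclusions $\{\xi : g^k\xi \in \hardy\} \subsetneq \{\xi : g^{k+1}\xi \in \hardy\}$ that make each step proper, and that the infinite-Blaschke case with zeros lying both on and off the line $\RR - \tfrac{\pi i}{2}$ is genuinely covered by recombining the two arguments of Section~\ref{infiniteblaschke} (for instance by pairing up middle-line factors to build symmetric extensions). Once these inclusions are in place, the two elementary facts above upgrade ``there is an $m$-step tower of proper symmetric extensions for every $m$'' to ``$n_\pm = \infty$'' with no further work, and the case analysis closes.
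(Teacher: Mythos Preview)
Your proposal is correct and follows essentially the same route as the paper's own proof in Section~\ref{summary}: first handle a nontrivial $f_\tin$ by the $m$-step root tower, then in the $f_\tin=1$ case use the essential self-adjointness of the outer part to conjugate by the unitary $M_{f_-}$ and reduce to the pure Blaschke operator. Your caveats in the final paragraph (that $M_{\overline{f_\tout}}$ does not disturb the properness of the tower, and that the mixed on/off-middle-line infinite Blaschke case needs a word) are well placed; the paper glosses over both points, and your observation that the domain at each stage is simply $\{\xi: g^k\xi\in\hardy\}$ regardless of the bounded prefactors is exactly what closes the first of them.
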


\subsection{Polar decomposition}
As a byproduct of the analysis above, the polar decomposition of $\mfbar\shift$ can be easily obtained,
again if the outer part of $f$ can be estimated as before.
Let us take the Beurling factorization of $f$:
\[
  f(\zeta) = f_\blaschke(\zeta) f_\tin(\zeta) f_\tout(\zeta),
\]
and assume that $\f_\tout$ satisfies the estimate of Proposition \ref{pr:outer}.
Then we have
\[
\overline{\mfbar\shift} = M_{\overline{f_\blaschke}}M_{\overline{f_\tin}}\cdot \overline{M_{\overline{f_\tout}}\shift},
\]
where $M_{\overline{f_\tout}}\shift$ is essentially self-adjoint by Proposition \ref{pr:outer},
therefore, this is the polar decomposition.

\subsection{von Neumann's criterion}
A von Neumann's criterion \cite[Theorem X.3]{RSII} says that
if a symmetric operator $A$ commutes with an antilinear conjugation $J$,
then $A$ has a self-adjoint extension. Let us see when this can be applied
to $\mfbar\shift$.

Let $(J\xi)(\theta) := \overline{\xi(-\theta)}$ be a conjugation on $L^2(\RR)$.
It is immediate that $J$ preserves $\dom(\shift) = \hardy$.
If we take $f$ such that $\overline{f(\theta)} = f(\theta -\pi i) = f(-\theta)$,
then $\mfbar\shift$ commutes with $J$:
\[
 (J\mfbar\shift\xi)(\theta) = \overline{\overline{f(-\theta)}\xi(-\theta-\pi i)}
 = \overline{f(\theta)}\overline{\xi(-(\theta+\pi i))} = (\mfbar\shift J\xi)(\theta).
\]
As for $f$, a nontrivial singular inner part gives the deficiency indices
$(\infty,\infty)$ and the outer part (with a bound) does not affect the existence
of self-adjoint extensions, as we saw in Theorem \ref{th:all},

Let us take $f(\zeta) = \prod_n \frac{e^\zeta - e^{\a_j}}{e^\zeta - e^{\overline{\a_j}}}$ as a Blaschke product.
From the condition $\overline{f(\theta)} = f(\theta -\pi i) = f(-\theta)$,
it follows that $\a_j$ must appear in a quadruple with $-\pi i-\a_j, -\overline{\a_j}, -\pi i + \overline{\a_j}$.
If $\im \a_j = \frac{\pi}2$, then this is reduced to a pair with $-\overline{\a_j}$.
Furthermore, if $\a_j = -\frac{\pi i}2$, then it does not have a partner but
it must have even multiplicity in order to keep the condition
$\overline{f(\theta)} = f(-\theta)$. Altogether, $f$ has even Blaschke factors and
has equal deficiency indices, in accordance with Proposition \ref{pr:bl-fin}.

\section{Restriction to squares}\label{squares}

As we saw in Section \ref{di}, for a wide class of functions $f$, the operator $\mfbar\shift$ has
nontrivial deficiency indices. The cases where $f$ has zeros, therefore a nontrivial
Blaschke product, is remarkable. If the number of Blaschke factors in $f$ is finite,
we computed explicitly the deficiency indices in Section \ref{finiteblaschke}.
If the singular inner part is trivial, then the deficiency subspaces are finite dimensional.
In particular, if the number of Blaschke factors is odd, then $\mfbar\shift$ does not admits
any self-adjoint extension.
Furthermore, even if $f$ may generically contain a nontrivial singular inner part or the number of Blaschke factors might
be even, the trouble is that there does not seem to be any particular choice of a self-adjoint extension.
Therefore, in order to have a nicer structure, we need to put a constraint on $f$.

Here we consider the case where $f$ is a square of another function: $f(\zeta) = h(\zeta)^2$.
We will see that this choice allows us to find a particular self-adjoint extension
whose spectral calculus can be explicitly performed.
This is not an essential restriction in our application to bound states in quantum field theory
\cite{Tanimoto15-2}. Moreover, this restriction affects only the Blaschke product, because
in the Beurling factorization, the singular inner part and the outer part have no zero,
hence they can be already written as squares.

\subsection{Canonical self-adjoint extensions and operator calculus}\label{canonical}
If $f = h^2$, all the factors in the Beurling factorization (see Section \ref{ideas}) is a square:
\[
 f(\zeta) \;\;=\;\; (h_\blaschke(\zeta))^2\cdot (h_\tin(\zeta))^2\cdot (h_\tout(\zeta))^2 \;\;=\;\; (h_\blaschke(\zeta))^2\cdot (h_\tin(\zeta))^2\cdot f_\tout(\zeta).
\]
Note that $h_\blaschke$ and $h_\tin$ are inner and symmetric, namely,
it holds that
\[
 h_\blaschke(\theta)^{-1} = \overline{h_\blaschke(\theta)} = h_\blaschke(\theta -\pi i),\;\;\;
 h_\tin(\theta)^{-1} = \overline{h_\tin(\theta)} = h_\tin(\theta -\pi i).
\]
Consequently, the multiplication operators $M_{h_\blaschke}, M_{h_\tin}$
are both unitary operators and it holds that $M_{\overline{h_\blaschke}} = M_{h_\blaschke}^*, M_{\overline{h_\tin}} = M_{h_\tin}^*$.

As for $f_\tout$ (we do not need $h_\tout$), let us recall the integral representation, which can be decomposed as follows:
\begin{align*}
 f_\tout(\zeta) &= \exp\left(\frac i\pi\int_{-\infty}^\infty \frac{ds}{1+s^2} \frac{1+e^\zeta s}{e^\zeta - s}\log\phi(s)\right) \\
 &= \exp\left(\frac i\pi\int_{-\infty}^0 \frac{ds}{1+s^2} \frac{1+e^\zeta s}{e^\zeta - s}\log\phi(s)\right)
 \exp\left(\frac i\pi\int_0^\infty \frac{ds}{1+s^2} \frac{1+e^\zeta s}{e^\zeta - s}\log\phi(s)\right) \\
 &=: f_-(\zeta) f_+(\zeta).
\end{align*}
In this decomposition, both $f_+$ and $f_-$ are in $H^\infty(\SS_{-\pi,0})$ as
they are the Poisson integrals of the measures which are the restriction of that for $f$.

Now, $f_-(\zeta)$ is continuous at $\im \zeta = 0$ and $f_-(\theta), \theta \in \RR$, can be
directly computed by the formula above (not only as the boundary value), because
the factor $\frac{1+e^\zeta s}{e^\zeta - s}$ in the integrand is bounded in
the range $s < 0$.
On the other hand, $f_+(\zeta)$ is continuous at $\im \zeta = -\pi$ for the same reason.
Therefore, we have 
\begin{align*}
 f_+(\theta - \pi i) 
 &= \exp\left(\frac i\pi\int_0^\infty \frac{ds}{1+s^2} \frac{1-e^\theta s}{-e^\theta - s}\log\phi(s)\right) \\
 &= \exp\left(-\frac i\pi\int_{-\infty}^0 \frac{ds}{1+s^2} \frac{1 + e^\theta s}{e^\theta - s}\log\phi(s)\right),
\end{align*}
since $\phi(s) = \phi(-s)$. As $\theta \in \RR$, the integrand is real,
hence the exponent is purely imaginary and $f_+(\theta - \pi i)$ has modulus $1$
and it holds that
\[
 f_+(\theta - \pi i) = \overline{f_-(\theta)}.
\]

In general, if $g \in H^\infty(\SS_{-\pi,0})$, not necessarily assuming the symmetry condition $\overline{g(\theta)} = g(\theta - \pi i)$
for $\theta \in \RR$, it holds that
\[
 M_{g(\cdot - \pi i)}\shift \subset \shift M_g
\]
as unbounded operators, namely, for any $\xi \in \hardy$,
$g(\zeta)\xi(\zeta) \in \hardy = \dom(\shift)$
and $g(\theta - \pi i)\xi(\theta - \pi i)$.

Let us consider the case where $f = h^2$.
We have
\[
 \mfbar = M_{\overline{h_\blaschke}}^2M_{\overline{h_\tin}}^2M_{\overline{f_+}}M_{\overline{f_-}}. 
\]
From the observations above, we have the following operator inequality:
\begin{align*}
 \mfbar\shift &\;\;\subset\;\; M_{\overline{h_\blaschke}}M_{\overline{h_\tin}}M_{\overline{f_-}}\shift M_{h_\blaschke}M_{h_\tin}M_{f_-} \\
 &\;\;=\;\; (M_{h_\blaschke}M_{h_\tin}M_{f_-})^*\shift M_{h_\blaschke}M_{h_\tin}M_{f_-},
\end{align*}
where we used the unitarity of $M_{h_\blaschke}, M_{h_\tin}, M_{f_-}$ (separately).
The last expression is manifestly self-adjoint.

As readily seen, this self-adjoint extension of $\mfbar\shift$ has a very particular property.
It is unitarily equivalent to $\shift$, in particular,
its spectrum coincides with that of $\shift$, and for any Borel function $g$
it holds that 
\begin{align*}
 & g((M_{h_\blaschke}M_{h_\tin}M_{f_-})^*\shift M_{h_\blaschke}M_{h_\tin}M_{f_-}) \\
 &= (M_{h_\blaschke}M_{h_\tin}M_{f_-})^*g(\shift) M_{h_\blaschke}M_{h_\tin}M_{f_-},
\end{align*}
in the sense of operator calculus.

Let us summarize this construction.
\begin{theorem}\label{th:ext}
 Let $f = h^2$ and $h\in H^\infty(\SS_{-\pi,0})$.
 Then $\mfbar\shift$ has a self-adjoint extension which is unitarily equivalent to $\shift$,
 where the unitary operator intertwining them is $M_{h_\blaschke}M_{h_\tin}M_{f_-}$ as above.
\end{theorem}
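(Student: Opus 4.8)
The construction proving this is essentially spelled out in the discussion preceding the statement, so the plan is to make its three ingredients precise and then assemble them. First I would run the Beurling factorization on $h$ itself, $h = c\,h_\blaschke h_\tin h_\tout$ with $|c| = 1$, so that by uniqueness of the factorization (a unimodular constant being irrelevant for domains) the Beurling factors of $f = h^2$ are $h_\blaschke^2$, $h_\tin^2$ and $f_\tout = h_\tout^2$. The standing hypothesis $\overline{f(\theta)} = f(\theta - \pi i)$ forces the zero multiset of $h$ and the singular measure of its inner part to be invariant under the reflection $\zeta \mapsto \overline\zeta - \pi i$ (those of $f$ being so and carrying even multiplicity, so that the square root inherits the symmetry), hence $h_\blaschke$ and $h_\tin$ are inner \emph{and symmetric}: $\overline{h_\blaschke(\theta)} = h_\blaschke(\theta - \pi i)$ and $\overline{h_\tin(\theta)} = h_\tin(\theta - \pi i)$. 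In particular $M_{h_\blaschke}, M_{h_\tin}$ are unitary on $L^2(\RR)$ with adjoint equal to multiplication by the conjugate, and, being bounded analytic multipliers, they map $\hardy = \dom(\shift)$ into itself.

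Next I would split the outer part, $f_\tout = f_- f_+$, by restricting the defining Poisson integral to $s < 0$, resp.\ $s > 0$: both $f_\pm$ lie in $H^\infty(\SS_{-\pi,0})$, $f_-$ extends continuously up to the edge $\im \zeta = 0$ and is unimodular there (its integrand is real on $\RR$), and the crucial point is the boundary identity $f_+(\theta - \pi i) = \overline{f_-(\theta)}$ for a.e.\ $\theta$ (equivalently $f_-(\theta - \pi i) = \overline{f_+(\theta)}$). This is exactly where the symmetry of $f$ is used, via $\phi(s) = \phi(-s)$, and it is the step that needs the most care, since $f_-$ and $f_+$ individually lose continuity at opposite edges of the strip and one must track which boundary value is meant in each product. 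Combining this with the previous step gives $(h_\blaschke h_\tin f_-)(\theta - \pi i) = \overline{h_\blaschke(\theta)}\,\overline{h_\tin(\theta)}\,\overline{f_+(\theta)}$, and since $\overline{f(\theta)} = \overline{h_\blaschke(\theta)}^2\,\overline{h_\tin(\theta)}^2\,\overline{f_-(\theta)}\,\overline{f_+(\theta)}$, this produces the multiplication-operator identity $\mfbar = U^* M_{(h_\blaschke h_\tin f_-)(\cdot - \pi i)}$ with $U := M_{h_\blaschke}M_{h_\tin}M_{f_-}$.

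Finally I would invoke the elementary inclusion $M_{g(\cdot - \pi i)}\shift \subset \shift M_g$, valid for every $g \in H^\infty(\SS_{-\pi,0})$ because $g$ preserves $\hardy = \dom(\shift)$, applied with $g = h_\blaschke h_\tin f_- \in H^\infty(\SS_{-\pi,0})$. This yields
\[
 \mfbar\shift \;=\; U^* M_{(h_\blaschke h_\tin f_-)(\cdot - \pi i)}\shift \;\subset\; U^*\,\shift\,M_{h_\blaschke}M_{h_\tin}M_{f_-} \;=\; U^*\shift U,
\]
where $U = M_{h_\blaschke}M_{h_\tin}M_{f_-}$ is unitary on $L^2(\RR)$ (a product of unitaries) and $U^* = M_{\overline{h_\blaschke}}M_{\overline{h_\tin}}M_{\overline{f_-}}$ by the first step. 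Since $\shift$ is self-adjoint with domain $\hardy$, the operator $U^*\shift U$ is self-adjoint with domain $U^{-1}\hardy$; it extends $\mfbar\shift$ by the above, and by construction it is unitarily equivalent to $\shift$ with intertwiner $U$, whence the operator-calculus identity $g(U^*\shift U) = U^* g(\shift) U$ for Borel $g$ follows at once. The only genuinely delicate point in the whole argument is the boundary identity $f_+(\theta - \pi i) = \overline{f_-(\theta)}$; the rest is bookkeeping with commuting bounded multiplication operators and a single application of $M_{g(\cdot - \pi i)}\shift \subset \shift M_g$.
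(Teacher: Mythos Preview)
Your proposal is correct and follows essentially the same route as the paper: factor $h$ via Beurling, note that $h_\blaschke, h_\tin$ are symmetric inner (hence unitary multipliers), split $f_\tout = f_- f_+$ with the boundary identity $f_+(\theta-\pi i) = \overline{f_-(\theta)}$, and then apply the inclusion $M_{g(\cdot-\pi i)}\shift \subset \shift M_g$ with $g = h_\blaschke h_\tin f_-$ to obtain $\mfbar\shift \subset U^*\shift U$ with $U = M_{h_\blaschke}M_{h_\tin}M_{f_-}$ unitary. The only (minor) addition over the paper is that you spell out why $h_\blaschke$ and $h_\tin$ inherit the reflection symmetry from $f$, which the paper merely asserts.
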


The straightforward treatment of $h_\blaschke, h_\tin$ and that of $f_\tout$ which requires a further factorization
into $f_\pm$ may look discrepant.
Yet, as in Theorem \ref{th:all}, if the outer part satisfies the hypothesis of Proposition \ref{pr:outer},
then this discrepancy does not appear: the extension above takes the form
\[
(M_{\overline{h_\blaschke}}M_{\overline{h_\tin}})^*\cdot \overline{M_{f_\tout}\shift}\cdot M_{h_\blaschke}M_{h_\tin},
\]
where one does not see $f_\pm$.
As we remarked in Section \ref{outer}, the hypothesis is not too restrictive.

\begin{example}
 Let $f$ be a positive constant $f(\zeta) = c > 0$.
 The corresponding $f_+$ and $f_-$ can be easily found: namely,
 $f_-(\zeta) = e^{\frac{i\zeta\log c}\pi}, f_+(\zeta) = ce^{-\frac{i\zeta\log c}\pi}$.
 It is immediate to see that $f_+(\theta)f_-(\theta) = f(\theta) = c$ and
 $f_+(\theta -\pi i) = \overline{f_-(\theta)}$.

 Next, let $f$ be the following outer function $f(\zeta) = \exp\left(-\left(\zeta+\frac{\pi i}2\right)^2\right)$.
 This satisfies $\overline{f(\theta)} = f(\theta - \pi i)$ for $\theta \in \RR$, and indeed outer
 by Proposition \ref{lm:outer}. Let us find out $f_+$ and $f_-$.
 We claim that
 \[
  f_-(\zeta) = \exp\left(-\frac{i}{3\pi}\zeta^3 - \frac{\pi i}{12}\zeta\right),
  f_+(\zeta) = \exp\left(\frac{i}{3\pi}(\zeta+\pi i)^3 + \frac{\pi i}{12}(\zeta+\pi i)\right).
 \]
 It is easy to check that they are again outer and $f_+(\theta - \pi i) = \overline{f_-(\theta)}$ for $\theta \in \RR$.
 By a straightforward computation, one sees $f_+(\theta)f_-(\theta) = f(\theta)$. As $|f_-(\theta)| = 1$,
 this is the desired decomposition. Therefore, we have
 \[
  \mfbar\shift \subset M_{f_-}^*\shift M_{f_-}.
 \]
 Actually, the left-hand side is essentially self-adjoint in this case.
\end{example}

\subsection{A characterization of the canonical extension}
Let us reconsider a Blaschke product $\displaystyle f(\zeta) = \frac{e^\zeta - e^{\a}}{e^\zeta - e^{\overline{\a}}}\frac{e^\zeta - e^{\overline{\a-\pi i}}}{e^\zeta - e^{\a-\pi i}}$.
In Section \ref{finiteblaschke}, we saw that the deficiency indices of $\mfbar\shift$ is
$(1,1)$. Vectors in the deficiency subspaces can be explicitly given as follows:
\[
 \xi_\pm(\zeta) = \frac{e^{\left(1\pm\frac12\right)\zeta}}{(e^\zeta - e^{\a})(e^\zeta - e^{\overline{\a-\pi i}})}
\]
Note that if $\im \a \neq -\frac\pi 2$, then $\zeta_\pm$ have two simple poles.
When $\im \a = -\frac\pi 2$, they have a double pole.
By the standard argument, there are self-adjoint extensions parametrized by $S^1$.
And generically there does not seem to be any particular choice.


The situation is different if $\im\a = -\frac\pi 2$. Then the function $f$ has a double zero at $\z = \a$.
It is easy to observe that the combination $\zeta_+ -e^\a\zeta_-$ has only a simple pole
at the double zero of $f$. It turns out that $\mfbar\shift$ can be extended by including this vector
in the domain.

This example can be generalized in the following form: If $f$ is a square $f=h^2$
and has the outer part which is bounded below,
the above idea allows us to rewrite the domain studied in Section \ref{canonical} from a different point of view,
and prove the self-adjointness in a direct way based on the definition.
\begin{proposition}
 Let $f = h^2$, where $h$ is a bounded analytic function on $\strip$, and assume that the boundary values
 of $h$ satisfy $\overline{h(\theta)} = h(\theta - \pi i)$ and are bounded below.
 Then $\mfbar\shift$ has a self-adjoint extension $A_f$ with the domain
\[
\left\{\xi \in \H:
 h\xi \in \hardy
\right\},
\]
with the action $(A_f\xi)(\theta) = f(\theta -\pi i)\xi(\theta -\pi i)\;$ ($=h(\theta -\pi i)^2\xi(\theta -\pi i)$,
which is well defined on the domain above). This extension $A_f$ coincides with
the one given in Theorem \ref{th:ext}
\end{proposition}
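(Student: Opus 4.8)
The plan is to identify the operator $A_f$ described in the statement with the canonical extension $B:=(M_{h_\blaschke}M_{h_\tin}M_{f_-})^*\,\shift\,M_{h_\blaschke}M_{h_\tin}M_{f_-}$ from Theorem \ref{th:ext}. First I would put $A_f$ in operator form: the condition $h\xi\in\hardy$ is exactly $M_h\xi\in\dom(\shift)$, and on that domain $h(\theta-\pi i)^2\xi(\theta-\pi i)=\overline{h(\theta)}\,(\shift M_h\xi)(\theta)$ is a well-defined $L^2$ function, so $A_f=M_{\overline h}\,\shift\,M_h=M_h^*\,\shift\,M_h$ with $\dom(A_f)=\{\xi:M_h\xi\in\hardy\}$. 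Since $h\in H^\infty(\SS_{-\pi,0})$ and $h$ is bounded below, $M_h$ is bounded with bounded inverse, and a routine computation of $\dom((M_h^*\shift M_h)^*)$ (using that $M_h$ is a bijection of $\H$) shows $A_f$ is self-adjoint. The inclusion $\mfbar\shift\subset A_f$ is immediate: applying $M_{g(\,\cdot-\pi i)}\shift\subset\shift M_g$, valid for every $g\in H^\infty(\SS_{-\pi,0})$, to $g=h$ gives $M_{\overline h}\shift\subset\shift M_h$, whence $\mfbar\shift=M_{\overline h}(M_{\overline h}\shift)\subset M_{\overline h}\shift M_h=A_f$.

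It remains to prove $A_f=B$. Write $h=h_\blaschke h_\tin h_\tout$, and recall from Section \ref{canonical} that $M_{h_\blaschke},M_{h_\tin},M_{f_-}$ are unitary, that $\overline{h_\blaschke(\theta)}=h_\blaschke(\theta-\pi i)$, $\overline{h_\tin(\theta)}=h_\tin(\theta-\pi i)$, $\overline{f_-(\theta)}=f_+(\theta-\pi i)$ on $\RR$, and that $f=h_\blaschke^2h_\tin^2f_\tout=h_\blaschke^2h_\tin^2f_-f_+$. Put $U:=M_{h_\blaschke}M_{h_\tin}M_{f_-}$, so $B=U^*\shift U$ and $B$ is unitarily equivalent to $\shift$. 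On $\RR$ one has $h=\psi\cdot h_\blaschke h_\tin f_-$ with $\psi:=h_\tout/f_-$, hence $M_h\xi=M_\psi\,U\xi$ as elements of $\H$; so if $\psi$ and $1/\psi$ both lie in $H^\infty(\SS_{-\pi,0})$, then $M_\psi$ maps $\hardy$ bijectively onto itself with inverse $M_{1/\psi}$, and therefore $\dom(A_f)=\{\xi:M_h\xi\in\hardy\}=\{\xi:U\xi\in\hardy\}=\dom(B)$. On this common domain the actions agree: $(U^*\shift U\xi)(\theta)=\overline{(h_\blaschke h_\tin f_-)(\theta)}\,(h_\blaschke h_\tin f_-)(\theta-\pi i)\,\xi(\theta-\pi i)$, which by the boundary identities above equals $h_\blaschke(\theta-\pi i)^2h_\tin(\theta-\pi i)^2f_+(\theta-\pi i)f_-(\theta-\pi i)\xi(\theta-\pi i)=f(\theta-\pi i)\xi(\theta-\pi i)=(A_f\xi)(\theta)$. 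Thus $A_f=B$, so $A_f$ is precisely the self-adjoint extension of Theorem \ref{th:ext}.

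The step I expect to be the main obstacle is verifying $\psi=h_\tout/f_-\in H^\infty(\SS_{-\pi,0})$ and $1/\psi\in H^\infty(\SS_{-\pi,0})$; this is where the hypotheses on $h$ enter. Since $h$ is bounded and bounded below, the outer kernel $\phi(s)=|h(\log s)|^2$ satisfies $0<c\le\phi(s)\le C$, so $|\log\phi|$ is bounded; as $h_\tout$ is outer, $\log|h_\tout(\zeta)|$ is the Poisson integral of $\tfrac12\log\phi$ and is hence bounded above and below on $\SS_{-\pi,0}$, so $h_\tout,1/h_\tout\in H^\infty(\SS_{-\pi,0})$. For $f_-$, with $e^\zeta=a+ib$, $b<0$, one computes $\re\log f_-(\zeta)=\tfrac1\pi\int_{-\infty}^0\tfrac{b}{(a-s)^2+b^2}\log\phi(s)\,\dd s$, and since $\int_{-\infty}^0\tfrac{-b}{(a-s)^2+b^2}\dd s\le\pi$ and $|\log\phi|$ is bounded, $|\re\log f_-|$ is uniformly bounded on $\SS_{-\pi,0}$, so $f_-,1/f_-\in H^\infty(\SS_{-\pi,0})$ (that $f_\pm\in H^\infty(\SS_{-\pi,0})$ was already noted in Section \ref{canonical}). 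Hence $\psi,1/\psi\in H^\infty(\SS_{-\pi,0})$, and the rest is the bookkeeping with the boundary identities described above.
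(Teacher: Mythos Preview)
Your proof is correct and follows essentially the same route as the paper. The paper proves self-adjointness by directly computing $\dom(A_f^*)$ via the integral pairing and using boundedness below of $h$ to show that every $\check\xi\in\hardy$ arises as $h\xi$; your formulation $A_f=M_h^*\shift M_h$ with $M_h$ bounded and boundedly invertible is the same argument in operator form. For the identification with the extension of Theorem~\ref{th:ext}, the paper simply says that the outer factor, being bounded above and below on the strip, does not affect the domain, so $h\xi\in\hardy$ reduces to $h_\blaschke h_\tin\xi\in\hardy$; your treatment via $\psi=h_\tout/f_-$ is the same observation, and in fact handles the factor $f_-$ more explicitly than the paper does (your Poisson-integral estimate for $\re\log f_-$ fills in a step the paper leaves implicit). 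You also verify equality of the actions directly, whereas the paper contents itself with matching domains; this is a harmless addition.
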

\begin{proof}
 First we need to check that the operator $A_f$ defined above is symmetric.
 This is immediate because if $\xi, \eta$ belong to that domain, then
 \begin{align*}
  \<\eta, \mfbar\shift\xi\> &= \int \overline{\eta(\theta)}f(\theta -\pi i)\xi(\theta -\pi i)d\theta \\
  &= \int \overline{\eta(\theta)h(\theta)} h(\theta- \pi i)\xi(\theta -\pi i)d\theta \\
  &= \int \overline{\eta(\theta-\pi i)h(\theta-\pi i)} h(\theta)\xi(\theta)d\theta \\
  &=  \<\mfbar\shift\eta, \xi\>,
 \end{align*}
 where we can use the Cauchy theorem since $h\xi, h\eta \in \hardy$ (see Proposition \ref{pr:symmetric}).

 Now let $\eta \in \dom(A_f^*)$.
 This means that for any $\xi$ such that $h\xi \in \hardy$,
 we have $|\<\eta, A_f\xi\>| \le C\|\xi\|$  where $C$ is a constant independent of $\xi$.
 In other words,
 \begin{align*}
  C\|\xi\| \ge  &\left|\int \overline{\eta(\theta)}h(\theta-\pi i)^2\xi(\th - \pi i)d\theta\right| \\
  =&\left|\int \overline{\eta(\theta)h(\theta)}h(\theta-\pi i)\xi(\th - \pi i)d\theta\right| \\
  =&\left|\<h\eta, \shift h\xi\>\right|. 
 \end{align*}
 Furthermore, by assumption $h$ is bounded below on the boundary, hence there is $C_h$ such that
 $\|\xi\| \le C_h\|h\xi\|$, and therefore, $\left|\<h\eta, \shift h\xi\>\right| \le CC_h\|h\xi\|$.

 Note that, for any $\check{\xi} \in \hardy = \dom(\shift)$, there is $\xi \in \dom(A_f)$ such
 that $\check{\xi} = h\xi$. Indeed, the boundary value $\frac{\check{\xi}(\theta)}{h(\theta)}$
 is $L^2$ since $h$ is bounded below and $\check{\xi}$ is $L^2$,
 and $\frac{\check{\xi}}h \cdot h = \check{\xi} \in \hardy$, hence
 $\frac{\check{\xi}}h \in \dom(A_f)$ by definition.
 It follows from this that $h\eta \in \dom((\shift)^*) = \dom(\shift) = \hardy$,
 which implies by definition that $\eta \in \dom(A_f)$. This concludes the proof of self-adjointness
 of $A_f$.
 
 To see that this coincides with the extension in Theorem \ref{th:ext},
 note that the assumption that $h$ is bounded below on the boundary implies that the outer factor
 is bounded below on the whole strip and makes no effect on the domain.
 Now, the condition that $h\xi\in\hardy$ is equivalent to $h_\blaschke h_\tin\xi \in\hardy$,
 which is exactly the domain obtained in Theorem \ref{th:ext}.
\end{proof}

The above simple and seemingly natural description of the domain fails if $f$ is not bounded below.
Indeed, if $h$ is an outer function with a simple zero on the boundary,
the closure of $\mfbar\shift = M_{\overline h}^2\shift$
contains functions $\xi$ which have double pole at the boundary.
For such $\xi$, $h\xi \notin \hardy$, therefore, the domain above cannot be the domain of self-adjointness.

\section{Concluding remarks}
For the computation of deficiency indices, the case where the outer part does not satisfy
the assumption of Proposition \ref{pr:outer} remains open.
Yet, we studied this problem with a motivation which arose in Quantum Field Theory (QFT),
and the class of functions which we considered in Section \ref{canonical} seems to suffice
in the operator-algebraic treatment of certain integrable QFT \cite{Tanimoto15-2}.

The operator $\mfbar\shift$ appeared in \cite{CT15-1} as the one-particle component
of the bound state operator. The whole operator is a complicated symmetrization of such a one-particle component
which requires different techniques and we will investigate it in a separate paper \cite{Tanimoto15-2}.


\subsubsection*{Acknowledgements}
I wish to thank Marcel Bischoff, Gian Michele Graf and Berke Topacogullari for interesting discussions.

This work was supported by Grant-in-Aid for JSPS fellows 25-205.

\appendix
\section{Hardy space and Fourier transforms}\label{fourier}
The following is a well-known observation (c.f.\! \cite[Theorem IX.13]{RSII} \cite[Corollary III.2.10]{SW71}),
yet we state it here explicitly, since the sketch of proof in \cite[Problem IX.76(c)]{RSII}
contains a subtle argument, while \cite{SW71} is written for analytic functions of several variables.
Let $M_{e_{\pi}}$ be the multiplication operator on $L^2(\RR, dt)$ by $e_{\pi}(t) = e^{\pi t}$,
which is unbounded. If we take its domain as
\[
 \{\hat g \in L^2(\RR,dt): e^{\pi t}\hat g(t) \mbox{ is again } L^2\},
\]
then it is self-adjoint \cite[Theorem VIII.4]{RSI}.
Consider the dense subspace of compactly supported smooth functions, which is a core of $M_{e_{\pi}}$.
Let $g$ be a function whose Fourier transform $\hat g$ is in that core:
\[
 \hat g(\theta) = \frac1{\sqrt{2\pi}}\int dt\, e^{-it\theta} g(t).
\]
This Fourier transform is defined pointwise, and in this case
$g$ is the inverse Fourier transform of $\hat g$ and is
an entire function of $\theta$ by Morera's theorem \cite[Theorem 10.17]{Rudin87}.
The analytic shift of $g$ by $-i\pi$ is given by
\begin{align*}
 (\shift g)(\theta) = g(\theta-\pi i) &= \frac1{\sqrt{2\pi}}\int dt\, e^{i(t-\pi i)\theta}\hat g(t) \\
 &= \frac1{\sqrt{2\pi}}\int dt\, e^{it\theta}(M_{e_{\pi}}g)(t) 
\end{align*}
or in other words $\widehat{\shift g} = M_{e_\pi}g$.
Therefore, $\shift$ is essentially self-adjoint on the space of functions which are the inverse Fourier transform
of compactly supported smooth functions and unitarily equivalent to $M_{e_{\pi}}$.

For each $g$ such that $\hat g \in \dom(M_{e_{\pi}})$, $g \in \hardy$, because for $\zeta \in \RR+i(-\pi,0)$
the integral
\[
 g(\zeta) = \int dt\, e^{it\zeta}\hat g(t) =  \int_0^{\infty} dt\, e^{-t\epsilon}\cdot e^{it(\zeta - i\epsilon)}\hat g(t) + \int_{-\infty}^0 dt\, e^{it\zeta}\hat g(t)
\]
is $L^1$, where $\epsilon > 0$ is such that $\zeta - i\epsilon \in \RR +i(-\pi,0)$,
which shows the analyticity of $g$ by Morera's theorem, and the uniform $L^2$-bound
follows from $\|M_{e_{i\zeta}}\hat g\|_{L^2} \le \|\hat g\|_{L^2} + \|M_{e_{\pi}}\hat g\|_{L^2}$
and the Plancherel theorem, where $M_{e_{i\zeta}}$ is the multiplication operator by
the function $e^{it\zeta}$.

Now we give an elementary proof of the converse inclusion without using the results of \cite[Chapter III]{SW71}
cited in Section \ref{hardy}.
\begin{proposition}\label{pr:integralrep}
 For each $\xi \in \hardy$, $\hat \xi \in \dom(M_{e_{\pi}})$.
\end{proposition}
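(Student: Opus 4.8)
The plan is to recover, from the horizontal slices of $\xi$, a function $h$ on $\RR$ that will turn out to be $\hat\xi$ and for which $e^{\pi t}h(t)\in L^2(\RR)$. For $\lambda\in(-\pi,0)$ write $\xi_\lambda(\theta):=\xi(\theta+i\lambda)$; by the definition of $\hardy$ we have $\xi_\lambda\in L^2(\RR)$ and $\|\xi_\lambda\|\le C$ for a constant $C$ independent of $\lambda$. Let $\hat\xi_\lambda$ denote the Fourier transform of $\xi_\lambda$.

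Step 1, independence of the slice. I would first show that $t\mapsto e^{\lambda t}\hat\xi_\lambda(t)$ does not depend on $\lambda\in(-\pi,0)$, and call the common function $h$. Fix $-\pi<\mu<\lambda<0$ and let $\eta\in L^2(\RR)$ be any function whose Fourier transform $\hat\eta$ lies in $C_c^\infty(\RR)$; then $\eta$ extends to an entire function that is rapidly decreasing on every horizontal line, with $\widehat{\eta_{\lambda'}}(t)=e^{-\lambda' t}\hat\eta(t)$. Apply Cauchy's theorem to the analytic function $z\mapsto\xi(z)\eta(z)$ on the rectangle $[-R,R]+i[\mu,\lambda]$. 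As $R\to\infty$, the two horizontal edges converge to $\int_\RR\xi_\lambda\eta_\lambda$ and $\int_\RR\xi_\mu\eta_\mu$ (the integrands are in $L^1(\RR)$ by Cauchy--Schwarz), while the vertical edges vanish along a suitable sequence $R_n\to\infty$: since $\int_\mu^\lambda\int_\RR|\xi_{\lambda'}\eta_{\lambda'}|\,d\theta\,d\lambda'\le C\int_\mu^\lambda\|\eta_{\lambda'}\|\,d\lambda'<\infty$, Fubini's theorem shows that $\theta\mapsto\int_\mu^\lambda|\xi_{\lambda'}(\theta)\eta_{\lambda'}(\theta)|\,d\lambda'$ is integrable and hence tends to $0$ along a sequence. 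Therefore $\int_\RR\xi_\lambda\eta_\lambda=\int_\RR\xi_\mu\eta_\mu$, which by Plancherel's theorem amounts to $\int_\RR e^{\lambda t}\hat\xi_\lambda(t)\,\hat\eta(-t)\,dt=\int_\RR e^{\mu t}\hat\xi_\mu(t)\,\hat\eta(-t)\,dt$. As $\hat\eta$ runs over all of $C_c^\infty(\RR)$, this forces $e^{\lambda t}\hat\xi_\lambda(t)=e^{\mu t}\hat\xi_\mu(t)=:h(t)$ for almost every $t$, so $\hat\xi_\lambda(t)=e^{-\lambda t}h(t)$.

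Step 2, integrability at the two edges, and identification. The uniform bound now reads $\int_\RR e^{-2\lambda t}|h(t)|^2\,dt=\|\xi_\lambda\|^2\le C^2$ for every $\lambda\in(-\pi,0)$. Letting $\lambda\uparrow 0$ and using monotone convergence on $\{t\le0\}$ gives $\int_{-\infty}^0|h|^2\le C^2$, and letting $\lambda\downarrow-\pi$ and using monotone convergence on $\{t\ge0\}$ gives $\int_0^\infty e^{2\pi t}|h|^2\le C^2$; hence $h\in L^2(\RR)$ and $e^{\pi t}h(t)\in L^2(\RR)$. Using these same two bounds, $|e^{-\lambda t}-1|^2|h(t)|^2$ is dominated by $|h(t)|^2$ on $\{t\le0\}$ and by $e^{2\pi t}|h(t)|^2$ on $\{t\ge0\}$, both integrable, so dominated convergence gives $\hat\xi_\lambda\to h$ in $L^2(\RR)$ as $\lambda\uparrow0$; equivalently, $\xi_\lambda$ converges in $L^2(\RR)$ to the function with Fourier transform $h$. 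This $L^2$-limit is, by definition, the boundary value $\xi(\theta):=\lim_{\lambda\uparrow0}\xi_\lambda$, so $\hat\xi=h$, and then $e^{\pi t}\hat\xi(t)\in L^2(\RR)$, i.e. $\hat\xi\in\dom(M_{e_{\pi}})$.

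The main obstacle is Step 1: the slices $\xi_\lambda$ are controlled only in $L^2(\RR)$, with no pointwise bounds available, so the contour shift for $\xi\eta$ must be justified with care, and the delicate point is the vanishing of the vertical sides of the rectangle. Everything after that — the two monotone-convergence arguments and the dominated-convergence argument, combined with Plancherel — is routine.
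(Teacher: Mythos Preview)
Your proof is correct and follows essentially the same approach as the paper: testing against functions with compactly supported Fourier transform, applying Cauchy's theorem on rectangles, and disposing of the vertical sides along a sequence $R_n$ extracted from the $L^1$-integrability of $(\theta,\lambda')\mapsto|\xi_{\lambda'}(\theta)\eta_{\lambda'}(\theta)|$. The only difference is cosmetic: the paper routes the conclusion through the midpoint slice $\xi(\cdot-\tfrac{\pi i}{2})$ and a duality bound to obtain $\widehat{\xi_{-\pi i/2}}\in\dom(M_{e_{\pm\pi/2}})$, whereas you identify $h=e^{\lambda t}\hat\xi_\lambda$ directly and finish with monotone and dominated convergence---slightly more explicit, and it yields the existence of the $L^2$-boundary value as a byproduct.
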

\begin{proof}
 Let $g$ be such that $\hat g$ has compact support and is smooth. Then the inverse Fourier transform $g$ is entire analytic,
 $g(t) = \frac1{\sqrt{2\pi}}\int d\theta\, e^{it\theta}\hat g(\theta)$
 and $\bar{g}(\zeta) := \overline{g(\overline \zeta)}$ is entire as well. 

 We claim that
 \[
  \int d\theta\, \xi(\theta + \l i) \overline{g(\theta - \l i)}
 \]
 does not depend on $\l \in (-\pi,0)$.
 Note that $\xi(\zeta)\bar{g}(\zeta)$ is analytic. 
 For $\l_1, \l_2 \in (-\pi,0), \l_1 < \l_2$, let us consider the rectangle surrounded
 by $\im \zeta = \l_1, \im \zeta = \l_2, \re \zeta = R, \re \zeta = -R$
 (see Figure \ref{fig:hardy-domain}).
 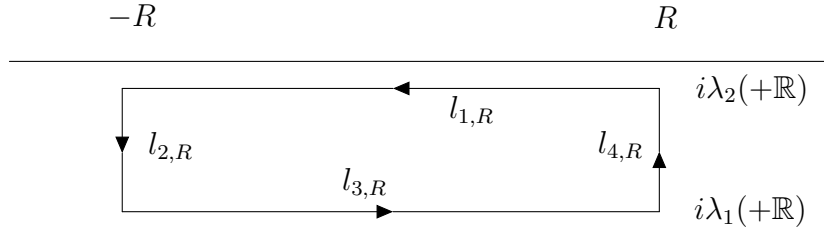
\begin{figure}[ht]
    \centering
\begin{tikzpicture}[line cap=round,line join=round,>=triangle 45,x=1.0cm,y=1.0cm]
\clip(-5.08,-0.7) rectangle (6.52,3);
\draw [domain=-4.3:7.42] plot(\x,{(-1.81-0*\x)/8.24});
\draw [domain=-4.3:7.42] plot(\x,{(--17.84-0*\x)/8.26});
\draw [->] (4.34,1.8) -- (0.8,1.8);
\draw [->] (-2.8,1.8) -- (-2.8,0.94);
\draw [->] (-2.8,0.16) -- (0.8,0.16);
\draw (-2.8,0.94)-- (-2.8,0.16);
\draw (0.8,0.16)-- (4.34,0.16);
\draw [->] (4.34,0.16) -- (4.34,0.96);
\draw (4.34,0.96)-- (4.34,1.8);
\draw (-2.8,1.8)-- (0.8,1.8);
\draw (-3.14,3.06) node[anchor=north west] {$-R$};
\draw (4.12,3.04) node[anchor=north west] {$R$};
\draw (4.68,2.14) node[anchor=north west] {$i\l_2(+\mathbb{R})$};
\draw (4.66,0.52) node[anchor=north west] {$i\l_1(+\mathbb{R})$};
\draw (1.4,1.84) node[anchor=north west] {$l_{1,R}$};
\draw (-2.6,1.36) node[anchor=north west] {$l_{2,R}$};
\draw (-0.02,0.88) node[anchor=north west] {$l_{3,R}$};
\draw (3.38,1.38) node[anchor=north west] {$l_{4,R}$};
\end{tikzpicture}
    \caption{The integral contour for the Cauchy theorem}
    \label{fig:hardy-domain}
\end{figure}
 By the Cauchy theorem,
 the integral of $\xi \hat g$ over this contour is $0$.
 By assumption $\xi(\cdot + i\l)$ is $L^2$ at each fixed $\l \in (-\pi,0)$,
 thus the product $\xi(\cdot + i\l)\overline{g(\cdot - i\l)}$ is $L^1$ at each fixed $\l \in (-\pi,0)$.
 Therefore, as $R$ tends to $\infty$, the integral
 \[
  \int_{-R}^R d\theta\, \xi(\theta + \l i) \overline{g(\theta - \l i)}
 \]
 tends to $\int_{-\infty}^\infty dt\, \xi(\theta + \l i) \overline{g(\theta - \l i)}$.
 What remains to show is the integral on the vertical lines $l_{2,R}, l_{4,R}$ tend to $0$.
 Actually, it is enough to prove that there is a growing sequence $\{R_n\}$
 for which these integrals on tend to $0$, since
 \begin{align*}
  0 &= \lim_{R_n\to \infty} \int_{l_{1,R_n}\cup l_{2,R_n}\cup l_{3,R_n}\cup l_{4,R_n}}d\zeta\, \xi(\zeta)\bar{g}(\zeta) \\
  &= \int_{-\infty}^\infty d\theta\, \left(\xi(\theta + \l_1 i) \bar{g}(\theta + \l_1 i)
  - \xi(\theta + \l_2 i) \bar{g}(\theta + \l_2 i)\right)
   + \lim_{R_n\to \infty} \int_{l_{2,R_n}\cup l_{4,R_n}}d\zeta\, \xi(\zeta)\bar{g}(\zeta).
 \end{align*}
 Now, in order to find such a sequence, note that $\xi(\theta + i\l)\bar{g}(\theta + i\l)$ is an $L^1$ as a function
 with two variables $\theta$ and $\l$, as the $L^2$-norm of $\xi$ is uniformly bounded
 by assumption. This implies that the two-variables integral
 \[
  \int_n^{n+1} d\theta \int_{\l_2}^{\l_1} d\l\, \left|\xi(\theta + i\l)\bar{g}(\theta + i\l) \right|
 \]
 tends to $0$ as $n\to \infty$. Let us say that the integral above is less than $\e_n$,
 where $\e_n \to 0$.
 Then, there must be a set in $[n,n+1)$ with Lebesgue measure larger than $\frac12$ such that
 for $\theta$ in this set we have the one-variable integral
 \[
  \int_{\l_2}^{\l_1} d\l\, \left|\xi(\theta + i\l)\overline{g(\theta - i\l)}\right| < 2\e_n.
 \]
 Similarly, there must be another set in $[n,n+1)$ with measure larger than $\frac12$
 for which it holds
 \[
  \int_{\l_2}^{\l_1} d\l\, \left|\xi(-\theta + i\l)\overline{g(-\theta - i\l)}\right| < 2\e_n.
 \]
 As the interval $[n,n+1)$ has measure $1$, there must be a nontrivial intersection
 of these two sets. We can take $R_n$ from this intersection.
 
 As $\xi \in \hardy$, there is a constant $C_\xi$ such that
 $\|\xi(\cdot + i\l)\|_{L^2} \le C_\xi$ for $\lambda \in (-\pi, 0)$.
 Let us consider $\xi_{-\frac{\pi i}2}(t) := \xi\left(\theta - \frac{\pi i}2\right)$.
 Now, if $\hat g$ has a compact support, then $M_{e_{\l'}}\hat g$ has again a compact support for any $\l' \in \RR$.
 For $\l' \in (-\frac\pi 2,\frac\pi 2)$ , we have
 \begin{align*}
  \< M_{e_{\l'}}\hat g, \widehat{\xi_{-\frac{\pi i} 2}}\> &= \int dt\, \overline{e^{\l' t}\hat g(t)}\widehat{\xi_{-\frac{\pi i} 2}}(t) \\
  &= \int dt\, \overline{g\left(\theta - i\l'\right)}\xi\left(\theta - \frac{\pi i}2\right) \\
  &= \int dt\, \overline{g(\theta)}\xi\left(\theta - i\l' - \frac{\pi i}2\right),
 \end{align*}
 which implies that $|\< M_{e_{\l'}}\hat g, \widehat{\xi_{-\frac{\pi i} 2}}\>| \le \|g\|\cdot \|\xi(\cdot + i\l' - \frac{\pi i}2)\|_{L^2} \le C_\xi\|\hat g\|$.
 Such $g$'s form a core of $M_{e_\l'}$, hence we obtain that $\widehat{\xi_{-\frac{\pi i} 2}}$ is in the domain of $M_{e_\l'}$ and
 $\|M_{e_\l'}\xi_{-\frac{\pi i}2}\| \le C_\xi$. By this uniform bound for $\l' \in (-\frac\pi 2,\frac\pi 2)$,
 it follows that $\widehat{\xi_{-\frac{\pi i} 2}} \in \dom(M_{e_{-\frac\pi 2}})\cap \dom(M_{e_{\frac\pi 2}})$.
 
 It is now clear that we have $\hat \xi = M_{e_{-\frac \pi 2}}\widehat{\xi_{-\frac{\pi i} 2}}$, which belongs to $\dom(M_{e_{\pi}})$.
\end{proof}
Summarizing, $\hardy$ is the inverse Fourier transform of $\dom(M_{e_{\pi}})$.

\begin{corollary}\label{co:pointwise}
 For each $\xi \in \hardy$ and $\l \in (-\pi, 0)$, it holds that
 $\xi(\theta + i\lambda) \le \frac1{\sqrt{-4\pi\lambda}}\|\xi\| + \frac1{\sqrt{4\pi(\pi+\lambda)}}\|\shift\xi\|$.
\end{corollary}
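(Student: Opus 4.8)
The plan is to read the bound directly off the pointwise integral representation recalled in Section~\ref{hardy} (and underpinned by Proposition~\ref{pr:integralrep}). For $\xi \in \hardy$ and $\l \in (-\pi,0)$ one has the absolutely ($L^1$-)convergent formula
\[
 \xi(\theta + i\l) = \frac1{\sqrt{2\pi}}\int dt\, e^{i(\theta + i\l)t}\hat\xi(t),
\]
and hence
\[
 |\xi(\theta + i\l)| \le \frac1{\sqrt{2\pi}}\int dt\, e^{-\l t}|\hat\xi(t)|.
\]
I would split this integral at $t = 0$ and estimate each half-line by Cauchy--Schwarz, using on $t<0$ that $\hat\xi \in L^2(\RR)$ with $\|\hat\xi\| = \|\xi\|$, and on $t>0$ that $e^{\pi t}\hat\xi(t) \in L^2(\RR)$ with $\|e^{\pi\,\cdot}\hat\xi\| = \|\shift\xi\|$ (Plancherel, together with $\widehat{\shift\xi} = e^{\pi\,\cdot}\hat\xi$ from Section~\ref{hardy}).

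Concretely, on $(-\infty,0)$ the weight $t\mapsto e^{-\l t}$ is square-integrable because $-\l>0$, with $\int_{-\infty}^0 e^{-2\l t}\,dt = \frac1{-2\l}$, so Cauchy--Schwarz gives $\int_{-\infty}^0 e^{-\l t}|\hat\xi(t)|\,dt \le \frac1{\sqrt{-2\l}}\,\|\xi\|$. On $(0,\infty)$ I would factor $e^{-\l t} = e^{-(\l+\pi)t}e^{\pi t}$; since $\l+\pi>0$ the factor $t\mapsto e^{-(\l+\pi)t}$ is square-integrable on $(0,\infty)$ with $\int_0^\infty e^{-2(\l+\pi)t}\,dt = \frac1{2(\l+\pi)}$, and pairing it by Cauchy--Schwarz with $e^{\pi t}|\hat\xi(t)|$ yields $\int_0^\infty e^{-\l t}|\hat\xi(t)|\,dt \le \frac1{\sqrt{2(\l+\pi)}}\,\|\shift\xi\|$. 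Adding the two contributions and dividing by $\sqrt{2\pi}$ produces exactly $\frac1{\sqrt{-4\pi\l}}\|\xi\| + \frac1{\sqrt{4\pi(\pi+\l)}}\|\shift\xi\|$, since $\frac1{\sqrt{2\pi}\sqrt{-2\l}} = \frac1{\sqrt{-4\pi\l}}$ and $\frac1{\sqrt{2\pi}\sqrt{2(\l+\pi)}} = \frac1{\sqrt{4\pi(\pi+\l)}}$.

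There is no genuine difficulty here; the argument is a routine split-and-Cauchy--Schwarz estimate. The only point requiring care is the legitimacy of manipulating these integrals, which is ensured by the $L^1$-convergence of the representation and by the memberships $\hat\xi, e^{\pi\,\cdot}\hat\xi \in L^2(\RR)$ recorded above.
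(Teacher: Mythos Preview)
Your proof is correct and essentially identical to the paper's: both invoke the pointwise integral representation from Proposition~\ref{pr:integralrep}, split the $t$-integral at $0$, and apply Cauchy--Schwarz on each half-line after inserting the factor $e^{\pi t}$ on $(0,\infty)$. The only cosmetic difference is that the paper writes the bound in terms of $\|\hat\xi\|$ and $\|M_{e_\pi}\hat\xi\|$ before identifying these with $\|\xi\|$ and $\|\shift\xi\|$, whereas you make that identification up front via Plancherel.
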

\begin{proof}
 By Proposition \ref{pr:integralrep}, we have the pointwise expression
 \[
  \xi(\theta + i\lambda) = \frac1{\sqrt{2\pi}}\int dt\, e^{it(\theta + i\lambda)}\hat \xi(t),
 \]
 where $\hat \xi$ is an $L^2$-function and in the domain $\dom(M_{e_{\pi}})$.
 In particular, $\hat \xi(t)$ and $\hat \xi(t)e^{\pi t}$ are $L^2$.
 Now, we have
 \begin{align*}
  \xi(\theta + i\lambda)
  &\le \frac1{\sqrt{2\pi}}\int_{-\infty}^\infty dt\, e^{-t\lambda}\left|\hat \xi(t)\right| \\
  &\le \frac1{\sqrt{2\pi}}\left(\int_{-\infty}^0 dt\, e^{-t\lambda}\left|\hat \xi(t)\right|
  + \int_0^{\infty} dt\, e^{-t(\pi+\lambda)}\left|e^{\pi t}\hat \xi(t)\right|\right) \\
  &\le \frac1{\sqrt{2\pi}}\left(\left(\int_{-\infty}^0 dt\, e^{-2t\lambda}\right)^\frac12\left\|\hat \xi\right\|
  + \left(\int_0^{\infty} dt\, e^{-2t(\pi+\lambda)}\right)^\frac12\left\|M_{e_{\pi}}\hat \xi\right\|\right) \\
  &= \frac1{\sqrt{2\pi}}\left(\frac1{\sqrt{-2\lambda}}\left\|\hat \xi\right\|
  + \frac1{\sqrt{2(\pi+\lambda)}}\left\|M_{e_{\pi}}\hat \xi\right\|\right),
 \end{align*}
 which is the desired estimate.
\end{proof}

\section{Essential singularities in the singular part}\label{essential}
Let $\frac{\mu(s)}{1+s^2}$ be a finite singular measure on $\RR$.
We prove that the singular inner function
\[
  f(\zeta) = \exp\left(i\int \frac{d\mu(s)}{1+s^2}\, \frac{1+e^\zeta s}{e^\zeta - s}\right)
\]
has an essential singularity at each point of the boundary $\zeta \in \RR, \RR - \pi i$
where it holds that $\frac{\mu((a - \epsilon, a + \epsilon))}{2\epsilon} \to \infty$
as $\epsilon \to 0$, where $a = e^\zeta \in \RR$.
The set of such points is not empty if $\mu$ is nontrivial,
indeed it is the case for $a$ almost everywhere with respect to $\mu$ \cite[Theorem 7.15]{Rudin87}.

Let us fix such a $\zeta_0$ on the boundary.  
It is known that $f(\zeta)$ has a radial limit almost everywhere (with respect to the Lebesgue measure $ds$)
\cite[Theorem 11.32]{Rudin87}, whose modulus is $1$ \cite[Theorem 17.15]{Rudin87}.
Hence it is easy to see that there is a sequence $\zeta_j \to \zeta_0$ such that
$|f(\zeta_j)| \to 1$.

We show that the radial limit towards $\zeta_0$ is $0$.
The modulus of the function $f$ is determined by the imaginary part
of the integral.
By writing $e^\zeta = a + ib, a,b \in \RR$, we have
\[
 \im\left(\frac{1+e^\zeta s}{e^\zeta - s}\right) = -\frac{b(1+s^2)}{(a-s)^2 + b^2}.
\]
Note that as $\zeta \in \strip$, $b < 0$.
Now we have
\begin{align*}
\int d\mu(s)\frac{-b}{(a-s)^2 + b^2}
 & > \int_{a+b}^{a-b} d\mu(s)\, \frac{-b}{(a-s)^2 + b^2} \\
 & > \int_{a+b}^{a-b} d\mu(s)\, \frac{1}{2|b|} \\
 & = \frac{\mu((a+b,a-b))}{2|b|}
\end{align*}
and the right-hand side tends to $\infty$ as $b \to 0$ if $a$ is in the set explained above.
The modulus of $f$ is obtained for $e^\zeta = a+ib$ by
\[
 |f(\zeta)| = \exp\left(-\int d\mu(s)\, \frac{-b}{(a-s)^2 + b^2}\right),
\]
which tends to $0$ as $b \to 0$ if $a$ is in the set explained above.

Therefore, if $e^{\zeta_0} = a$, $f$ does not have a unique limit towards $\zeta_0$,
so it must be an essential singularity of $f$.

{\small
\def\cprime{$'$}

}

\end{document}